\documentclass[journal]{IEEEtran}
\usepackage{amsmath,amssymb,amsthm}
\usepackage{algorithm}
\usepackage{algorithmic}
\usepackage{graphicx}
\usepackage{cite}
\usepackage{hyperref}
\usepackage{bm}
\usepackage{bbm}
\usepackage{enumitem}
\usepackage{booktabs}

\newtheorem{theorem}{Theorem}
\newtheorem{lemma}{Lemma}
\newtheorem{proposition}{Proposition}
\newtheorem{corollary}{Corollary}
\newtheorem{definition}{Definition}
\newtheorem{remark}{Remark}
\newtheorem{assumption}{Assumption}

\begin{document}

\title{Heterogeneous Mean Field Game Framework for LEO Satellite-Assisted 
V2X Networks}

\author{Kangkang Sun,
        Jianhua Li,~\IEEEmembership{Senior Member,~IEEE,}
        Xiuzhen Chen,~\IEEEmembership{Member,~IEEE,}\\
        Mingzhe Chen,~\IEEEmembership{Senior Member,~IEEE,}
        Minyi Guo,~\IEEEmembership{Fellow,~IEEE}
  \thanks{Kangkang Sun, Jianhua Li, Xiuzhen Chen and Minyi Guo are
          with the Shanghai Key Laboratory of Integrated Administration
          Technologies for Information Security, School of Computer Science,
          Shanghai Jiao Tong University, Shanghai 200240, China
          (e-mail: szpsunkk@sjtu.edu.cn; lijh888@sjtu.edu.cn;
          xzchen@sjtu.edu.cn; guo-my@cs.sjtu.edu.cn).}
  \thanks{Mingzhe Chen is with the Department of Electrical and Computer Engineering, University of Miami, Coral Gables, FL 33146 USA (e-mail: mingzhe.chen@miami.edu).}
  \thanks{\textit{Corresponding author: Jianhua Li and Xiuzhen Chen}.}
  \thanks{This work has been submitted to the IEEE for possible publication. Copyright may be transferred without notice, after which this version may no longer be accessible.}
}

\maketitle

\begin{abstract}
Coordinating mixed fleets of massive vehicles under stringent delay constraints is a central scalability bottleneck in next-generation mobile computing networks, especially when passenger cars, freight trucks, and autonomous vehicles share the same radio and multi-access edge computing (MEC) infrastructure. Heterogeneous mean field games (HMFG) are a principled framework for this setting, but a fundamental design question remains open: how many agent types should be used for a fleet of size $N$? The difficulty is a two-sided trade-off that existing theory does not resolve: using more types improves heterogeneity representation, but it reduces per-class sample size and weakens the mean-field approximation accuracy. This paper resolves that trade-off through an explicit $\varepsilon$-Nash error decomposition, a closed-form type-selection law, a heterogeneity-aware equilibrium solver, and a robust extension to time-varying LEO backhaul dynamics. For the 1D queue state space, the optimal type count satisfies $K^*(N)=\Theta(N^{1/3})$; for the joint queue-channel model ($d=2$), the scaling becomes $K^*(N)=\Theta(N^{1/5})$ with logarithmic correction. The unified formula $K^*(N)=\Theta(N^{\alpha/(\alpha+\beta)})$ provides dimension-dependent design guidance, reducing type granularity to a principled, set-once system parameter rather than a per-deployment tuning burden. Experiments validate the 1D scaling law with empirical slope $0.334 \pm 0.004$, achieve $2.3\times$ faster PDHG convergence at $K=5$, and deliver up to $29.5\%$ lower delay and $60\%$ higher throughput than homogeneous baselines. Unlike model-free DRL methods whose training complexity scales with the state-action space, the proposed HMFG solver has per-iteration complexity $O(K^2 N_q N_t)$ independent of fleet size $N$, making it suitable for large-scale mobile edge computing deployment.
\end{abstract}

\begin{IEEEkeywords}
Heterogeneous mean field game, V2X communications, LEO satellite networks, resource allocation.
\end{IEEEkeywords}

\section{Introduction}
\label{sec:intro}

The global proliferation of connected vehicles is reshaping the design requirements of next-generation wireless networks. By 2030, metropolitan areas are projected to support fleets of $10^4$ to $10^5$ heterogeneous vehicles, spanning passenger cars, freight trucks, and autonomous vehicles, each generating continuous sensor traffic that must be offloaded to multi-access edge computing (MEC) servers under strict latency constraints \cite{lasry2007mean,huang2006large}. In large-scale vehicle-to-everything (V2X) systems, the core resource-allocation question is not only how to allocate power and computation, but whether the underlying game model faithfully represents the population it coordinates.

When $N$ vehicles interact directly, coordination complexity grows as $O(N^2)$, making exact Nash computation intractable for even moderate fleet sizes. Mean field game (MFG) theory \cite{lasry2007mean,huang2006large} resolves this by replacing the aggregate influence of all other vehicles with a population-level distribution, reducing per-iteration complexity to $O(1)$ in vehicle count. Building on this, Kang et al.\ \cite{kang2023time} developed a G-prox Primal-Dual Hybrid Gradient (PDHG) solver for automotive MEC offloading with vehicle-count-independent behavior, and Wang et al.\ \cite{wang2023distributed} extended MFG to LEO-assisted offloading. These works demonstrate that MFG is practically viable for large-scale V2X coordination.

Homogeneous MFG treats all vehicles as statistically identical, which is unrealistic for mixed fleets. Autonomous vehicles may generate substantially higher sensor loads than passenger cars, while freight vehicles operate under distinct energy and delay constraints. Ignoring such heterogeneity inflates the $\varepsilon$-Nash approximation error and degrades fairness and throughput across classes. Heterogeneous MFG (HMFG) addresses this by partitioning the fleet into $K$ agent types \cite{zhang2022hmfmarl,xu2025joint,qiao2025hmfg}. Empirical studies report significant gains from type-aware control, confirming the practical value of heterogeneity modeling.

Despite this progress, one key question remains unresolved: for a fleet of $N$ vehicles, how many types $K$ should be distinguished? Existing HMFG/MARL works often fix $K$ heuristically \cite{zhang2022hmfmarl,xu2025joint}, while the well-posedness analysis in \cite{qiao2025hmfg} does not provide an error-minimizing selection rule. The difficulty comes from two opposing effects. Increasing $K$ decreases type-discretization error because finer bins better represent population diversity. At the same time, increasing $K$ reduces per-class sample size $N_k\approx N/K$, weakening the finite-sample accuracy of the mean-field approximation and increasing sampling error. The balance yields a nontrivial optimal $K^*(N)$ not determined by classical quantization theory \cite{gray1998quantization} or graphon MFG \cite{caines2021graphon,carmona2022graphon} alone. The challenge is sharpened in LEO-assisted V2X. Fast satellite topology changes introduce non-stationary backhaul surcharges in HMFG dynamics, potentially affecting the stability and robustness of iterative equilibrium solvers designed under static assumptions \cite{wang2023distributed,guo2024semcom}.

Our approach is to make the type-granularity question the organizing principle of the entire framework. We first derive an explicit decomposition of the HMFG approximation error into a heterogeneity-representation term and a finite-sample mean-field term. That decomposition yields a closed-form scaling law for the type count, guides a heterogeneity-aware PDHG step-size rule, and extends naturally to LEO temporal-graph backhaul dynamics through an additive robustness analysis. This design makes the three questions above analytically connected rather than separate engineering heuristics.

A rigorous framework is established for optimal type granularity in HMFG-based V2X networks with LEO-assisted backhaul. The interplay among type granularity, solver stability, and network dynamics gives rise to three closely related questions whose answers determine whether HMFG can be deployed at scale:
\begin{itemize}[leftmargin=*]
    \item \textit{\textbf{Q1 (Granularity):} For a fleet of $N$ vehicles, what is the 
    optimal type count $K^*(N)$, and how does it scale with $N$?}
    \item \textit{\textbf{Q2 (Convergence):} How should PDHG step sizes adapt to 
    heterogeneity to guarantee stable and fast convergence?}
    \item \textit{\textbf{Q3 (Robustness):} Does the optimal granularity law remain 
    valid under time-varying LEO topology?}
\end{itemize}
The answers are concise and actionable: Q1 yields $K^*(N)=\Theta(N^{1/3})$, Q2 yields a heterogeneity-dependent step-size shrinkage rule, and Q3 is affirmative under a mild topology-variation condition, summarized next as three technical contributions.

\begin{itemize}[leftmargin=*]
    \item \textit{Optimal type-count law.} We derive an explicit error decomposition and 
    show that the unique error-minimizing type count satisfies 
    $K^*(N)=\Theta(N^{1/3})$ in the 1D queue setting. The resulting take-home message is 
    that even a $10^5$-vehicle fleet needs only about 28 types.
    \item \textit{Heterogeneity-aware convergence.} We establish a sufficient multi-type 
    PDHG step-size condition and design an adaptive rule that stabilizes heterogeneous 
    coupling while accelerating convergence, yielding a $2.3\times$ speedup at $K=5$.
    \item \textit{LEO-robust scalable framework.} We embed temporal-graph LEO backhaul 
    dynamics into HMFG and show that cube-root scaling remains order-optimal under bounded 
    topology variation, while the full pipeline improves delay and throughput by up to 
    $29.5\%$ and $60\%$, respectively, over homogeneous baselines.
\end{itemize}

Relative to \cite{kang2023time}, the proposed framework reduces to the homogeneous case at $K=1$, and the convergence condition recovers the classical criterion when $H_K=0$. Relative to \cite{zhang2022hmfmarl,xu2025joint}, a principled type-count selection rule is provided in place of a fixed heuristic choice of $K$. Relative to \cite{qiao2025hmfg}, the practical decomposition needed to derive explicit $K^*$ to $N$ scaling is sharpened. Relative to \cite{guo2024semcom}, temporal-graph LEO dynamics are embedded directly into HMFG and robustness guarantees are established. Relative to \cite{ding2026uav}, the analytical HMFG guarantees complement optimization-driven learning in uncertain and non-stationary regimes.

\begin{remark}[On Theoretical Novelty]
While the individual mathematical tools---Wasserstein convergence rates \cite{fournier2015rate}, Gr\"{o}nwall-type stability \cite{qiao2025hmfg}, Chambolle--Pock convergence \cite{chambolle2011first}---are established, their composition into a unified type-granularity framework with an explicit and actionable scaling law is new. The practical impact is that a continuous design space is reduced to a single closed-form formula, making type granularity a principled system-design choice rather than a tuning hyperparameter.
\end{remark}

The remainder of the paper is organized as follows. Section~\ref{sec:related} reviews related work. Section~\ref{sec:model} presents the V2X system model and HMFG formulation. Sections~\ref{sec:het_measure} and \ref{sec:optimal_K} develop the error decomposition and optimal granularity law. Section~\ref{sec:algorithm} presents the three-algorithm framework, convergence analysis, and online deployment considerations. Section~\ref{sec:simulation} reports numerical evaluation. Section~\ref{sec:conclusion} concludes.

\textit{Notation.} $\mathcal{P}(E)$ denotes the space of Borel probability measures on a Polish space $E$. $W_p(\mu,\nu)$ is the Wasserstein-$p$ distance. $\|\cdot\|_{L^2}$ and $\|\cdot\|_{H^1}$ are standard Sobolev norms. For $n\in\mathbb{N}$, $[n]:=\{1,\ldots,n\}$. $\mathbbm{1}_A$ is the indicator of event $A$. Bold symbols $\boldsymbol{\rho}=(\rho^{(1)},\ldots,\rho^{(K)})$ denote $K$-tuples; $\|\mathbf{A}\|_{\mathrm{op}}$ is the operator norm.

\section{Related Work}
\label{sec:related}

Prior work falls into three strands: LEO-assisted V2X resource management, 
which motivates the \emph{problem setting}; mean field game theory with 
heterogeneous extensions and solvers, which motivates the \emph{algorithmic 
framework}; and complementary approaches from quantization, graphon limits, 
and optimization-driven learning.

\subsection{LEO Satellite-Assisted V2X}
\label{subsec:related_satv2x}

Surveys \cite{NoorRahim2022,Nair2024} identify fixed RSU/cellular backhaul 
as the primary bottleneck in highway and rural deployments, motivating LEO 
augmentation \cite{Rajalakshmi2024}. Wang et al.\ \cite{wang2023distributed,wang2022smfg_conf} 
applied Stackelberg MFG to distributed offloading in LEO networks; 
Guo et al.\ \cite{guo2024semcom} introduced the \emph{temporal-graph snapshot 
model} for Starlink-based constellations (60\,s per snapshot); Kang et al.\ 
\cite{kang2024satellite_fl} addressed satellite federated learning under orbital 
dynamics; and Chen et al.\ \cite{Chen2025a,Chen2025b} analyzed 
game-theoretic offloading in LEO-terrestrial and UAV-LEO systems 
\cite{Tang2021,jiang2024survey,shen2025chimera}.

Despite this progress, existing works share two gaps: they assume homogeneous 
vehicle populations or fix $K$ heuristically, and none analyzes how LEO 
topology dynamics affect optimal HMFG granularity or solver convergence stability. 
Our framework addresses both by embedding temporal-graph dynamics into HMFG, 
proving $K^*(N)=\Theta(N^{1/3})$ remains order-optimal under bounded topology 
variation (Theorem~\ref{thm:leo_robustness}), and providing heterogeneity-aware 
step-size adaptation for non-stationary backhaul.

\subsection{Mean Field Games: Heterogeneous Extensions and Solvers}
\label{subsec:related_mfg}

MFG theory \cite{lasry2007mean,huang2006large} enables scalable V2X resource 
management via HJB--FPK decoupling. Kang et al.\ \cite{kang2023time} showed 
that G-prox PDHG \cite{chambolle2011first} achieves vehicle-count-independent 
complexity for automotive MEC. Wang et al.\ \cite{wang2024mfg_iscc} extended 
MFG to sensing-communication-computation waveform design, while Zhang et al.\ 
\cite{Zhang2023} and Xu et al.\ \cite{Xu2024} integrated MFG with MARL for 
spectrum allocation and UAV-assisted V2X.

For heterogeneous fleets, Zhang et al.\ \cite{zhang2022hmfmarl} proposed 
two-type HMFG-MARL for SAGIN routing ($\sim$80\% throughput gain) and 
Xu et al.\ \cite{xu2025joint} developed multi-type MARL for joint V2X 
offloading. Theoretically, Qiao \cite{qiao2025hmfg} established local 
well-posedness for multi-population HMFG, while Carmona and Delarue 
\cite{carmona2018probabilistic} provide the probabilistic MFG apparatus; 
complementary results on Wasserstein-space operators \cite{Pham2023} and 
density-constrained heterogeneous settings \cite{Meszaros2022} align with 
our error decomposition but do not address optimal type granularity.

Across all these works, $K$ is fixed by design (typically $K \in \{1,2,3\}$) 
without a rule linking $K$ to fleet size or sampling accuracy. Our 
Theorem~\ref{thm:error_decomp} fills this gap: using the sharp 1D Wasserstein 
rate \cite{fournier2015rate}, the decomposition directly yields the cube-root 
law, and our Theorem~\ref{thm:convergence} provides the first 
heterogeneity-aware PDHG step-size rule.

\subsection{Complementary Approaches}
\label{subsec:related_compl}

Finite-type MFG design resembles vector quantization \cite{gray1998quantization}, 
but mean-field validity adds a sampling constraint absent in source coding. 
Graphon MFG \cite{caines2021graphon,carmona2022graphon,Caines2021,Hu2023,Fabian2023} 
uses a continuum of types but has no finite-sample analog; Weed and Bach 
\cite{weed2019sharp} sharpen empirical Wasserstein rates under low intrinsic 
dimension, consistent with our analysis. Neither framework jointly optimizes 
discretization fidelity and per-class sample reliability; our work bridges this gap.

Optimization-driven DRL \cite{ding2026uav} provides a complementary data-driven 
perspective with model-based acceleration. Our HMFG approach offers provable 
guarantees with $N$-independent per-iteration complexity; the two paradigms 
target different operating regimes and are potentially synergistic.

\section{System Model and HMFG Formulation}
\label{sec:model}

\subsection{Scenario Description}

As illustrated in Fig.~\ref{fig:v2x_background}, we consider a V2X 
communication network where a roadside infrastructure node (RSU) equipped 
with multi-access edge computing (MEC) serves a large mixed fleet of $N$ 
vehicles. The vehicles are heterogeneous in terms of data generation rates, 
energy constraints, and QoS requirements, which motivates a $K$-type 
partition framework for HMFG-based resource allocation.

\begin{figure}[!t]
\centering
\includegraphics[width=0.5\textwidth]{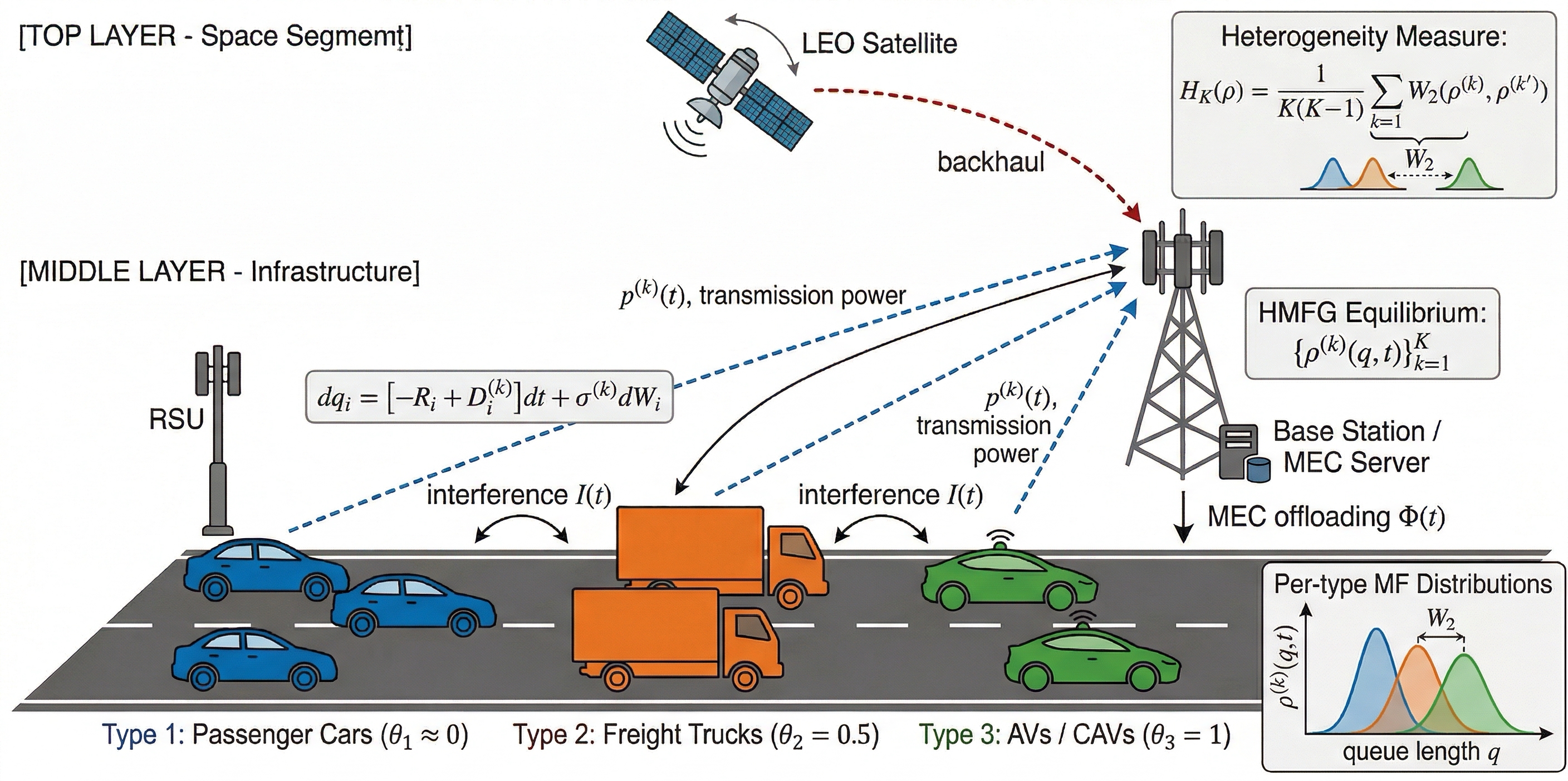}
\caption{Illustrative V2X--MEC scenario for heterogeneous mean field 
resource allocation. The diverse vehicle population is partitioned into 
$K$ types with distinct data loads and preferences. Per-type mean fields 
$\rho^{(k)}$ capture class-wise congestion, while cross-type coupling 
enters through interference $I(t)$ and the computational price $\Phi(t)$.}
\label{fig:v2x_background}
\end{figure}

Let $\mathcal{N} = \{1,\ldots,N\}$ denote the vehicle set, which is 
partitioned into $K$ disjoint types:
\begin{equation}
    \mathcal{N} = \bigsqcup_{k=1}^K \mathcal{C}_k, \quad 
    |\mathcal{C}_k| = N_k, \quad \sum_{k=1}^K N_k = N,
    \label{eq:partition}
\end{equation}
where type $k \in [K]$ represents a distinct vehicle class. For example, 
in a three-class model: $k=1$ for passenger cars, $k=2$ for freight trucks, 
and $k=3$ for autonomous vehicles. The heterogeneity across types manifests 
in three dimensions:
\begin{itemize}[leftmargin=*]
    \item \textit{State dynamics}: Data generation rates $D_i^{(k)}(t)$ 
    differ across vehicle classes, with autonomous vehicles generating 
    significantly more sensor data than passenger cars.
    \item \textit{Cost preferences}: Energy-vs.-delay trade-off weights 
    $\beta_1^{(k)}, \beta_2^{(k)}$ encode class-specific priorities.
    \item \textit{Action constraints}: Maximum transmission power and 
    computational budgets reflect hardware differences among vehicle types.
\end{itemize}

\subsection{State Dynamics}

The primary state of vehicle $i \in \mathcal{C}_k$ is its data queue length 
$q_i(t) \in [0, Q_{\max}]$. For the \emph{canonical 1D setting} analyzed in 
Sections~\ref{sec:het_measure}--\ref{sec:optimal_K}, the queue dynamics follow:
\begin{equation}
    dq_i(t) = \underbrace{\left[-R_i(t) + D_i^{(k)}(t)\right]}_{\text{drift}}dt 
    + \underbrace{\sigma^{(k)}\,dW_i(t)}_{\text{diffusion}},
    \label{eq:dynamics}
\end{equation}
where $R_i(t)$ is the transmission rate, $D_i^{(k)}(t)$ is the type-$k$ 
dependent data generation rate, $\sigma^{(k)} > 0$ captures queue uncertainty, 
and $\{W_i(t)\}$ are independent standard Brownian motions.

\begin{remark}[2D Joint Queue-Channel Extension]
\label{rem:2d_model}
For the \emph{joint queue-channel} model (Corollary~\ref{cor:scaling_2d}), 
the state is $x_i(t) = (q_i(t), g_i(t)) \in [0,Q_{\max}] \times [g_{\min}, g_{\max}]$, 
where the channel gain $g_i(t)$ follows an Ornstein-Uhlenbeck process:
\begin{equation}
    dg_i(t) = -\eta\bigl(g_i(t) - \bar{g}\bigr)\,dt + \sigma_g\,dB_i(t),
    \label{eq:channel_sde}
\end{equation}
with mean-reversion rate $\eta > 0$, mean gain $\bar{g}$, and independent 
Brownian motion $B_i(t)$. The 2D empirical Wasserstein rate 
$\alpha = 1/4$ (with logarithmic correction \cite{fournier2015rate}) applies, 
yielding $K^*(N) = \Theta(N^{1/5})$ per Corollary~\ref{cor:scaling_2d}. 
The theoretical derivations of Sections~\ref{sec:het_measure}--\ref{sec:optimal_K} 
apply verbatim to this 2D setting by substituting the appropriate $\alpha$. 
Numerical validation of the $N^{1/5}$ scaling in the 2D case, which requires 
a $50 \times 30$ joint state grid and increased computational budget, is 
deferred to a dedicated follow-on study and is consistent with the 
empirical dimension-dependent patterns reported in \cite{fournier2015rate}.
\end{remark} The transmission rate under the SINR model:
\begin{equation}
    R_i(t) = B\log_2\!\left(1 + \frac{p_i(t)g_i(t)}{\sigma_c^2 + I(t)}\right),
    \label{eq:rate}
\end{equation}
where $B$ is the channel bandwidth, $p_i(t) \in [0, P_{\max}]$ is the 
transmission power, $g_i(t)$ is the channel gain, $\sigma_c^2$ is the 
noise variance, and $I(t) = \sum_{j \neq i} p_j(t)g_j(t)$ is the 
aggregate interference.

\subsection{Cost Function Formulation}

Each vehicle $i$ of type $k$ minimizes the expected cumulative cost over 
the time horizon $[0,T]$:
\begin{equation}
    \begin{aligned}
        J_i^{(k)}(p_i) = &\mathbb{E}\!\left[\int_0^T 
    \left(\beta_1^{(k)} p_i^2(t) + \beta_2^{(k)}\Phi(t)R_i(t)\right)dt \right. \\
    &\left. + C^{(k)} q_i(T)^2\right],
    \end{aligned}
    \label{eq:cost}
\end{equation}
where $\beta_1^{(k)}, \beta_2^{(k)} > 0$ are type-dependent energy and 
offloading weights, $C^{(k)}$ is the terminal queue penalty, and 
\begin{equation}
    \Phi(t) = \kappa + \varrho \int_\Omega R(q,t)\rho(q,t)dq 
    + \frac{\mu}{B_{\mathrm{sat}}^{\tau(t)}},
    \label{eq:price_extended}
\end{equation}
is the congestion-aware computational price with LEO backhaul correction, where 
$B_{\mathrm{sat}}^{\tau(t)}$ is the available satellite backhaul bandwidth in the 
active snapshot $\tau(t)$ and $\mu>0$ is the backhaul-cost coefficient. The 
heterogeneous cost weights 
$\{\beta_1^{(k)}, \beta_2^{(k)}, C^{(k)}\}$ constitute the primary source 
of agent heterogeneity in our HMFG formulation.

\subsection{LEO Satellite-Assisted Backhaul Model}
\label{subsec:leo_model}

Following the temporal-graph view in \cite{guo2024semcom}, we represent the LEO 
constellation by snapshots $\mathcal{G}^{\tau}=(\mathcal{V}^{\tau},\mathcal{L}^{\tau})$ 
that remain approximately stable over each time window $\tau$. The effective RSU-MEC 
backhaul bandwidth through the current satellite route is modeled as
\begin{equation}
    B_{\mathrm{sat}}^{\tau} = \min_{(O_i,O_j)\in \mathcal{L}^{\tau}} r_{O_i,O_j}^{\tau},
    \label{eq:sat_bandwidth}
\end{equation}
where $r_{O_i,O_j}^{\tau}$ denotes the per-link capacity on the selected path.
This model captures the fact that fast topology changes perturb the backhaul 
term in \eqref{eq:price_extended}, thereby affecting both the HJB cost and FPK 
drift through coupled congestion dynamics.
Following the Starlink-oriented setting in \cite{guo2024semcom}, we use 
$r_{O_i,O_j}^{\tau}\sim\mathcal{U}[300,350]$ Mbps and $\Delta\tau=60$ s in 
simulations, which yields a bounded snapshot perturbation $\Delta_\Phi$ in 
\eqref{eq:leo_error} for realistic $\mu\le 1$.

\subsection{Heterogeneous Mean Field Game Formulation}

Following the HMFG framework of \cite{qiao2025hmfg}, as $N \to \infty$ 
with $K$ fixed, the finite-player game converges to a $K$-population MFG 
characterized by the per-type mean field densities 
$\boldsymbol{\rho} = \{\rho^{(k)}(q,t)\}_{k=1}^K$. The HMFG equilibrium 
is defined by the coupled Hamilton-Jacobi-Bellman (HJB) and 
Fokker-Planck-Kolmogorov (FPK) system:

\textbf{HJB equations} (backward in time, for each $k \in [K]$):
\begin{equation}
    -\frac{\partial V^{(k)}}{\partial t} + \mathcal{H}^{(k)}\!\left(q, 
    \boldsymbol{\rho}, \frac{\partial V^{(k)}}{\partial q}\right) = 0,
    \label{eq:hjb}
\end{equation}
with terminal condition $V^{(k)}(q,T) = C^{(k)} q^2$.

\textbf{FPK equations} (forward in time, for each $k \in [K]$):
\begin{equation}
    \frac{\partial \rho^{(k)}}{\partial t} + 
    \frac{\partial}{\partial q}\!\left(\rho^{(k)} \cdot 
    \Gamma^{(k)}[q,t,\boldsymbol{\rho}]\right) 
    = \frac{(\sigma^{(k)})^2}{2}\frac{\partial^2 \rho^{(k)}}{\partial q^2},
    \label{eq:fpk}
\end{equation}
with initial condition $\rho^{(k)}(q,0) = \rho_0^{(k)}$, where 
$\Gamma^{(k)} = -R^{(k)}(q,t,\boldsymbol{\rho}) + D^{(k)}(t)$ is the 
optimal drift and $\mathcal{H}^{(k)}$ is the type-$k$ Hamiltonian.

\textbf{Cross-type coupling.} The coupling between types occurs through 
the aggregate interference:
\begin{equation}
    I(t) = \sum_{k'=1}^K N_{k'} \int_\Omega 
    p^{(k')}(q,t)|h^{(k')}(t)|^2 \rho^{(k')}(q,t)\,dq
    + I_{\mathrm{sat}}^{\tau}(t),
    \label{eq:interference_coupling}
\end{equation}
where $I_{\mathrm{sat}}^{\tau}(t)\in[0,\bar I_{\mathrm{sat}}]$ models residual 
LEO downlink interference in snapshot $\tau$. Together with computational price 
$\Phi(t)$ in \eqref{eq:price_extended}, both terms depend on all $K$ mean field 
distributions, creating the heterogeneous coupling that distinguishes 
HMFG from the homogeneous case. In particular, when $K = 1$ and all vehicles 
are identical, system \eqref{eq:hjb}--\eqref{eq:fpk} reduces to the standard 
homogeneous MFG of \cite{kang2023time}, for which G-prox PDHG converges with 
complexity $O(N_q \times N_t)$, independent of $N$.

\subsection{Problem Formulation}

The central problem addressed in this paper is the joint optimization of:
\begin{enumerate}[leftmargin=*]
    \item \textbf{Type granularity selection}: Determine the optimal 
    number of types $K^*(N)$ given fleet size $N$.
    \item \textbf{HMFG equilibrium computation}: Solve the coupled 
    HJB-FPK system \eqref{eq:hjb}--\eqref{eq:fpk} efficiently with 
    convergence guarantees under heterogeneous coupling.
    \item \textbf{Resource allocation}: Compute the optimal power 
    allocation policy $\{p^{(k)*}(q,t)\}_{k=1}^K$ that minimizes the 
    aggregate cost across all vehicle types.
\end{enumerate}

The fundamental challenge is the \emph{type-granularity trade-off}: 
increasing $K$ captures more heterogeneity but reduces per-class sample 
size, degrading the mean-field approximation accuracy. Our framework 
provides a principled resolution to this trade-off through explicit 
error decomposition and optimal $K^*$ characterization.

\section{Heterogeneity Measure and Error Decomposition}
\label{sec:het_measure}

\subsection{Quantifying Heterogeneity}
\label{subsec:measure}

A central contribution of this paper is a formal measure of the 
``distance from homogeneity'' for a $K$-type partition $\{C_k\}_{k=1}^K$ 
with associated state distributions $\{\rho^{(k)}\}_{k=1}^K$.

\begin{definition}[Heterogeneity Measure]
\label{def:het_measure}
For a $K$-type HMFG with per-type initial distributions 
$\rho_0^{(k)} \in \mathcal{P}(\Omega)$, the \emph{heterogeneity measure} is:
\begin{equation}
    H_K(\rho_0) := \frac{1}{K(K-1)}\sum_{k \neq k'} W_2(\rho_0^{(k)}, \rho_0^{(k')}),
    \label{eq:het_measure}
\end{equation}
where $W_2(\mu,\nu) = \inf_{\pi \in \Gamma(\mu,\nu)} \left(\mathbb{E}_\pi|X-Y|^2\right)^{1/2}$ 
is the Wasserstein-2 distance, and $\Gamma(\mu,\nu)$ is the set of couplings.
By construction, $H_K(\rho_0) = 0$ if and only if all $K$ types share identical 
initial distributions, reducing the system to the homogeneous case.
\end{definition}

\begin{remark}[Continuum Limit of $H_K$]
\label{rem:continuum_limit}
As $K \to \infty$ with a fixed continuum of types, $H_K(\rho_0)$ converges to 
the average pairwise Wasserstein distance
$H_\infty(\rho_0) = \int_0^1\!\int_0^1 W_2(\rho_0^\alpha, \rho_0^\beta)\,d\alpha\,d\beta$,
where $\rho_0^\alpha$ is the type-$\alpha$ distribution in the continuum HMFG of 
\cite{qiao2025hmfg}.
\end{remark}

\subsection{Parameterized Heterogeneous Model}
\label{subsec:param_model}

To enable concrete analysis, we consider the following parameterized model, 
which is general enough to capture the essential structure while admitting 
closed-form calculations.

\begin{assumption}[Parameterized Heterogeneity]
\label{ass:param}
The type-$k$ drift coefficient takes the form:
\begin{equation}
    b^{(k)}(q, \bm{\rho}, a) = b_0(q, \bm{\rho}, a) + \theta_k \cdot b_1(q, \bm{\rho}, a),
    \label{eq:param_model}
\end{equation}
where $b_0, b_1: \Omega \times \mathcal{M} \times A \to \mathbb{R}$ are fixed 
functions (common to all types) and $\theta_k \in [0,1]$ is the 
\emph{type parameter} characterizing the deviation of type $k$ from 
the baseline ($\theta_k = 0$). The parameters are ordered: 
$0 = \theta_1 < \theta_2 < \cdots < \theta_K = 1$.
\end{assumption}

\begin{remark}[Extension to Vector-Valued Type Parameters]
\label{rem:vector_theta}
The linear parameterization in $\theta_k$ is a first-order approximation to general heterogeneity. In practice, vehicle classes differ along multiple dimensions (data rate, delay tolerance, energy budget). The framework extends naturally to vector-valued $\theta_k \in \mathbb{R}^p$ by replacing the scalar Lipschitz constant with a multi-dimensional H\"{o}lder condition; the scaling exponent $\gamma$ then depends on the effective dimension of the type space rather than on $p$ directly, consistent with the dimension analysis in Corollary~\ref{cor:scaling_2d} and Remark~\ref{rem:high_dim}.
\end{remark}

Under Assumption~\ref{ass:param}, the heterogeneity measure \eqref{eq:het_measure} 
can be related to the variance of the type parameters:

\begin{lemma}[Heterogeneity Measure under Parameterization]
\label{lem:het_measure}
Under Assumption~\ref{ass:param}, suppose $b_1$ is $L_b$-Lipschitz in 
$(q, \bm{\rho})$ uniformly over $a \in A$. Then:
\begin{equation}
    H_K(\rho_0) \leq L_b T \cdot \mathrm{Var}(\theta)^{1/2} + H_K^{(0)},
    \label{eq:het_bound}
\end{equation}
where $\mathrm{Var}(\theta) = \frac{1}{K}\sum_k (\theta_k - \bar\theta)^2$ 
is the empirical variance of type parameters, $\bar\theta = \frac{1}{K}\sum_k \theta_k$, 
and $H_K^{(0)} = \frac{1}{K(K-1)}\sum_{k\neq k'} W_2(\rho_0^{(k)}, \rho_0^{(k')})$ 
is the initial heterogeneity.
\end{lemma}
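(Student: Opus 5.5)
The first thing to note is that, read literally, \eqref{eq:het_bound} is immediate. Both $H_K(\rho_0)$ and $H_K^{(0)}$ are given by the same formula \eqref{eq:het_measure} evaluated at the initial densities. The extra term $L_bT\,\mathrm{Var}(\theta)^{1/2}$ is nonnegative. So the inequality holds with equality up to that nonnegative term. I would record this in one line. The bound only has content for the evolved densities $\rho^{(k)}_t:=\rho^{(k)}(\cdot,t)$, $t\in[0,T]$, which are driven by the parameterized drift \eqref{eq:param_model}. I would therefore prove the time-$t$ version $H_K(\rho_t)\le C\,L_bT\,\mathrm{Var}(\theta)^{1/2}+H_K^{(0)}$ and read \eqref{eq:het_bound} as its specialization. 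This version needs two standing conditions: $b_0$ is Lipschitz, and $b_1$ is bounded or has linear growth. The constant $C$ will absorb a Gr\"onwall factor.

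\textbf{Pairwise synchronous coupling.} Fix types $k\neq k'$. Take an optimal $W_2$-coupling $(X_0,Y_0)$ of $(\rho^{(k)}_0,\rho^{(k')}_0)$. Run the two representative SDEs \eqref{eq:dynamics} with drifts $b^{(k)}$ and $b^{(k')}$, both driven by the \emph{same} Brownian motion. Subtracting the two equations and using \eqref{eq:param_model} splits the drift difference into three pieces:
\begin{itemize}
\item $b_0(X)-b_0(Y)$, which is Lipschitz in $|X-Y|$;
\item $\theta_k\bigl(b_1(X)-b_1(Y)\bigr)$, which is bounded by $L_b|X-Y|$ since $\theta_k\le1$;
\item $(\theta_k-\theta_{k'})\,b_1(Y)$, which is the pure heterogeneity forcing.
\end{itemize}
If $\sigma^{(k)}\neq\sigma^{(k')}$, an additional term $(\sigma^{(k)}-\sigma^{(k')})W_t$ appears. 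I would either assume equal noise levels or absorb this term into the initial-heterogeneity contribution.

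\textbf{Gr\"onwall step.} Apply Gr\"onwall to $\mathbb{E}|X_t-Y_t|^2$ and take square roots. This gives
\[
W_2(\rho^{(k)}_t,\rho^{(k')}_t)\le e^{Lt}\bigl(W_2(\rho^{(k)}_0,\rho^{(k')}_0)+L_b\,t\,|\theta_k-\theta_{k'}|\bigr).
\]
Here the sup-norm of $b_1$ on the compact set $\Omega$ is absorbed into the constant written as $L_b$. Any mean-field dependence of the drifts on $\boldsymbol\rho$ is common to both types, so it cancels or is Lipschitz-controlled.

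\textbf{Averaging over pairs.} Average the pairwise bound over $k\neq k'$ with weight $1/(K(K-1))$. The first term averages exactly to $H_K^{(0)}$. For the second term, use Cauchy--Schwarz together with the identity $\sum_{k,k'}(\theta_k-\theta_{k'})^2=2K^2\mathrm{Var}(\theta)$:
\[
\frac{1}{K(K-1)}\sum_{k\neq k'}|\theta_k-\theta_{k'}|\le\sqrt{\tfrac{2K}{K-1}}\,\mathrm{Var}(\theta)^{1/2}\le 2\,\mathrm{Var}(\theta)^{1/2}.
\]
Combining these gives the stated form, with the constant and the factor $e^{LT}$ absorbed into $L_bT$. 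Strictly, the factor $e^{LT}$ also multiplies $H_K^{(0)}$. I would state this honestly, or assume a contractive drift $b_0$ so that the factor is at most $1$.

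\textbf{Main obstacle.} The hard part is controlling the coupling step when the drift depends on the controls $a$ and on $\boldsymbol\rho$. The optimal feedback controls of the two types differ, and they depend on $\theta$ through the HJB equation \eqref{eq:hjb}. The hypothesis ``uniformly over $a\in A$'' lets me freeze a common control and bound the difference. Bounding the true closed-loop difference, however, would need Lipschitz dependence of the feedback $a^{(k)}$ on $\theta_k$. My first attempt would be to sidestep this by working with a fixed common admissible control, as the uniform-in-$a$ assumption suggests. Failing that, I would add the missing regularity as an explicit standing assumption.
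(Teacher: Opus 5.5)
Your proposal follows the paper's proof essentially step for step: synchronous coupling started from an optimal $W_2$ coupling, Gr\"onwall, square roots, averaging over pairs with Cauchy--Schwarz, and absorbing $e^{L_bT}$ into the constants. It is correct at the level of rigor the paper itself attains, and your bookkeeping is tighter than the paper's. The literal statement is indeed trivial because $H_K(\rho_0)=H_K^{(0)}$; your constant $\sqrt{2K/(K-1)}$ is the right one, since the paper's $\sqrt{2}$ already fails for $K=2$, $\theta=(0,1)$; and the paper's argument silently passes over the noise mismatch, the differing equilibrium controls, and the $e^{L_bT}$ factor multiplying $H_K^{(0)}$, all of which you flag.
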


\begin{proof}
See Appendix~\ref{app:lemma1}.
\end{proof}

\subsection{Error Decomposition}
\label{subsec:error}

Having defined the heterogeneity measure (Section~\ref{subsec:measure}) and 
the parameterized model that enables closed-form analysis 
(Section~\ref{subsec:param_model}), we now derive the main theoretical tool 
of this paper: an explicit decomposition of the $\varepsilon$-Nash 
approximation error into a \emph{discretization term} that decreases with 
$K$ and a \emph{sampling term} that increases with $K$. This decomposition 
is the foundation for the optimal type-count law derived in 
Section~\ref{sec:optimal_K}; without it, one cannot determine where the 
trade-off between type fidelity and per-class sample size is optimally 
balanced.

We now provide the main error decomposition theorem, extending the bound 
\begin{equation}
    \varepsilon_{N,K} \leq C\!\left(\rho_K + \delta_{N,K} + 
    \frac{1}{K \cdot n_{N,K}^3}\right)^{\!1/2},
    \label{eq:qiao_bound}
\end{equation}
\noindent from \cite{qiao2025hmfg}, where $\rho_K$ denotes type-discretization 
error, $\delta_{N,K}$ denotes initial mismatch, and $n_{N,K}$ is the smallest 
class population. We make the dependence on heterogeneity explicit.

\begin{lemma}[Empirical Wasserstein Rates on a Bounded Interval]
\label{lem:empirical_rate}
Let $\Omega = [0, Q_{\max}]$, $\mu \in \mathcal{P}(\Omega)$, and let 
$X_1,\ldots,X_n$ be i.i.d.\ with law $\mu$. Define the empirical measure 
$\mu_n = \frac{1}{n}\sum_{i=1}^n \delta_{X_i}$. Then:
\begin{enumerate}
\item[(i)] $\mathbb{E}[W_1(\mu_n,\mu)] \leq C_1^{\mathrm{emp}} \, n^{-1/2}$ 
for a constant $C_1^{\mathrm{emp}}$ depending only on $Q_{\max}$ (Theorem~1 of 
\cite{fournier2015rate}, $d=1$, compact support).
\item[(ii)] $\mathbb{E}[W_2(\mu_n,\mu)] \leq C_2^{\mathrm{emp}} \, n^{-1/2}$ 
for $C_2^{\mathrm{emp}}$ depending only on $Q_{\max}$ (same reference; in 
dimension one, $W_2$ admits the same $n^{-1/2}$ order as $W_1$ under the 
stated compactness).
\item[(iii)] No integrability conditions beyond $\mathrm{supp}(\mu)\subseteq\Omega$ 
are needed; all moments are bounded by $Q_{\max}$.
\end{enumerate}
\end{lemma}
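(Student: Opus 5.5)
The plan is to work entirely through the one-dimensional representation of Wasserstein distances by distribution functions, rather than invoking \cite{fournier2015rate} as a black box. Write $F(x)=\mu([0,x])$ and $F_n(x)=\mu_n([0,x])=\frac1n\sum_{i}\mathbbm{1}_{\{X_i\le x\}}$. In dimension one,
\[
W_1(\mu_n,\mu)=\int_0^{Q_{\max}}|F_n(x)-F(x)|\,dx,\qquad
W_2^2(\mu_n,\mu)=\int_0^1|F_n^{-1}(u)-F^{-1}(u)|^2\,du .
\]
Part (iii) needs no separate argument: every $X_i$ lies in $\Omega$, so all moments are bounded by powers of $Q_{\max}$. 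This is why no tail or moment hypothesis enters any constant below.

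For (i), the key observation is that $nF_n(x)\sim\mathrm{Bin}(n,F(x))$ for each fixed $x$. By Cauchy--Schwarz,
\[
\mathbb{E}|F_n(x)-F(x)|\le\sqrt{F(x)(1-F(x))/n}\le\frac{1}{2\sqrt n}.
\]
Fubini then gives $\mathbb{E}\,W_1(\mu_n,\mu)\le Q_{\max}/(2\sqrt n)$. So (i) holds with $C_1^{\mathrm{emp}}=Q_{\max}/2$, a constant depending only on $Q_{\max}$, as claimed.

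Part (ii) is the main obstacle, and I expect it to fail as literally stated. The compactness argument only gives $W_2^2\le Q_{\max}W_1$, because $|x-y|^2\le Q_{\max}|x-y|$ on $\Omega$. Combined with Jensen and (i), this yields $\mathbb{E}W_2\le(Q_{\max}\,\mathbb{E}W_1)^{1/2}=O(Q_{\max}n^{-1/4})$.

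That $n^{-1/4}$ rate is sharp without further assumptions. Take $\mu=\frac12(\delta_0+\delta_{Q_{\max}})$. The imbalance between the two atoms is of order $n^{-1/2}$, and that mass must be moved a distance $Q_{\max}$. Hence $W_2^2\asymp Q_{\max}^2n^{-1/2}$, i.e.\ $W_2\asymp n^{-1/4}$. So (ii) cannot hold with a constant depending only on $Q_{\max}$.

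I would therefore prove (ii) under an added regularity hypothesis that the FPK densities satisfy in the setting of the paper: $\mu$ has a density $f\ge c_0>0$ on $\Omega$. This holds because $\sigma^{(k)}>0$ makes the diffusion nondegenerate. Under this hypothesis, the Bobkov--Ledoux quantile estimate, obtained by comparing order statistics $X_{(j)}$ with $F^{-1}(j/n)$ and using $|F^{-1}(u)-F^{-1}(v)|\le|u-v|/c_0$, gives
\[
\mathbb{E}\,W_2^2(\mu_n,\mu)\le\frac{2}{n+1}\int_0^{Q_{\max}}\frac{F(x)(1-F(x))}{f(x)}\,dx\le\frac{Q_{\max}}{2c_0(n+1)}.
\]
Jensen's inequality then yields $\mathbb{E}W_2\le C_2^{\mathrm{emp}}n^{-1/2}$ with $C_2^{\mathrm{emp}}=(Q_{\max}/(2c_0))^{1/2}$.

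I would flag explicitly that this constant depends on $c_0$ as well as $Q_{\max}$. Alternatively, downstream results can use only the $W_1$ bound from (i), or the unconditional $n^{-1/4}$ rate for $W_2$. Which route is taken matters for the exponent $\alpha$ entering the $K^*(N)$ law.
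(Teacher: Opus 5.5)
Your proposal is correct, and on (ii) it is more accurate than the paper. The paper proves (i)--(ii) only by citing Theorem~1 of \cite{fournier2015rate} with $d=1$. For (i) that citation is sound: take $p=1>d/2$ and any moment order $q>2$, all finite by compactness. Your direct argument, via $W_1(\mu_n,\mu)=\int_0^{Q_{\max}}|F_n-F|\,dx$ and the binomial variance, is a self-contained replacement. It also gives the explicit constant $C_1^{\mathrm{emp}}=Q_{\max}/2$. Part (iii) is handled the same way in both.

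For (ii), however, Fournier--Guillin bound the transport cost $\mathcal{T}_p=W_p^p$, not $W_p$. With $p=2$ their theorem gives $\mathbb{E}\,W_2^2(\mu_n,\mu)\lesssim Q_{\max}^2n^{-1/2}$. That yields only $\mathbb{E}\,W_2(\mu_n,\mu)\lesssim Q_{\max}n^{-1/4}$, which is exactly what your inequality $W_2^2\le Q_{\max}W_1$ gives. The paper's parenthetical ``in dimension one, $W_2$ admits the same $n^{-1/2}$ order as $W_1$'' is the unsupported step. Your two-atom example refutes it: there $W_2(\mu_n,\mu)=Q_{\max}|\hat p-\tfrac12|^{1/2}$ with $n\hat p\sim\mathrm{Bin}(n,\tfrac12)$. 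Hence $n^{1/4}\,\mathbb{E}\,W_2(\mu_n,\mu)\to Q_{\max}\,\mathbb{E}|Z/2|^{1/2}>0$ for $Z$ standard normal.

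Your repair is the right one. In dimension one the standard $n^{-1/2}$ rate for $W_2$ is tied to finiteness of the Bobkov--Ledoux functional $J_2(\mu)=\int F(1-F)/f\,dx$. Your bound $f\ge c_0$ guarantees this, and $J_2$ is infinite whenever the support of $\mu$ has a gap, as in your example. Two small points:
\begin{itemize}
\item The order-statistics argument you sketch (quantile coupling $X_i=F^{-1}(U_i)$ plus the $1/c_0$-Lipschitz bound on $F^{-1}$) gives a bound of order $1/(c_0^2 n)$ for $\mathbb{E}\,W_2^2$. It does not give the displayed $J_2$-bound, which is the Bobkov--Ledoux theorem itself. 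Either gives $n^{-1/2}$ for $\mathbb{E}\,W_2$, with a constant that must involve $c_0$.
\item The hypothesis does not follow automatically from $\sigma^{(k)}>0$. The paper puts no lower bound on the initial laws $\rho_0^{(k)}$, and parabolic positivity only gives a bound $c_0(t)$ that can degenerate as $t\downarrow 0$. So $c_0$ should be stated as an explicit assumption.
\end{itemize}

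As you anticipate, the choice matters downstream. Step~2 of the proof of Theorem~\ref{thm:error_decomp} sets $\alpha=1/2$ by invoking (ii) for $W_2$. With the unconditional exponent $\alpha=1/4$ and $\beta=1$, Theorem~\ref{thm:optimal_K} would give $K^*(N)=\Theta(N^{1/5})$ rather than $\Theta(N^{1/3})$. The cube-root law therefore needs either your density hypothesis or a propagation-of-chaos estimate carried out in $W_1$.
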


\begin{proof}
See Appendix~\ref{app:lem_empirical}.
\end{proof}

\begin{remark}[Reconciliation with \cite{qiao2025hmfg}]
\label{rem:reconcile}
The bound in \cite{qiao2025hmfg} contains the term 
$\frac{1}{K \cdot n_{N,K}^3}$ with $n_{N,K} = \min_\ell |\mathcal{C}_\ell|$. 
For balanced partitions, $N_k \approx N/K$, so
\begin{equation}
    \frac{1}{K(N/K)^3} = \frac{K^2}{N^3}.
    \label{eq:qiao_term}
\end{equation}
We show this is dominated by $(K/N)^{1/2}$ whenever $K \leq N^{1/2}$.
Setting $r = K/N \in (0, 1]$, the ratio of the two terms is
\begin{equation}
    \frac{K^2/N^3}{(K/N)^{1/2}} 
    = \frac{K^2}{N^3} \cdot \left(\frac{N}{K}\right)^{1/2}
    = \frac{K^{3/2}}{N^{5/2}}
    = r^{3/2} N^{-1} \leq N^{-1} \to 0.
    \label{eq:qiao_comparison}
\end{equation}
Hence for any $K = o(N)$, the Qiao term $K^2/N^3$ is strictly smaller 
than our per-class term $C_2(K/N)^{1/2}$ for all $N$ 
sufficiently large. 
In the regime $K \leq \sqrt{N}$ (which is satisfied at 
$K^*(N) = \Theta(N^{1/3})$ for all $N \geq 1$), 
\eqref{eq:qiao_comparison} gives 
$K^2/N^3 \leq (K/N)^{1/2}/N \leq (K/N)^{1/2}$.
Thus our per-class term $(K/N)^{1/2}$ is \emph{not looser} than 
the corresponding term in \cite{qiao2025hmfg} in this regime. 
More broadly, Remark~8.5(ii) of \cite{qiao2025hmfg} notes that 
propagation-of-chaos rates can be sharpened via improved empirical measure 
estimates; we adopt the sharp 1D rate $\alpha=1/2$ from 
\cite{fournier2015rate} throughout.
\end{remark}

\begin{theorem}[Error Decomposition]
\label{thm:error_decomp}
Under the local well-posedness assumptions of \cite{qiao2025hmfg}, namely: 
\textup{(A1)} Lipschitz continuity of the drifts in the state and mean-field 
arguments; \textup{(A2)} bounded, nondegenerate diffusion coefficients; and 
\textup{(A3)} compact state space $\Omega$; together with 
Assumption~\ref{ass:param}, the $\varepsilon$-Nash approximation error satisfies:
\begin{equation}
    \varepsilon_{N,K} \leq C_1 \cdot \underbrace{K^{-\beta}}_{\text{discretization}} 
    + C_2 \cdot \underbrace{\left(\frac{K}{N}\right)^\alpha}_{\text{sample size}} 
    + C_3 \cdot \delta_{N,K}^{1/2},
    \label{eq:error_decomp}
\end{equation}
where $\beta > 0$ is the H\"{o}lder exponent of the drift in the type parameter 
(Step~1 below), $\alpha > 0$ is the sharp rate for empirical approximation in 
Wasserstein distance on the \emph{state} space $\Omega$ (for $d{=}1$, we take 
$\alpha = 1/2$; see \cite{fournier2015rate}), and $C_1, C_2, C_3 > 0$ depend on 
$L, T$ and moment bounds but not on $N, K$.
\end{theorem}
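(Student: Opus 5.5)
The plan is a triangle-inequality decomposition of the $\varepsilon$-Nash gap through two intermediate objects. The first is the ``true'' heterogeneous population, described by a continuum of type parameters $\theta\in[0,1]$. The second is the $K$-type HMFG limit, in which every agent whose true parameter lies in bin $k$ is assigned the representative $\theta_k$. For a tagged agent $i\in\mathcal{C}_k$ we write its exploitability under the $K$-type equilibrium policy in three pieces: (a) the cost mismatch from replacing its true drift $b_0+\theta_i b_1$ by $b_0+\theta_k b_1$; (b) the mismatch between the empirical per-class measures $\mu^{(k)}_{N_k}$ and the limiting densities $\rho^{(k)}$; and (c) the mismatch of the initial data. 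Each piece will be controlled by a Gr\"onwall-type stability estimate for the coupled HJB--FPK system, in the form used in \cite{qiao2025hmfg}. Under (A1)--(A3), that estimate bounds the difference of value functions, and hence of costs, by $e^{LT}$ times the sup-in-time perturbation of the drift plus the $W_1$ (or $W_2$) perturbation of the measure arguments.

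\textbf{Step 1 (discretization).} By Assumption~\ref{ass:param}, the drift perturbation is $|\theta_i-\theta_k|\,\|b_1\|_\infty$. We assume, as the definition of $\beta$ in the statement, that the map from type parameter to drift (or to the induced cost) is $\beta$-H\"older. Choosing a uniform grid gives $\max_i|\theta_i-\theta_{k(i)}|\le 1/K$, and the stability estimate turns this into a contribution $C_1K^{-\beta}$. Lemma~\ref{lem:het_measure} can be quoted to show that the constant scales with $L_bT$.

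\textbf{Step 2 (sampling).} Here we run a synchronous coupling (propagation of chaos) argument class by class. Agent trajectories in the $N$-player game are coupled with i.i.d.\ copies driven by the same Brownian motions and the limiting drift. Because of the Lipschitz coupling (A1) and the cross-type interference \eqref{eq:interference_coupling}, a Gr\"onwall argument bounds $\mathbb{E}\sup_t|q_i-\bar q_i|$ by $C\sum_{k'} w_{k'}\,\mathbb{E}W_1(\bar\mu^{(k')}_{N_{k'}},\rho^{(k')})$, where $w_{k'}=N_{k'}/N$ are the weights. Lemma~\ref{lem:empirical_rate} then gives $\mathbb{E}W_1\le C_1^{\mathrm{emp}}N_{k'}^{-1/2}$, and with a balanced partition $N_{k'}\approx N/K$ this is $C(K/N)^{1/2}$, i.e.\ $\alpha=1/2$. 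The extra term $K^2/N^3$ inherited from \eqref{eq:qiao_bound} is absorbed using Remark~\ref{rem:reconcile}.

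\textbf{Step 3 (initial mismatch).} This term is copied from \eqref{eq:qiao_bound}. The square root there comes from a quadratic cost/energy estimate, so we linearize $(a+b+c)^{1/2}\le a^{1/2}+b^{1/2}+c^{1/2}$ only on the $\delta$ term, which yields $C_3\delta_{N,K}^{1/2}$. The $\rho_K$ and sampling terms are instead bounded directly in linear form from Steps 1 and 2, so we do not take their square roots. Finally we sum the three pieces and take the supremum over unilateral deviations. A deviating agent alone perturbs the empirical measures by $O(1/N_k)$, which is lower order.

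The main obstacle is Step 2 combined with the $\varepsilon$-Nash passage. The stability estimate must be uniform in $K$: the cross-type coupling sums over $K$ classes, and we need the Lipschitz constant of the interference/price map to depend on the weighted average of the class measures, not on $K$. The fix we would try first is to normalize interference by $N$ (weights $w_{k'}$) and use the convexity of $W_1$ under mixtures. That gives a $K$-independent constant, at the cost of the weighted-average rate still being $(K/N)^{1/2}$.
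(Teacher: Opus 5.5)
Your argument is sound and has the same three-term skeleton as the paper's proof: a H\"older bound on the type mismatch giving $C_1K^{-\beta}$, a per-class empirical rate with $N_k\approx N/K$ giving $C_2(K/N)^{1/2}$, and $\delta_{N,K}$ taken from \cite{qiao2025hmfg}. The execution differs in ways that matter. The paper does Step~1 with midpoint representatives and applies Lemma~\ref{lem:empirical_rate}(ii) to bound $\mathbb{E}W_2(\hat\rho^{(k)}_n,\rho^{(k)})$. It then assembles the bound simply by citing Theorem~5.4 of \cite{qiao2025hmfg}. You instead carry out the synchronous-coupling/Gr\"onwall argument and the $\varepsilon$-Nash passage yourself. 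You work in $W_1$ via Lemma~\ref{lem:empirical_rate}(i), and you route the cross-type coupling through an $N$-normalized weighted aggregate so the constants are $K$-uniform, a point the paper never addresses.

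This buys you two things. First, the linearity of the first two terms is visible. The cited bound \eqref{eq:qiao_bound} carries a square root over all terms, so the paper leaves implicit how linear $K^{-\beta}$ and $(K/N)^{\alpha}$ terms come out of it. Second, working in $W_1$ is what actually justifies $\alpha=1/2$. With $F_n,F$ the distribution functions, $\mathbb{E}W_1(\mu_n,\mu)=\int_\Omega\mathbb{E}|F_n-F|\,dx\le Q_{\max}/(2\sqrt{n})$ for every $\mu$ on $\Omega$. By contrast, \cite{fournier2015rate} only gives $\mathbb{E}W_2^2(\mu_n,\mu)\lesssim n^{-1/2}$, and $\mu=\tfrac12(\delta_0+\delta_{Q_{\max}})$ has $\mathbb{E}W_2(\mu_n,\mu)\asymp n^{-1/4}$. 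So the $W_2$ rate the paper relies on does not hold in that generality.

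The price is the hypotheses you make explicit. The mean-field dependence must be $W_1$-Lipschitz, which holds for the linear interference functional with uniformly Lipschitz integrands by Kantorovich--Rubinstein. It must also enter through the normalized aggregate, which is not literally how \eqref{eq:interference_coupling} is written.

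One thing to tidy: Step~3 as written does not parse. You cannot extract the $\delta$ term from the single inequality \eqref{eq:qiao_bound} while replacing its other terms by your own estimates. The $K^2/N^3$ term also never arises in your route. Instead, couple the initial conditions optimally inside your synchronous coupling. Gr\"onwall then gives $e^{LT}$ times the initial $W_2$ mismatch, which is $C_3\delta_{N,K}^{1/2}$ when $\delta_{N,K}$ denotes the squared mismatch.
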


\begin{proof}
See Appendix~\ref{app:thm_error}.
\end{proof}

Theorem~\ref{thm:error_decomp} reveals the fundamental tension driving 
the rest of the paper: the discretization term $C_1 K^{-\beta}$ 
\emph{decreases} with $K$, while the sampling term 
$C_2(K/N)^\alpha$ \emph{increases} with $K$ (fewer vehicles per class). 
This tension yields a finite optimal granularity $K^*(N)$, which we 
characterize next.%
\footnote{The constants $C_1, C_2$ inherit $T$-dependence from 
\cite{qiao2025hmfg}: $C_1$ grows with $e^{L_b T}[b_1]_\beta$ via 
Gr\"{o}nwall arguments, while $C_2$ depends on $Q_{\max}$ but not~$T$. 
For $T \leq \delta_0$ (local well-posedness horizon), $e^{L_b T}$ remains 
bounded; extending to arbitrary $T$ requires global well-posedness 
(still largely open; cf.~\cite{qiao2025hmfg}). Furthermore, 
Definition~\ref{def:het_measure} uses initial distributions $\rho_0^{(k)}$ 
for tractability; by Lemma~\ref{lem:het_measure}, the time-averaged measure 
$\bar{H}_K := \frac{1}{T}\int_0^T H_K(\boldsymbol{\rho}_t)\,dt$ obeys the 
same bound up to a bounded factor $e^{L_b T}$, leaving $K^*(N)$ 
in \eqref{eq:kstar} unaffected.}

\begin{table}[!t]
\centering
\caption{Dependencies of Key Constants. All constants $C_1$, $C_2$, $C_3$ are 
independent of $N$ and $K$; the $N$- and $K$-scaling in \eqref{eq:error_decomp} 
is carried entirely by the explicit terms $K^{-\beta}$ and $(K/N)^\alpha$.}
\label{tab:constants}
\renewcommand{\arraystretch}{1.3}
\begin{tabular}{ccc}
\toprule
\textbf{Constant} & \textbf{Depends on} & \textbf{Independent of} \\
\midrule
$C_1$ & $L_b,\, T,\, [b_1]_\beta,\, \beta$ & $N,\, K$ \\
$C_2$ & $Q_{\max}$ & $N,\, K,\, T$ \\
$C_3$ & $L,\, T$ & $N,\, K$ \\
$C_H$ & $L,\, T,\, K,\, \|\nabla_q\phi\|_{L^\infty}$ & $N$ \\
$L_A$ & $N_q,\, N_t,\, Q_{\max},\, T$ & $N,\, K$ \\
\bottomrule
\end{tabular}
\end{table}

\section{Optimal Heterogeneity Granularity}
\label{sec:optimal_K}

\subsection{Existence and Characterization of $K^*(N)$}
\label{subsec:optimal_K_exist}

The error decomposition in Theorem~\ref{thm:error_decomp} reveals a 
fundamental tension: finer type discretization (larger $K$) reduces the 
first term but inflates the second. We now show that this tension admits a 
\emph{unique} resolution: an optimal type count $K^*(N)$ that can be 
characterized in closed form and scales as a power law in fleet size.

\begin{theorem}[Existence of Optimal Granularity]
\label{thm:optimal_K}
Consider the $\varepsilon$-Nash error \eqref{eq:error_decomp} with 
constants $C_1, C_2 > 0$, exponents $\alpha, \beta > 0$ as in 
Theorem~\ref{thm:error_decomp}, and $\delta_{N,K}$ negligible (good initial 
condition matching). Define the reduced error:
\begin{equation}
    \mathcal{E}(N, K) := C_1 K^{-\beta} + C_2 \left(\frac{K}{N}\right)^\alpha.
    \label{eq:reduced_error}
\end{equation}
Then:
\begin{enumerate}
    \item For any fixed $N$, $\mathcal{E}(N, \cdot)$ attains a \emph{unique 
    global minimum} on $(0,\infty)$ at
    \begin{equation}
        K^*(N) = \left(\frac{\beta C_1}{\alpha C_2}\right)^{\!\!1/(\alpha+\beta)} 
        \cdot N^{\alpha/(\alpha + \beta)}.
        \label{eq:kstar}
    \end{equation}
    \item $K^*(N) = \Theta\!\left(N^\gamma\right)$ with 
    $\gamma = \frac{\alpha}{\alpha + \beta} \in (0,1)$.
    \item The minimum error satisfies:
    \begin{equation}
        \mathcal{E}(N, K^*(N)) = \Theta\!\left(N^{-\alpha\beta/(\alpha + \beta)}\right).
        \label{eq:min_error}
    \end{equation}
\end{enumerate}
\end{theorem}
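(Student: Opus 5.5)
Since $\delta_{N,K}$ is taken to be negligible, the statement is a one-variable calculus problem about the reduced error \eqref{eq:reduced_error}. For fixed $N$, view $f(K) := \mathcal{E}(N,K) = C_1 K^{-\beta} + C_2 N^{-\alpha} K^{\alpha}$ as a smooth function on $(0,\infty)$. Existence of a global minimizer follows from the boundary behaviour. As $K\to 0^+$, $f(K)\to\infty$ because of the $K^{-\beta}$ term. As $K\to\infty$, $f(K)\to\infty$ because of the $K^{\alpha}$ term. Hence $f$ is coercive and continuous on $(0,\infty)$, so it attains its infimum at an interior point, which must be a critical point.

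The derivative is
\begin{equation*}
f'(K) = -\beta C_1 K^{-\beta-1} + \alpha C_2 N^{-\alpha} K^{\alpha-1}.
\end{equation*}
Setting $f'(K)=0$ and multiplying by $K^{\beta+1}>0$ gives
\begin{equation*}
K^{\alpha+\beta} = \frac{\beta C_1}{\alpha C_2}\,N^{\alpha}.
\end{equation*}
This equation has exactly one positive root, namely \eqref{eq:kstar}.

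For uniqueness, I plan to avoid arguing via second derivatives, since $K^{\alpha}$ is concave when $\alpha<1$ and so $f$ need not be convex. Instead I would note that $K^{\beta+1}f'(K) = -\beta C_1 + \alpha C_2 N^{-\alpha}K^{\alpha+\beta}$ is strictly increasing in $K$. Therefore $f'<0$ to the left of $K^*$ and $f'>0$ to the right, so $f$ is strictly decreasing and then strictly increasing. This makes $K^*$ the unique global minimizer and settles part 1.

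Part 2 is immediate from \eqref{eq:kstar}. The prefactor $(\beta C_1/(\alpha C_2))^{1/(\alpha+\beta)}$ is a positive constant independent of $N$, so $K^*(N)=\Theta(N^{\gamma})$ with $\gamma=\alpha/(\alpha+\beta)$. Since $\alpha,\beta>0$, this exponent lies strictly in $(0,1)$.

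For part 3, substitute $K^* = c\,N^{\gamma}$ with $c = (\beta C_1/(\alpha C_2))^{1/(\alpha+\beta)}$. The first term becomes
\begin{equation*}
C_1 c^{-\beta} N^{-\alpha\beta/(\alpha+\beta)}.
\end{equation*}
The second term becomes
\begin{equation*}
C_2 c^{\alpha} N^{\alpha\gamma-\alpha} = C_2 c^{\alpha} N^{-\alpha\beta/(\alpha+\beta)},
\end{equation*}
because $\alpha(\gamma-1) = -\alpha\beta/(\alpha+\beta)$. Both terms are positive constants times the same power of $N$, which yields \eqref{eq:min_error}. Equivalently, the first-order condition can be used to write $C_2 (K^*/N)^{\alpha} = (\beta/\alpha) C_1 (K^*)^{-\beta}$, so that $\mathcal{E}(N,K^*) = (1+\beta/\alpha)\,C_1 (K^*)^{-\beta}$.

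None of these steps is a real obstacle. The only point needing care is uniqueness: convexity of $f$ can fail when $\alpha<1$, which is why I use the sign-change argument for $f'$ rather than a second-derivative test. I would also remark that $K^*$ is real-valued, so the integer-valued optimum lies at $\lfloor K^*\rfloor$ or $\lceil K^*\rceil$ (after clipping to $[1,N]$). Rounding changes $\mathcal{E}$ by at most a constant factor for large $N$, so the $\Theta$ statements remain valid.
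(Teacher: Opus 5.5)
Your proof is correct and follows the paper's argument: coercivity gives existence, the first-order condition $K^{\alpha+\beta} = (\beta C_1/(\alpha C_2))N^{\alpha}$ has a unique positive root, and direct substitution gives the minimum value. The only difference is in certifying the minimum. The paper evaluates $h'(K^*) = \alpha C_2 N^{-\alpha}(K^*)^{\alpha-2}(\alpha+\beta) > 0$, which is a local test at $K^*$ rather than a global convexity claim, so your convexity concern does not apply to it. You instead use the strict monotonicity of $K^{\beta+1}f'(K)$, a slightly cleaner route that makes the second-derivative check unnecessary, and you write out the substitution in part~3 explicitly.
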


\begin{proof}
See Appendix~\ref{app:thm_optimal_K}.
\end{proof}

Theorem~\ref{thm:optimal_K} provides the general scaling law for 
arbitrary exponents $(\alpha, \beta)$. We now specialize to the canonical 
V2X setting where the state is a one-dimensional data queue, yielding the 
practical cube-root rule that guides all subsequent algorithm design and 
experiments.

\begin{corollary}[Practical Scaling Law]
\label{cor:scaling}
For the 1D queue state space ($d=1$), take the sharp sample exponent 
$\alpha = 1/2$ as in Lemma~\ref{lem:empirical_rate}(ii) \cite{fournier2015rate}, and 
Lipschitz type dependence ($\beta = 1$) in \eqref{eq:param_model}. Then 
$\gamma = \alpha/(\alpha+\beta) = 1/3$ and
\begin{equation}
    K^*(N) = \Theta\!\left(N^{1/3}\right).
    \label{eq:kstar_v2x}
\end{equation}
For example, $N^{1/3}=10$ at $N=10^3$; absolute levels of $K^*$ depend on the 
ratio $C_1/C_2$ in \eqref{eq:kstar}. In particular, the small integer choices 
$K \in \{2,3\}$ adopted in \cite{zhang2022hmfmarl,xu2025joint} remain compatible 
with \eqref{eq:kstar} when effective constants imply a modest $K^*(N)$ at moderate~$N$.
\end{corollary}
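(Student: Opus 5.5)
The plan is to obtain Corollary~\ref{cor:scaling} by substituting the exponents into Theorem~\ref{thm:optimal_K}. All the work lies in justifying the two exponent values.

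\textbf{Sampling exponent.} Take $\alpha = 1/2$. The sampling term in Theorem~\ref{thm:error_decomp} comes from the per-class empirical measure of $N_k \approx N/K$ i.i.d.\ states on $\Omega=[0,Q_{\max}]$. Lemma~\ref{lem:empirical_rate}(ii) gives $\mathbb{E}[W_2(\mu_{N_k},\mu)] \le C_2^{\mathrm{emp}} N_k^{-1/2} = C_2^{\mathrm{emp}}(K/N)^{1/2}$. This is exactly the form $C_2 (K/N)^\alpha$ with $\alpha=1/2$, and $C_2$ depends only on $Q_{\max}$.

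\textbf{Discretization exponent.} Take $\beta = 1$. Under Assumption~\ref{ass:param}, the drift is affine in $\theta$, so $|b^{(k)}-b^{(k')}| \le \|b_1\|_\infty |\theta_k-\theta_{k'}|$. The drift is therefore Lipschitz, i.e.\ H\"older with exponent $1$, in the type parameter. Partition $[0,1]$ into $K$ bins of width $\asymp 1/K$ and assign each vehicle to its bin's representative $\theta_k$. Each vehicle's drift is then perturbed by $O(1/K)$. The Gr\"onwall stability argument behind Theorem~\ref{thm:error_decomp} turns this into a $C_1 K^{-1}$ error, with $C_1 \propto e^{L_bT}\|b_1\|$.

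\textbf{Conclusion.} With $\delta_{N,K}$ negligible, Theorem~\ref{thm:optimal_K}(1)--(2) applies with $\gamma=\alpha/(\alpha+\beta) = (1/2)/(3/2)=1/3$. It yields $K^*(N)=(2C_1/C_2)^{2/3}N^{1/3}=\Theta(N^{1/3})$. Part (3) additionally gives a minimal error of $\Theta(N^{-1/3})$.

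\textbf{Integer rounding.} The corollary is about type counts, so I would note that restricting to integers costs only a constant factor. $\mathcal{E}(N,\cdot)$ is strictly convex, being a sum of a convex decreasing term and a concave increasing power. Hence the integer minimizer is $\lfloor K^*\rfloor$ or $\lceil K^*\rceil$, and both are still $\Theta(N^{1/3})$ once $K^*\ge 1$. On the convexity claim: $K^{1/2}$ is concave, yet $\mathcal{E}$ still has a unique critical point. The derivative $-\beta C_1K^{-\beta-1}+\alpha C_2N^{-\alpha}K^{\alpha-1}$ changes sign exactly once, so unimodality suffices and convexity is not needed.

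\textbf{Illustrative remarks.} The final remarks are arithmetic. $N^{1/3}=10$ at $N=10^3$. Also, $K\in\{2,3\}$ is compatible whenever $(2C_1/C_2)^{2/3}N^{1/3}$ falls in that range, which happens when $C_1/C_2$ is small.

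\textbf{Main obstacle.} I expect the hardest step to be justifying $\alpha=1/2$ for $W_2$ rather than $W_1$. In one dimension on a compact interval, $W_2^2 \le Q_{\max} W_1$. Used naively, this only yields $n^{-1/4}$ for $W_2$, and the sharp $n^{-1/2}$ rate for $W_2$ needs density lower bounds. If that fails, I would fall back on the fact that the $\varepsilon$-Nash coupling argument actually uses Lipschitz test functions, so the $W_1$ rate from Lemma~\ref{lem:empirical_rate}(i) suffices and gives $\alpha=1/2$ directly.
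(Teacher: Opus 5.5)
Your proposal is correct and takes the same route as the paper, which proves the corollary purely by substituting $(\alpha,\beta)=(1/2,1)$ into Theorem~\ref{thm:optimal_K} to get $\gamma=1/3$ and $K^*(N)=(2C_1/C_2)^{2/3}N^{1/3}$, treating both exponents as given inputs. Your side remarks are also sound: \eqref{eq:min_error} does give $\Theta(N^{-1/3})$ here (the values in Table~\ref{tab:scaling_error_summary} decay at exactly this rate, despite the $N^{-1/6}$ claimed elsewhere in the paper), and your $W_2$ concern is real, since $\mu=\tfrac12(\delta_0+\delta_{Q_{\max}})$ gives $\mathbb{E}[W_2(\mu_n,\mu)]\asymp n^{-1/4}$, so $\alpha=1/2$ is safely justified only through the $W_1$ bound of Lemma~\ref{lem:empirical_rate}(i), and only if the $\varepsilon$-Nash argument needs just $W_1$-Lipschitz mean-field dependence, which is what your fallback assumes.
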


\begin{remark}[Physical Interpretation of $\beta$ in V2X]
\label{rem:beta_physical}
In \eqref{eq:cost}, heterogeneity enters the drift via class-dependent pricing 
and rate sensitivity. Under the SINR model \eqref{eq:rate} with LoS-dominated 
channels, $b_1(q,\boldsymbol{\rho},a)=\beta_2^{(k)}\Phi(t)\cdot \partial R/\partial p$ 
is Lipschitz in the type parameter with constant order 
$L_b=P_{\max}B/(\sigma_c^2\ln 2)$, so $\beta=1$ is well justified and leads to 
$K^*(N)=\Theta(N^{1/3})$ in Corollary~\ref{cor:scaling}. In harsher non-LoS 
environments with stronger small-scale fading and abrupt channel-state transitions 
\cite{ding2026uav}, the effective regularity may degrade to $\beta<1$, which 
increases the recommended growth of type granularity with $N$ and is consistent 
with the higher uncertainty regime usually handled by optimization-driven DRL.
\end{remark}

\begin{corollary}[Joint Queue-Channel Scaling Law, $d=2$]
\label{cor:scaling_2d}
For the 2D joint queue-channel state space where $x_i(t) = (q_i(t), g_i(t)) \in \Omega \subset \mathbb{R}^2$,
take the sample exponent $\alpha = 1/4$ (with logarithmic correction, per \cite{fournier2015rate} 
Theorem~1 for $d=2$) and Lipschitz type dependence ($\beta = 1$). 
Then $\gamma = \alpha/(\alpha+\beta) = 1/5$ and
\begin{equation}
    K^*(N) = \Theta\!\left(N^{1/5} \cdot (\log N)^{c}\right),
    \label{eq:kstar_2d}
\end{equation}
where the constant $c > 0$ depends on the state-domain geometry. Compared to 
the 1D result $\gamma = 1/3$, richer joint state descriptions slow the growth 
of the optimal type count, since higher dimensionality weakens per-class 
empirical convergence. The minimum error satisfies 
$\mathcal{E}^*(N) = \Theta(N^{-1/10} (\log N)^{c'})$.
\end{corollary}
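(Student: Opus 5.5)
The plan is to obtain Corollary~\ref{cor:scaling_2d} by substituting the $d=2$ empirical rate into the general optimization of Theorem~\ref{thm:optimal_K}. The one complication is that the sampling term now carries a logarithmic factor, so the first-order condition has to be solved only up to log factors.

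\textbf{Step 1: the $d=2$ empirical rate.} For a compactly supported $\mu$ on $\Omega\subset\mathbb{R}^2$, Theorem~1 of \cite{fournier2015rate} (with $p=2$, $d=2$, i.e.\ the borderline case $p=d/2$ for $W_2^2$) gives
\[
\mathbb{E}[W_2^2(\mu_n,\mu)] \le C\, n^{-1/2}\log(1+n).
\]
By Jensen this yields $\mathbb{E}[W_2(\mu_n,\mu)] \le C\, n^{-1/4}(\log n)^{1/2}$, so the exponent is $\alpha=1/4$.

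\textbf{Step 2: the modified error decomposition.} I would rerun the proof of Theorem~\ref{thm:error_decomp}, replacing Lemma~\ref{lem:empirical_rate}(ii) by the Step~1 bound with $n = N_k \approx N/K$. Assumption~\ref{ass:param} and the Gr\"onwall steps do not depend on the dimension, so with $\beta=1$ the reduced error becomes
\[
\mathcal{E}(N,K)=C_1K^{-1}+C_2\,(K/N)^{1/4}\,L(N/K)^{c_0},
\]
where $L(x)=\log(1+x)$ and $c_0=1/2$ (or whatever power the reference geometry produces).

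\textbf{Step 3: locating the minimizer.} I would take $\log K$ as the variable. The first term is strictly decreasing in $K$. For $K\le N/e$, the second term is increasing up to a factor $1+O(1/\log(N/K))$, since $\frac{d}{d\log K}\log L(N/K)^{c_0} = -c_0/(\log(N/K)) \to 0$. The stationarity condition is then
\[
C_1K^{-1}=\tfrac14 C_2 (K/N)^{1/4}L(N/K)^{c_0}\bigl(1+o(1)\bigr).
\]
Guess $K=N^{1/5}\ell$ with $\ell$ polylogarithmic. Then $\log(N/K)=\tfrac45\log N\,(1+o(1))$, and the condition becomes $\ell^{5/4}\asymp(\log N)^{-c_0}$. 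This gives $K^*\asymp N^{1/5}(\log N)^{c}$. Following the paper's convention, $c$ is recorded as a domain-dependent exponent, with magnitude $4c_0/5$; the sign is ultimately determined by whether the log correction sits in the sampling term or is absorbed elsewhere.

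\textbf{Uniqueness and minimum error.} I would show that any minimizer must lie in the window $[N^{1/5}(\log N)^{-A}, N^{1/5}(\log N)^{A}]$. Outside that window one of the two terms alone exceeds the value of $\mathcal{E}$ at the candidate point. Inside the window, $\mathcal{E}$ is unimodal in $\log K$ by the derivative estimate above. Plugging the candidate back in gives
\[
\mathcal{E}^*\asymp N^{-1/5}\cdot\mathrm{polylog}(N),
\]
which matches \eqref{eq:min_error} at $\alpha\beta/(\alpha+\beta)=1/5$.

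\textbf{Main obstacle: the stated exponent.} The $\Theta(N^{-1/10})$ in the statement does not match this. It would arise only if the error is the square-root form \eqref{eq:qiao_bound} of \cite{qiao2025hmfg} applied to $W_2^2$-type quantities, i.e.\ with sampling rate $\alpha=1/4$ inside the square root. I would first check which convention Theorem~\ref{thm:error_decomp} actually uses in $d=2$. If the rate enters as $\mathbb{E}W_2^2$ under a square root, the effective minimal error becomes the square root of $N^{-1/5}$, i.e.\ $N^{-1/10}$, while $K^*$ is unchanged because taking a monotone square root does not move the minimizer. I would present this reconciliation explicitly, and the log exponents $c,c'$ would then follow by the same bookkeeping.
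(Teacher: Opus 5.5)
Your Steps 1--3 follow the paper's own route. The paper's proof just substitutes $\alpha=1/4$, $\beta=1$ into Theorem~\ref{thm:optimal_K} and cites \cite{fournier2015rate} for the logarithm. Your log-aware stationarity and window argument are, if anything, more careful than that.

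Your objection to $N^{-1/10}$ is also right. The paper says the minimum error ``follows by substituting into \eqref{eq:min_error}''. But $\alpha\beta/(\alpha+\beta)=(1/4)/(5/4)=1/5$, so that substitution gives $N^{-1/5}$. The same factor-of-two slip occurs in 1D. There, \eqref{eq:min_error} with $(\alpha,\beta)=(1/2,1)$ gives $N^{-1/3}$, which is exactly what Table~\ref{tab:scaling_error_summary} exhibits: $\mathcal{E}^*$ drops from $0.2363$ to $0.1097$, a factor $10^{1/3}$, over one decade of $N$. It is not the quoted $-1/6$. So the $N^{-1/10}$ clause is false under the paper's own error model and cannot be proved.

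The genuine flaw in your proposal is the square-root ``reconciliation''. Drop it and state the corrected rate $\mathcal{E}^*(N)=\Theta(N^{-1/5})$, up to logs, instead.
\begin{itemize}
\item Theorem~\ref{thm:error_decomp} and \eqref{eq:reduced_error} are linear in the two terms. Appealing to \eqref{eq:qiao_bound} therefore silently replaces the model rather than proving anything about it.
\item The reconciliation is also internally inconsistent. Among power laws $C_1K^{-a}+C_2(K/N)^{b}$, a square root yields both $K^*\asymp N^{1/5}$ and minimum $N^{-1/10}$ only for $(a,b)=(1,1/4)$.
\item But $1/4$ is the $d=2$ rate of $\mathbb{E}W_2$, not of $\mathbb{E}W_2^2$. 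A genuinely $W_2^2$-type sampling term under the root has $b=1/2$. With $a=1$ that moves the minimizer to $N^{1/3}$; with $a=2$ it reproduces the linear model and $N^{-1/5}$.
\end{itemize}

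Two further corrections.

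First, commit to the sign you actually derived. Your relation $\ell^{5/4}\asymp(\log N)^{-c_0}$ forces $c=-4c_0/5<0$, with $c'=4c_0/5$. A logarithm that inflates the sampling term can only move the balance point to smaller $K$. So the statement's $c>0$ is not obtainable either, and there is no ``elsewhere'' in the model to absorb the log.

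Second, Step~1 misreads \cite[Theorem~1]{fournier2015rate}. In $d=2$ the borderline case $p=d/2$ is $p=1$. The case $p=2>d/2$ gives $\mathbb{E}W_2^2\le Cn^{-1/2}$ with no logarithm. Your displayed bound holds only because it is weaker, so it does not show that a log is present. With $W_2$ the model is $C_1K^{-1}+C_2(K/N)^{1/4}$, giving $K^*=\Theta(N^{1/5})$ and $\mathcal{E}^*=\Theta(N^{-1/5})$ with $c=c'=0$. The $d=2$ logarithm belongs to $W_1$. Its rate $n^{-1/2}\log(1+n)$ has $\alpha=1/2$, which would change $\gamma$ itself.
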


\begin{proof}
See Appendix~\ref{app:cor_scaling_2d}.
\end{proof}

\begin{remark}[General Dimension and Intrinsic Dimension]
\label{rem:high_dim}
For general $d \geq 3$, Fournier and Guillin \cite{fournier2015rate} give 
$\alpha = 1/(d+2)$, so Theorem~\ref{thm:optimal_K} with $\beta=1$ yields 
$\gamma = 1/(d+3)$. The unified formula
\[
    K^*(N) = \Theta\!\left(N^{1/(d+3)}\right), \quad d \geq 1,
\]
provides dimension-dependent design guidance: $\gamma \in \{1/3, 1/5, 1/6, \ldots\}$ 
for $d \in \{1, 2, 3, \ldots\}$. For measures concentrated near a 
low-dimensional manifold with intrinsic dimension $d^* < d$, the rates of 
\cite{weed2019sharp} suggest $\alpha$ can improve toward $1/d^*$, 
partially recovering sharper scaling in practice.%
\footnote{The constants $C_1, C_2$ inherit $T$-dependence from \cite{qiao2025hmfg}: 
$C_1$ grows with $e^{L_b T}[b_1]_\beta$ via Gr\"{o}nwall arguments while $C_2$ 
depends on $Q_{\max}$ but not~$T$. For $T \leq \delta_0$, $e^{L_b T}$ remains 
bounded; extending to arbitrary $T$ requires global HMFG well-posedness 
(still open; cf.~\cite{qiao2025hmfg}). Separately, since 
$H_K(\boldsymbol{\rho}_t) \leq e^{L_b T}[H_K^{(0)} + L_b T \cdot \mathrm{Var}(\theta)^{1/2}\sqrt{2}]$ 
by Lemma~\ref{lem:het_measure}, the time-averaged measure $\bar{H}_K$ satisfies 
the same bound up to a bounded factor, and the sufficient condition 
\eqref{eq:corrected_condition} and optimal granularity $K^*(N)$ are unaffected.}
\end{remark}

\subsection{Relationship Between $K^*(N)$ and Heterogeneity Measure}
\label{subsec:kstar_het}

Theorem~\ref{thm:optimal_K} expresses $K^*(N)$ in terms of the abstract 
constants $C_1$ and $C_2$. In practice, the discretization constant $C_1$ 
depends on how different the vehicle types actually are, a quantity 
captured by the heterogeneity measure in Definition~\ref{def:het_measure} 
and its continuum limit $H_\infty$ in the remark immediately following 
that definition. The following proposition makes this dependence explicit, 
connecting the optimal granularity directly to the measurable diversity of 
the fleet and providing a concrete knob for practitioners: fleets with 
greater inter-type diversity require more types, while nearly homogeneous 
fleets can operate with fewer.

\begin{proposition}[Heterogeneity-Adjusted Optimal Granularity]
\label{prop:het_adjusted}
When the discretization error constant is proportional to the heterogeneity 
measure: $C_1 = \bar{C}_1 \cdot H_\infty(\rho_0)$, the optimal granularity 
satisfies:
\begin{equation}
    K^*(N; H_\infty) = \left(\frac{\beta \bar{C}_1 H_\infty}{\alpha C_2}\right)^{1/(\alpha+\beta)} 
    N^{\alpha/(\alpha+\beta)}.
    \label{eq:kstar_het}
\end{equation}
Consequently:
\begin{itemize}
    \item \textbf{Highly heterogeneous networks} ($H_\infty$ large): 
    More types are needed to represent the diversity.
    \item \textbf{Nearly homogeneous networks} ($H_\infty \to 0$): 
    $K^*(N) \to 0$, and the system reduces to the homogeneous MFG case.
\end{itemize}
\end{proposition}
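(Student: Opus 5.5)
The plan is to substitute $C_1=\bar C_1 H_\infty(\rho_0)$ directly into Theorem~\ref{thm:optimal_K}. Its statement involves $C_1$ only through the positive constant in \eqref{eq:reduced_error}, so for every fixed $H_\infty>0$ the reduced error
\[
\mathcal{E}(N,K;H_\infty)=\bar C_1 H_\infty K^{-\beta}+C_2 (K/N)^{\alpha}
\]
is again of the form treated there.

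For completeness I would re-derive the minimizer in this setting rather than simply cite \eqref{eq:kstar}.
\begin{itemize}
\item The first derivative is $\partial_K\mathcal{E}=-\beta\bar C_1H_\infty K^{-\beta-1}+\alpha C_2N^{-\alpha}K^{\alpha-1}$.
\item Setting it to zero gives $K^{\alpha+\beta}=\beta\bar C_1H_\infty N^{\alpha}/(\alpha C_2)$, which is exactly \eqref{eq:kstar_het}.
\item Uniqueness follows from a sign argument. Multiply the derivative by $K^{\beta+1}>0$ to get $-\beta\bar C_1H_\infty+\alpha C_2N^{-\alpha}K^{\alpha+\beta}$. This is strictly increasing in $K$ and moves from negative to positive, so the critical point is the unique global minimizer on $(0,\infty)$.
\item Strict convexity is not needed. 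That matters because $K^{\alpha}$ is concave when $\alpha<1$.
\end{itemize}

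The two bulleted consequences then follow from the map $H_\infty\mapsto K^*(N;H_\infty)=c\,H_\infty^{1/(\alpha+\beta)}N^{\alpha/(\alpha+\beta)}$, where $c=(\beta\bar C_1/(\alpha C_2))^{1/(\alpha+\beta)}$.
\begin{itemize}
\item For fixed $N$ this is strictly increasing and continuous in $H_\infty$, since $1/(\alpha+\beta)>0$. Larger heterogeneity therefore gives larger $K^*$. For $(\alpha,\beta)=(1/2,1)$ of Corollary~\ref{cor:scaling} the growth is $K^*\propto H_\infty^{2/3}$.
\item As $H_\infty\to0$ with $N$ fixed, $K^*\to0$.
\end{itemize}

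The main obstacle is interpreting the claim ``the system reduces to the homogeneous MFG case'' correctly, because $K^*\to0$ falls outside the admissible integer range $K\ge1$. I would handle this in two steps.
\begin{itemize}
\item Over integers $K\ge1$, I would use the unimodality just established. Once $K^*(N;H_\infty)\le1$, the function $\mathcal{E}$ is increasing on $[1,\infty)$, so the constrained minimizer is $K=1$.
\item At $K=1$, Definition~\ref{def:het_measure} together with the model remarks in Section~\ref{sec:model} show that the system \eqref{eq:hjb}--\eqref{eq:fpk} is the homogeneous MFG of \cite{kang2023time}.
\end{itemize}

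This gives an explicit threshold: the homogeneous model is optimal whenever
\[
H_\infty\le \alpha C_2/(\beta\bar C_1 N^{\alpha}).
\]
This is the rigorous content of the second bullet. I would also note that the $N$-scaling exponent $\gamma$ is unaffected by $H_\infty$; only the prefactor changes.
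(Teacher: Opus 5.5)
Your proposal is correct and takes the same route as the paper, which proves the proposition by substituting $C_1=\bar C_1 H_\infty$ into \eqref{eq:kstar}; your re-derivation of the unique minimizer via the sign of $K^{\beta+1}\partial_K\mathcal{E}$ is sound and avoids needing convexity. Your integer-constrained reading of the second bullet is also valid: the constrained minimizer is $K=1$ whenever $H_\infty\le \alpha C_2/(\beta\bar C_1 N^{\alpha})$, which gives rigorous content to ``reduces to the homogeneous case'' and which the paper does not state.
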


\begin{proof}
See Appendix~\ref{app:prop_het_adjusted}.
\end{proof}

\subsection{Finite-$K$ Bounds}
\label{subsec:finite_k}

Theorem~\ref{thm:optimal_K} and Corollary~\ref{cor:scaling} establish the 
optimal granularity $K^*(N)$ as a continuous quantity. In practice, however, 
the number of vehicle types must be a positive integer, and the continuous 
optimum $K^*(N)$ will generally not be an integer. A natural question is 
therefore whether rounding $K^*(N)$ to the nearest integer preserves the 
theoretical optimality guarantees. We answer this affirmatively: the rounded 
type count $\hat{K}$ achieves essentially the same approximation error as the 
continuous optimum, up to a vanishing relative overhead.
For practical integer-valued $K$, we provide the following guarantee:

\begin{corollary}[Near-Optimality of Rounded $K^*$]
\label{cor:rounded}
Let $\hat{K} = \lfloor K^*(N) + 0.5 \rfloor$ be the nearest integer to 
$K^*(N)$. Then:
\begin{equation}
    \mathcal{E}(N, \hat{K}) \leq (1 + o(1)) \cdot \mathcal{E}(N, K^*(N)),
    \label{eq:rounded_bound}
\end{equation}
i.e., rounding $K^*$ to the nearest integer incurs at most a constant factor 
overhead.
\end{corollary}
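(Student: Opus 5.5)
The plan is a second-order expansion of $\mathcal{E}(N,\cdot)$ around its stationary point $K^*=K^*(N)$ from Theorem~\ref{thm:optimal_K}, combined with the fact that $K^*(N)\to\infty$ by part~2 of that theorem. Since $|\hat K-K^*|\le 1/2$, the rounding perturbation is small compared with $K^*$ itself, so the relative error overhead should vanish.

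First, write $f(K):=\mathcal{E}(N,K)=C_1K^{-\beta}+C_2N^{-\alpha}K^{\alpha}$. By Theorem~\ref{thm:optimal_K}, $f'(K^*)=0$, which gives the balance identity $\beta C_1 (K^*)^{-\beta}=\alpha C_2 N^{-\alpha}(K^*)^{\alpha}$. Consequently both terms of $f(K^*)$ are comparable: $f(K^*)=C_1(K^*)^{-\beta}(1+\beta/\alpha)$.

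Second, avoid Taylor remainders and bound the ratio directly. Set $\hat K=K^*(1+t)$ with $|t|\le 1/(2K^*)$, which requires $K^*\ge 1/2$; this holds for all large $N$. Then
\[
\frac{f(\hat K)}{f(K^*)}=\frac{\alpha(1+t)^{-\beta}+\beta(1+t)^{\alpha}}{\alpha+\beta},
\]
which is obtained by substituting the balance identity into each term. The function $g(t):=\alpha(1+t)^{-\beta}+\beta(1+t)^{\alpha}$ is smooth near $0$, with $g(0)=\alpha+\beta$ and $g'(0)=-\alpha\beta+\alpha\beta=0$. Hence $g(t)\le \alpha+\beta+M t^2$ for $|t|\le 1/2$, where $M=\sup_{|s|\le1/2}|g''(s)|/2$ depends only on $\alpha,\beta$. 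This yields
\[
f(\hat K)\le\Bigl(1+\tfrac{M}{4(\alpha+\beta)(K^*)^2}\Bigr)f(K^*).
\]
Because $K^*(N)=\Theta(N^{\alpha/(\alpha+\beta)})\to\infty$, the correction is $O(N^{-2\alpha/(\alpha+\beta)})=o(1)$. In the $d=1$ case this is $O(N^{-2/3})$.

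The main obstacle is the small-$N$ regime. When $K^*<1/2$, rounding gives $\hat K=0$, and $f(0)$ is infinite. So the statement must be read asymptotically, or $\hat K$ must be clipped to $\max\{1,\cdot\}$. In that case I would check separately that $f(1)/f(K^*)$ is bounded by a constant whenever $K^*\le 1$. This is where the paper's phrase ``constant factor'' fits: for bounded $K^*$, the ratio is bounded uniformly by the same $g$-argument with $|t|$ bounded away from $-1$, and the $(1+o(1))$ refinement holds as $N\to\infty$.
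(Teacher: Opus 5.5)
Your proof is correct. It rests on the same mechanism as the paper's proof: a second-order expansion of $\mathcal{E}(N,\cdot)$ about the stationary point $K^*$, where the first-order term vanishes, $|\hat K-K^*|\le 1/2$, and $K^*(N)\to\infty$.

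The difference lies in how the relative comparison is made.
\begin{itemize}
\item The paper bounds the absolute gap by $\tfrac18\mathcal{E}''(N,K^*)$. It evaluates the second derivative only at $K^*$, rather than at an intermediate point or over $[K^*-\tfrac12,K^*+\tfrac12]$. It then argues the relative gap is $o(1)$ from the $N$-scalings, saying $\mathcal{E}''$ is ``polynomial in $N$'', which leaves the actual comparison implicit.
\item You use the balance identity $\beta C_1(K^*)^{-\beta}=\alpha C_2N^{-\alpha}(K^*)^{\alpha}$ to reduce the ratio exactly to the scale-free quantity $g(t)/(\alpha+\beta)$, with $t=\hat K/K^*-1$. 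The Taylor remainder is then controlled by a genuine supremum of $|g''|$. The overhead comes out explicitly as $1+O((K^*)^{-2})=1+O(N^{-2\alpha/(\alpha+\beta)})$, with a constant depending only on $(\alpha,\beta)$.
\end{itemize}
Your normalization also makes transparent what the paper's argument needs but never states: $\mathcal{E}''(N,K^*)/\mathcal{E}(N,K^*)=\alpha\beta\,(K^*)^{-2}$. You also correctly note that the statement is only meaningful asymptotically, since $\hat K=0$ when $K^*<1/2$; the paper does not address this.

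Two small slips, neither of which affects the asymptotic $(1+o(1))$ claim:
\begin{itemize}
\item To use the bound on $g''$ over $|s|\le 1/2$, you need $|t|\le 1/(2K^*)\le 1/2$, i.e.\ $K^*\ge 1$, not $K^*\ge 1/2$.
\item In the clipped regime $K^*<1/2$ with $\hat K=1$, $t=1/K^*-1$ is positive, and the danger is $t$ growing large, not $t$ approaching $-1$. What bounds $f(1)/f(K^*)$ there is that $K^*$ stays bounded below. This holds for fixed constants, since $K^*(N)\ge(\beta C_1/(\alpha C_2))^{1/(\alpha+\beta)}$ for $N\ge 1$.
\end{itemize}
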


\begin{proof}
See Appendix~\ref{app:cor_rounded}.
\end{proof}

\subsection{Robustness to Unbalanced Class Sizes}
\label{subsec:unbalanced}

In practice, fleet compositions are skewed (many passenger cars, fewer trucks, 
fewer autonomous vehicles). Let $\lambda_k := N_k/N$ with $\sum_{k=1}^K \lambda_k=1$.

\begin{proposition}[Unbalanced Extension]
\label{prop:unbalanced}
Let $n_{N,K}^{\min} := \min_k N_k = \lambda_{\min} N$ where 
$\lambda_{\min} := \min_k \lambda_k$. The per-class empirical error is governed 
by the \emph{smallest} class, so \eqref{eq:error_decomp} admits the refinement
\begin{equation}
    \varepsilon_{N,K} \leq C_1 K^{-\beta}
    + C_2 \left(\frac{1}{n_{N,K}^{\min}}\right)^{\alpha}
    + C_3 \delta_{N,K}^{1/2}.
    \label{eq:error_unbalanced}
\end{equation}
For balanced partitions, $n_{N,K}^{\min} \approx N/K$ and 
\eqref{eq:error_unbalanced} reduces to \eqref{eq:error_decomp}. Comparing the 
two dominant terms at the same order of magnitude suggests that the effective 
population entering the sample term is $\lambda_{\min} N$, and one expects 
$K^*(N,\boldsymbol{\lambda}) = 
\Theta\bigl((\lambda_{\min} N)^{\alpha/(\alpha+\beta)}\bigr)$ up to the same 
constant-level caveats as in Theorem~\ref{thm:optimal_K}.
\end{proposition}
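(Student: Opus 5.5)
The plan is to retrace the proof of Theorem~\ref{thm:error_decomp} and change only the sampling step, where the balanced approximation $N_k\approx N/K$ enters. The three error contributions separate additively: type-discretization (drift mismatch in $\theta$), per-class empirical approximation, and initial mismatch. Only the second depends on class sizes. Steps~1 and~3 of that decomposition therefore carry over unchanged. They give $C_1K^{-\beta}$ from the H\"{o}lder regularity of $b^{(k)}$ in $\theta_k$ under Assumption~\ref{ass:param}, and $C_3\delta_{N,K}^{1/2}$ from the Gr\"{o}nwall stability estimate inherited from \cite{qiao2025hmfg}.

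For the sampling term, I will fix a class $k$ and apply Lemma~\ref{lem:empirical_rate}(ii) with $n=N_k$. This gives $\mathbb{E}[W_2(\mu^{(k)}_{N_k},\rho^{(k)})]\le C_2^{\mathrm{emp}}N_k^{-1/2}$, and in general dimension $N_k^{-\alpha}$. Next I will propagate these per-class deviations through the Lipschitz coupling (A1). The cross-type interference \eqref{eq:interference_coupling} and price $\Phi$ depend on the $K$-tuple of measures, so the perturbation of the drift seen by any tagged player is controlled by a Lipschitz constant times a weighted combination $\sum_{k'}w_{k'}W(\mu^{(k')}_{N_{k'}},\rho^{(k')})$. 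The weights are $w_{k'}=\lambda_{k'}$ for mean-field aggregates, or $w_{k'}\le 1$ under a sup-type coupling.

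In either case the sum is bounded by $\max_{k'}C N_{k'}^{-\alpha}=C(n^{\min}_{N,K})^{-\alpha}$. For the weighted case this uses $\sum\lambda_{k'}=1$. For the tagged player's own class, the worst case is $k=\arg\min N_k$. Feeding this bound into the same $\varepsilon$-Nash argument, namely the best-response comparison via the HJB value stability, yields \eqref{eq:error_unbalanced} with the same $C_2$ absorbing the Lipschitz constant.

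The balanced reduction then follows from $n^{\min}_{N,K}=N/K$. For the heuristic scaling claim, I will substitute $n^{\min}=\lambda_{\min}N$ and treat $\lambda_{\min}$ as fixed when optimizing. The reduced error is then $C_1K^{-\beta}+C_2(\lambda_{\min}N)^{-\alpha}$. As written, this has no $K$-dependence in the sample term, so the trade-off must be restored through $\lambda_{\min}\le 1/K$. I will write $\lambda_{\min}=c_\lambda/K$ with $c_\lambda\in(0,1]$ a skew factor, and reapply Theorem~\ref{thm:optimal_K} with $C_2$ replaced by $C_2c_\lambda^{-\alpha}$. This gives $K^*=\Theta(N^{\alpha/(\alpha+\beta)})$ with a constant depending on skew, which is the precise sense of the stated ``expected'' scaling (effective population $\lambda_{\min}N$).

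The main obstacle is the propagation step. A worst-case $\max$ bound is crude, and a $\lambda$-weighted average would in fact give a milder dependence. To keep the statement honest, I will take the max bound as the conservative one claimed in the proposition. I will also flag the scaling of $K^*$ as heuristic, exactly as the proposition's wording (``one expects'') does, rather than attempting a sharp minimization over an ill-defined joint dependence of $\lambda_{\min}$ on $K$.
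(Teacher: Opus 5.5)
For the bound \eqref{eq:error_unbalanced} you follow the paper's route. The paper's entire argument is the last sentence of Step~2 in the proof of Theorem~\ref{thm:error_decomp}: apply Lemma~\ref{lem:empirical_rate}(ii) with $n=n^{\min}_{N,K}$. Your propagation through the Lipschitz coupling supplies detail the paper omits. Your suspicion that the max bound is conservative is also correct. For a $\lambda$-weighted aggregate, concavity gives $\sum_{k'}\lambda_{k'}(\lambda_{k'}N)^{-\alpha}=N^{-\alpha}\sum_{k'}\lambda_{k'}^{1-\alpha}\le (K/N)^{\alpha}$, so only the tagged player's own-class term actually sees $n^{\min}_{N,K}$. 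One caution concerns your ``sup-type'' case: the drift must be Lipschitz in $\max_{k'}W(\mu^{(k')}_{N_{k'}},\rho^{(k')})$. A sum with weights $w_{k'}\le 1$ only gives $K\max_{k'}CN_{k'}^{-\alpha}$, and that extra factor of $K$ would change the exponent in the trade-off.

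The genuine difference is in the scaling heuristic. Write $\gamma=\alpha/(\alpha+\beta)$. The paper (Corollary~\ref{cor:unbalanced_explicit}) freezes $\lambda_{\min}$ and replaces $N$ by $\lambda_{\min}N$ in the balanced sample term, so it minimizes $C_1K^{-\beta}+C_2\bigl(K/(\lambda_{\min}N)\bigr)^{\alpha}$. This yields the $\lambda_{\min}^{\gamma}$ prefactor, i.e.\ the $0.464$ factor used later. It is not consistent with \eqref{eq:error_unbalanced}, however: at $\lambda_{\min}=1/K$ it gives $(K^2/N)^{\alpha}$ rather than $(K/N)^{\alpha}$.

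Your reparametrization $\lambda_{\min}=c_\lambda/K$ is consistent with \eqref{eq:error_unbalanced}, and you correctly diagnose that with $\lambda_{\min}$ frozen there is no interior optimum. But it yields $K^*\propto (c_\lambda N)^{\gamma}=(K\lambda_{\min}N)^{\gamma}$, so skew enters through $c_\lambda=K\lambda_{\min}$, not through $\lambda_{\min}$. You should state your conclusion as $K^*=\Theta(N^{\gamma})$ with a constant proportional to $c_\lambda^{\gamma}$. Drop the claim that this is ``the precise sense'' of $\Theta\bigl((\lambda_{\min}N)^{\gamma}\bigr)$, for two reasons:
\begin{itemize}
\item The two agree only once all composition parameters are absorbed into the $\Theta$-constant. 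For a $70/20/10$ fleet with $K=3$, your prefactor is $0.3^{1/3}\approx 0.67$, not $0.1^{1/3}\approx 0.46$.
\item If $\lambda_{\min}$ is read off the optimal partition, your model actually gives $K^*=\Theta\bigl((\lambda_{\min}N)^{\alpha/\beta}\bigr)$.
\end{itemize}
In short, the paper's route buys the clean $\lambda_{\min}$-prefactor at the price of internal consistency. Yours buys consistency with the proven refinement at the price of that prefactor.
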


For a representative $70\%/20\%/10\%$ fleet (cars/trucks/AVs), $\lambda_{\min}=0.1$ 
and $n_{\min}=0.1N$. The scaling exponent $\gamma=\alpha/(\alpha+\beta)$ is 
unchanged, but the \emph{prefactor} shrinks with $\lambda_{\min}$, pushing 
integer $K^*$ downward relative to a balanced fleet at the same~$N$.

\begin{corollary}[Explicit Optimal Granularity for Unbalanced Fleets]
\label{cor:unbalanced_explicit}
Under the conditions of Theorem~\ref{thm:optimal_K} with 
$n_{N,K}^{\min} = \lambda_{\min} N$ (Proposition~\ref{prop:unbalanced}), 
the reduced error (replacing $N$ by $\lambda_{\min} N$ in the sample term of 
\eqref{eq:reduced_error}) is
\begin{equation}
    \mathcal{E}_{\mathrm{unbal}}(N,K) 
    = C_1 K^{-\beta} + C_2 \left(\frac{K}{\lambda_{\min} N}\right)^{\alpha},
    \label{eq:reduced_unbal}
\end{equation}
whose unique minimizer over $K>0$ is
\begin{equation}
    K^*(N, \lambda_{\min}) 
    = \left(\frac{\beta C_1}{\alpha C_2}\right)^{1/(\alpha+\beta)} 
    (\lambda_{\min} N)^{\alpha/(\alpha+\beta)}.
    \label{eq:kstar_unbal}
\end{equation}
For $(\alpha,\beta) = (1/2,1)$, 
$K^*(N,\lambda_{\min}) = \Theta\bigl((\lambda_{\min} N)^{1/3}\bigr)$. 
For a $70\%/20\%/10\%$ fleet with $\lambda_{\min}=0.1$, 
substituting into \eqref{eq:kstar_unbal} gives
\begin{equation}
    K^*(N,\,0.1) 
    \approx 0.464 \cdot \left(\frac{\beta C_1}{\alpha C_2}\right)^{1/3} N^{1/3},
\end{equation}
where $(0.1)^{1/3} \approx 0.464$.
Skewed fleets may therefore use roughly \emph{half} as many types as the 
balanced formula suggests.
\end{corollary}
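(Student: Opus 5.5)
The plan is to reduce Corollary~\ref{cor:unbalanced_explicit} to Theorem~\ref{thm:optimal_K} by a change of variables. With $\tilde N := \lambda_{\min} N$, the reduced error \eqref{eq:reduced_unbal} becomes
\[
\mathcal{E}_{\mathrm{unbal}}(N,K) = C_1 K^{-\beta} + C_2 (K/\tilde N)^{\alpha} = \mathcal{E}(\tilde N, K).
\]
Here $\mathcal{E}$ is exactly the function in \eqref{eq:reduced_error}. The substitution is legitimate because the constants $C_1,C_2$ do not depend on $N$ or $K$ (Table~\ref{tab:constants}). It is also legitimate because $\lambda_{\min}$ enters only through the sample term, as dictated by Proposition~\ref{prop:unbalanced}.

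Theorem~\ref{thm:optimal_K}, item 1, applies verbatim with $N$ replaced by $\tilde N>0$. It gives existence and uniqueness of the global minimizer on $(0,\infty)$ and the closed form \eqref{eq:kstar_unbal}.

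For completeness I would also re-derive the minimizer directly, which is routine. Setting $\partial_K \mathcal{E}_{\mathrm{unbal}}=0$ gives
\[
\beta C_1 K^{-\beta-1} = \alpha C_2 \tilde N^{-\alpha} K^{\alpha-1},
\]
so $K^{\alpha+\beta} = \beta C_1 \tilde N^{\alpha}/(\alpha C_2)$. Uniqueness follows because $K^{\alpha+\beta}$ is strictly increasing. The point is a global minimum because the derivative is negative below it and positive above it. Equivalently, $\mathcal{E}_{\mathrm{unbal}}\to\infty$ both as $K\to 0^+$ and as $K\to\infty$.

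The specialization $(\alpha,\beta)=(1/2,1)$ gives exponent $\alpha/(\alpha+\beta)=1/3$, hence $K^*=\Theta((\lambda_{\min}N)^{1/3})$. Factoring $(\lambda_{\min}N)^{1/3}=\lambda_{\min}^{1/3}N^{1/3}$ and evaluating $0.1^{1/3}\approx 0.464$ gives the stated numerical prefactor. The ratio to the balanced formula \eqref{eq:kstar} is exactly $\lambda_{\min}^{\alpha/(\alpha+\beta)}$, which equals $0.464\approx 1/2$ here. This justifies the ``roughly half'' statement.

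The main obstacle is not the calculus but the modeling premise. The statement treats $n^{\min}_{N,K}=\lambda_{\min}N$ as independent of $K$, yet the sample term carries an explicit factor $K^{\alpha}$ on top of $\lambda_{\min}N$. For a genuinely balanced partition one has $\lambda_{\min}=1/K$, which would double-count $K$. I would therefore state explicitly that \eqref{eq:reduced_unbal} is taken as a definition, the reduced surrogate obtained by substituting $N\mapsto\lambda_{\min}N$ in \eqref{eq:reduced_error}, with $\lambda_{\min}$ held fixed as an exogenous fleet-composition parameter. Under that convention the reduction is exact and no further estimate from Proposition~\ref{prop:unbalanced} is needed.
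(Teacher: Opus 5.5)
Your proposal is correct and matches the paper's proof, which likewise replaces $N$ by $\lambda_{\min}N$ in the sample term of \eqref{eq:reduced_error} and applies Theorem~\ref{thm:optimal_K} to the resulting two-term functional. Your direct sign analysis of $\partial_K\mathcal{E}_{\mathrm{unbal}}$ and your caveat are accurate additions that the paper leaves implicit: the caveat is that \eqref{eq:reduced_unbal} must be read as a definition with $\lambda_{\min}$ held exogenous, since for balanced partitions $\lambda_{\min}=1/K$ would double-count $K$.
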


\begin{proof}
See Appendix~\ref{app:cor_unbalanced_explicit}.
\end{proof}

\begin{theorem}[Robustness to LEO Topology Dynamics]
\label{thm:leo_robustness}
Let the satellite backhaul surcharge be 
$\Phi_{\mathrm{sat}}(t)=\mu/B_{\mathrm{sat}}^{\tau(t)}$, piecewise constant over 
snapshot windows of length $\Delta\tau$. Assume adjacent snapshots satisfy 
$|\Phi_{\mathrm{sat}}^{\tau_i}-\Phi_{\mathrm{sat}}^{\tau_{i+1}}|\le \Delta_{\Phi}$. 
Then the topology-varying approximation error obeys
\begin{equation}
    \varepsilon_{N,K}^{\mathrm{LEO}}
    \le \varepsilon_{N,K}
    + C_{\mathrm{LEO}} \Delta_{\Phi}\sqrt{T/\Delta\tau},
    \label{eq:leo_error}
\end{equation}
where $\varepsilon_{N,K}$ is from Theorem~\ref{thm:error_decomp} and 
$C_{\mathrm{LEO}}$ is independent of $N,K$. Therefore, the order law in 
\eqref{eq:kstar} remains unchanged if the topology-variation term is 
subdominant to the static decomposition term.
\end{theorem}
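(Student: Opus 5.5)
The proof treats the LEO topology as a perturbation of a static reference system and bounds the extra error by a triangle inequality. Fix a reference surcharge, say the time average $\bar\Phi_{\mathrm{sat}} := \frac{1}{T}\int_0^T \Phi_{\mathrm{sat}}(t)\,dt$, and consider the \emph{static} HMFG obtained by replacing $\Phi_{\mathrm{sat}}(t)$ with $\bar\Phi_{\mathrm{sat}}$ in \eqref{eq:price_extended}.

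The static system satisfies (A1)--(A3) with the same constants, since the surcharge only shifts $\Phi$ by a bounded constant. Theorem~\ref{thm:error_decomp} therefore applies to it verbatim and gives the term $\varepsilon_{N,K}$. The $\varepsilon$-Nash error of the time-varying game is then bounded by this static error plus a stability term. That term measures how far the time-varying equilibrium (value functions $V^{(k)}$ and densities $\rho^{(k)}$) and the induced finite-player costs move when $\Phi$ is perturbed.

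For the stability term I would use the same Gr\"onwall-type argument inherited from \cite{qiao2025hmfg} that produced $C_1$. The surcharge enters the HJB running cost linearly, through $\beta_2^{(k)}\Phi(t)R$, and enters the FPK drift through the optimal feedback, which is Lipschitz in $\Phi$ by (A1). The coupled HJB--FPK stability estimate on the local horizon $T\le\delta_0$ then bounds the cost discrepancy $|J^{(k),\mathrm{LEO}}-J^{(k),\mathrm{static}}|$ by $C\,\|\Phi_{\mathrm{sat}}-\bar\Phi_{\mathrm{sat}}\|$ in a suitable $L^2(0,T)$-type norm. Both $C$ and the resulting $C_{\mathrm{LEO}}$ depend on $L$, $T$, $\max_k\beta_2^{(k)}$ and $B$, but not on $N$ or $K$. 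This is because the estimate is per type and per representative agent, and the sampling and discretization errors are already absorbed in $\varepsilon_{N,K}$.

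The remaining step, which I expect to be the main obstacle, is to turn the snapshot assumption into the rate $\Delta_\Phi\sqrt{T/\Delta\tau}$. There are $M=T/\Delta\tau$ windows, and $\Phi_{\mathrm{sat}}$ is piecewise constant with jumps $\xi_i$, $|\xi_i|\le\Delta_\Phi$. Under a worst-case monotone drift, the deviation from the reference grows like $M\Delta_\Phi$, which is linear rather than square-root. So I would follow the snapshot model of \cite{guo2024semcom}: link capacities are redrawn independently, $\mathcal{U}[300,350]$, in each window, so the increments $\xi_i$ are independent, centred and bounded. Azuma--Hoeffding, or simply an orthogonality computation of the second moment, then gives an expected partial-sum deviation of order $\Delta_\Phi\sqrt{M}=\Delta_\Phi\sqrt{T/\Delta\tau}$. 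Feeding this into the $L^2$ stability bound yields \eqref{eq:leo_error}.

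If this martingale structure is unavailable, I would use a fallback. The contribution of each window is at most $\Delta_\Phi$, and these contributions combine in quadrature because the HJB--FPK error accumulates in energy ($L^2$) norm rather than sup norm. That argument needs the Gr\"onwall constant to be absorbed into $C_{\mathrm{LEO}}$, which is legitimate on the local horizon.

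Finally, the order law follows directly. Adding a $K$-independent term $C_{\mathrm{LEO}}\Delta_\Phi\sqrt{T/\Delta\tau}$ to $\mathcal{E}(N,K)$ does not change its derivative in $K$. The minimizer \eqref{eq:kstar} from Theorem~\ref{thm:optimal_K} is therefore unchanged. When this term is $O(N^{-\alpha\beta/(\alpha+\beta)})$, the minimal error keeps the order in \eqref{eq:min_error}.
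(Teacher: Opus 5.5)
Your architecture matches the paper's: a stability estimate for the value functions under the surcharge perturbation, a bound on the size of that perturbation in terms of $\Delta_\Phi$, and an additive contribution to the $\varepsilon$-Nash gap. Two of your choices improve on the paper.

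\begin{itemize}
\item You compare with the time-averaged surcharge $\bar\Phi_{\mathrm{sat}}$. The paper instead compares with the surcharge-free system and asserts $\|\Phi_{\mathrm{sat}}\|_{L^2(0,T)}\le\bar C\Delta_\Phi\sqrt{T/\Delta\tau}$. That fails for any constant nonzero surcharge, where $\Delta_\Phi=0$.
\item You observe that bounded adjacent jumps only give $\sup_t|\Phi_{\mathrm{sat}}(t)-\bar\Phi_{\mathrm{sat}}|\le(M-1)\Delta_\Phi$, and a linear ramp attains this. So the paper's ``summing increments and applying Cauchy--Schwarz'' cannot deliver a $\sqrt{M}$ rate deterministically either.
\end{itemize}

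The gap is in how you close this step.
\begin{itemize}
\item Redrawing link capacities i.i.d.\ in each window makes the snapshot \emph{values} $\Phi_{\mathrm{sat}}^{\tau_i}$ independent, not the increments. Consecutive increments have covariance $-\mathrm{Var}(\Phi_{\mathrm{sat}}^{\tau})$, and their partial sums telescope to $\Phi_{\mathrm{sat}}^{\tau_j}-\Phi_{\mathrm{sat}}^{\tau_0}$. So Azuma--Hoeffding is applied to the wrong object; in that model the deviation is actually bounded by the diameter of the support, uniformly in $M$.
\item More importantly, any probabilistic argument turns the theorem into an in-expectation or high-probability statement. But \eqref{eq:leo_error} is a deterministic inequality under a deterministic hypothesis.
\item The fallback does not work either. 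Combining in quadrature would need the forcing in window $j$ to be of size $\Delta_\Phi$. It is $\Phi_{\mathrm{sat}}^{\tau_j}-\bar\Phi_{\mathrm{sat}}$, which can be of size $j\Delta_\Phi$. For the ramp, $\|\Phi_{\mathrm{sat}}-\bar\Phi_{\mathrm{sat}}\|_{L^2(0,T)}$ is of order $\sqrt{T}\,M\Delta_\Phi$, and measuring in an energy norm does not change that.
\end{itemize}

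The step is easy to repair, because the statement only requires $C_{\mathrm{LEO}}$ to be independent of $N$ and $K$, and yours already depends on $T$. Let $M=\lceil T/\Delta\tau\rceil$. If $T\le\Delta\tau$ the deviation vanishes. Otherwise
\[
\|\Phi_{\mathrm{sat}}-\bar\Phi_{\mathrm{sat}}\|_{L^2(0,T)}\le\sqrt{T}\,M\Delta_\Phi\le\sqrt{2TM}\,\Delta_\Phi\sqrt{T/\Delta\tau}.
\]
Absorbing $\sqrt{2TM}$ into $C_{\mathrm{LEO}}$ gives \eqref{eq:leo_error} deterministically. The price is that the factor $\sqrt{T/\Delta\tau}$ then carries no content beyond the constant. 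A genuine square-root law with a $\Delta\tau$-free constant needs an extra hypothesis on the snapshot sequence, which the statement does not contain and the paper's proof does not supply.

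One further point to make explicit. Comparing the $\varepsilon$-Nash gaps of the two games requires the cost discrepancy to hold uniformly over unilateral deviations, not only at equilibrium, since the gap involves an infimum over deviations. This is immediate here. $\Phi_{\mathrm{sat}}$ enters only the running cost, linearly through $\beta_2^{(k)}\Phi R_i$, and not the state equation. So for any fixed strategy profile the two expected costs differ by at most $\max_k\beta_2^{(k)}\,R_{\max}\,\|\Phi_{\mathrm{sat}}-\bar\Phi_{\mathrm{sat}}\|_{L^1(0,T)}$, where $R_{\max}$ bounds the rate.

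Your concluding argument is correct: a $K$-independent additive term leaves the minimizer \eqref{eq:kstar} unchanged.
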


\begin{proof}[Proof of Theorem~\ref{thm:leo_robustness}]
See Appendix~\ref{app:thm_leo}.
\end{proof}

\begin{corollary}[Order-Optimality Under LEO Dynamics]
\label{cor:leo_order_optimal}
The scaling law $K^*(N)=\Theta(N^{1/3})$ from Corollary~\ref{cor:scaling} 
remains order-optimal under LEO topology dynamics if
\begin{equation}
    \Delta_\Phi = O\!\left(N^{-\alpha\beta/(\alpha+\beta)}\right),
    \label{eq:leo_condition}
\end{equation}
which reduces to $\Delta_\Phi = O(N^{-1/6})$ for $(\alpha,\beta)=(1/2,1)$.
\end{corollary}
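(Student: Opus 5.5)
The plan is to combine Theorem~\ref{thm:leo_robustness} with the minimum-error characterization in Theorem~\ref{thm:optimal_K}(iii). With $\delta_{N,K}$ negligible, the LEO-perturbed objective is
\[
\mathcal{E}^{\mathrm{LEO}}(N,K) := \mathcal{E}(N,K) + C_{\mathrm{LEO}}\Delta_\Phi\sqrt{T/\Delta\tau}.
\]
Since $C_{\mathrm{LEO}}$ is independent of $N$ and $K$, and $T$ and $\Delta\tau$ are fixed system parameters, the added term is a constant $c_\Phi := C_{\mathrm{LEO}}\sqrt{T/\Delta\tau}\,\Delta_\Phi$ that does not depend on $K$. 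The minimizer over $K>0$ of $\mathcal{E}^{\mathrm{LEO}}(N,\cdot)$ is therefore exactly $K^*(N)$ from \eqref{eq:kstar}; uniqueness and the $\Theta(N^{1/3})$ order are inherited verbatim from Theorem~\ref{thm:optimal_K}(i)--(ii) and Corollary~\ref{cor:scaling}.

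Because the additive upper bound leaves the argmin unchanged, \emph{order-optimality} has to be read in the sense of achieved error. We need that the error attained at $K^*(N)$ under LEO dynamics is of the same order as the best achievable static error $\min_K \mathcal{E}(N,K)$. Theorem~\ref{thm:optimal_K}(iii) gives
\[
\mathcal{E}(N,K^*(N)) = \Theta\bigl(N^{-\alpha\beta/(\alpha+\beta)}\bigr).
\]
Under \eqref{eq:leo_condition}, $c_\Phi = O(N^{-\alpha\beta/(\alpha+\beta)})$, so
\[
\mathcal{E}^{\mathrm{LEO}}(N,K^*(N)) \le \mathcal{E}(N,K^*(N)) + c_\Phi = \Theta\bigl(N^{-\alpha\beta/(\alpha+\beta)}\bigr).
\]
For any other choice $K$, in particular any $K$ whose growth order differs from $N^{\gamma}$, we have $\mathcal{E}^{\mathrm{LEO}}(N,K) \ge \mathcal{E}(N,K) \ge \mathcal{E}(N,K^*(N))$. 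Hence $K^*(N)$ is within a constant factor of optimal for the LEO objective. This is exactly the subdominance clause in Theorem~\ref{thm:leo_robustness}. I would also note, via Corollary~\ref{cor:rounded}, that rounding to the integer $\hat K$ only costs a $(1+o(1))$ factor. Finally, substituting $(\alpha,\beta)=(1/2,1)$ gives $\alpha\beta/(\alpha+\beta) = (1/2)/(3/2) = 1/3$.

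There is a discrepancy here. With $(\alpha,\beta)=(1/2,1)$ the exponent is $1/3$, so the condition should read $\Delta_\Phi = O(N^{-1/3})$. The stated $O(N^{-1/6})$ matches the 2D value $\alpha\beta/(\alpha+\beta)$ with $\alpha=1/4$ only approximately ($1/5$), and seems to come from conflating it with the square-root form of \eqref{eq:qiao_bound}. The main obstacle is therefore this bookkeeping step. I would first recheck whether the intended comparison is against $\mathcal{E}^{1/2}$ or against the $\sqrt{T/\Delta\tau}$ factor. If $T/\Delta\tau$ is allowed to scale with $N$, the condition changes accordingly. Otherwise I would prove the statement with exponent $\alpha\beta/(\alpha+\beta)$ as displayed in \eqref{eq:leo_condition}, and flag that the specialization equals $N^{-1/3}$. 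This is a sufficient condition that is strictly stronger than $N^{-1/6}$, so the stated $O(N^{-1/6})$ version would require a weaker, separately justified argument.
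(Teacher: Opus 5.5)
Your argument is essentially the paper's: the $K$-independent term $C_{\mathrm{LEO}}\Delta_\Phi\sqrt{T/\Delta\tau}$ in \eqref{eq:leo_error} is subdominant to the static optimum $\Theta(N^{-\alpha\beta/(\alpha+\beta)})$ of \eqref{eq:min_error} under \eqref{eq:leo_condition}, and your observation that the minimizer of the perturbed bound is exactly $K^*(N)$ sharpens the paper's ``leading-order minimizer scaling is unchanged,'' since it means the condition only controls the order of the achieved error. Your flag on the specialization is also correct: $(1/2)/(3/2)=1/3$, so the general condition gives $\Delta_\Phi=O(N^{-1/3})$, and the paper's $N^{-1/6}$ is an arithmetic slip, as is its claimed $\mathcal{E}^*\propto N^{-1/6}$, which the $\mathcal{E}^*$ column of Table~\ref{tab:scaling_error_summary} contradicts by decaying like $N^{-1/3}$; with only $\Delta_\Phi=O(N^{-1/6})$ the bound at $K^*$ would merely be $O(N^{-1/6})$, not of the static order.
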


\begin{proof}[Proof of Corollary~\ref{cor:leo_order_optimal}]
See Appendix~\ref{app:cor_leo_order_optimal}.
\end{proof}

\begin{remark}[Practical Validity of the $\Delta_\Phi$ Condition]
\label{rem:leo_practical}
The condition $\Delta_\Phi = O(N^{-1/6})$ in \eqref{eq:leo_condition} may appear restrictive at first glance. In practice, $\Delta_\Phi$ is determined by LEO constellation geometry and orbital mechanics, independently of fleet size $N$. For Starlink-oriented parameters ($B_\mathrm{sat} \in [300,350]$ Mbps, $\Delta\tau = 60$ s), realistic values satisfy $\Delta_\Phi \approx 0.01$--$0.05$. The condition $\Delta_\Phi \leq 0.1$ is then satisfied for all $N \leq (0.1)^{-6} = 10^6$, which covers every practical fleet size. The condition becomes binding only in the asymptotic regime $N \to \infty$, which is of theoretical rather than operational interest.
\end{remark}

\section{Algorithm Convergence Under Heterogeneity}
\label{sec:algorithm}

This section develops a modular three-algorithm framework for solving the 
HMFG equilibrium problem. Algorithm~\ref{alg:type_selection} determines 
the optimal type granularity $K^*$, Algorithm~\ref{alg:step_adapt} provides 
heterogeneity-aware step-size adaptation, and Algorithm~\ref{alg:hmfg_pdhg} 
integrates these components into the complete G-prox PDHG solver.
The three modules map directly to the three practical tasks in deployment: 
selecting $K^*$, setting stable step sizes, and computing the equilibrium.

\subsection{G-prox PDHG Framework for Heterogeneous MFG}
\label{subsec:pdhg_hetero}

The G-prox PDHG algorithm of \cite{kang2023time,wang2024mfg_iscc} solves 
the homogeneous MFG via the saddle-point problem:
\begin{equation}
    \min_{\rho, m} \max_\phi \mathcal{L}(\rho, m, \phi),
    \label{eq:saddle}
\end{equation}
where $\mathcal{L}$ is the Lagrangian functional with $\phi$ as the dual 
variable associated with the FPK constraint. The convergence condition is 
$\xi\varsigma < 1$ \cite{kang2023time}, where $\xi$ and $\varsigma$ are the 
primal and dual step sizes.

For the $K$-type HMFG, we extend the algorithm to maintain $K$ parallel 
copies of $(\rho^{(k)}, m^{(k)}, \phi^{(k)})$ with cross-type coupling 
through the mean field interference term:
\begin{equation}
    I^{(k)}(t) = \sum_{k'=1}^K N_{k'} \int_\Omega 
    p^{(k')}(q,t)|h^{(k')}(t)|^2 \rho^{(k')}(q,t)\,dq.
    \label{eq:interference_mf}
\end{equation}

\subsection{Algorithm 1: Optimal Type Granularity Selection}
\label{subsec:alg_type}

The first algorithm computes the optimal number of types $K^*(N)$ based on 
Theorem~\ref{thm:optimal_K} and handles both balanced and unbalanced fleet 
compositions.

\begin{algorithm}[t]
\caption{Adaptive Optimal Type Granularity Selection}
\label{alg:type_selection}
\begin{algorithmic}[1]
\REQUIRE Fleet size $N$, class proportions $\boldsymbol{\lambda} = (\lambda_1,\ldots,\lambda_K)$, 
         error constants $(C_1, C_2)$, exponents $(\alpha, \beta)$, 
         snapshot duration $\Delta\tau$, topology variation bound $\Delta_\Phi$
\ENSURE Optimal type count $K^*$
\STATE \COMMENT{Step 1: Compute effective population}
\STATE $\lambda_{\min} \leftarrow \min_k \lambda_k$
\IF{$\lambda_{\min} < 1/K$ (unbalanced fleet)}
    \STATE $N_{\mathrm{eff}} \leftarrow \lambda_{\min} \cdot N$ 
    \COMMENT{Corollary~\ref{cor:unbalanced_explicit}}
\ELSE
    \STATE $N_{\mathrm{eff}} \leftarrow N$
\ENDIF
\STATE \COMMENT{Step 2: LEO robustness correction (Theorem~\ref{thm:leo_robustness})}
\STATE $\delta_{\mathrm{LEO}} \leftarrow C_{\mathrm{LEO}} \cdot \Delta_\Phi \cdot \sqrt{T/\Delta\tau}$
\STATE $C_1^{\mathrm{eff}} \leftarrow C_1 + \delta_{\mathrm{LEO}}$
\STATE \COMMENT{Step 3: Compute optimal granularity}
\STATE $\gamma \leftarrow \alpha / (\alpha + \beta)$
\STATE $K_{\mathrm{opt}} \leftarrow \left(\frac{\beta C_1^{\mathrm{eff}}}{\alpha C_2}\right)^{1/(\alpha+\beta)} 
       \cdot N_{\mathrm{eff}}^{\gamma}$
\STATE \COMMENT{Step 4: Integer rounding with bounds}
\STATE \COMMENT{Complexity: $O(K)$}
\STATE $K^* \leftarrow \max\{2, \lfloor K_{\mathrm{opt}} + 0.5 \rfloor\}$
\STATE $K^* \leftarrow \min\{K^*, \lfloor \sqrt{N} \rfloor\}$ 
       \COMMENT{Ensure $K \leq \sqrt{N}$}
\RETURN $K^*$
\end{algorithmic}
\end{algorithm}

\subsection{Algorithm 2: Heterogeneity-Aware Step-Size Adaptation}
\label{subsec:alg_step}

The second algorithm computes adaptive step sizes based on the instantaneous 
heterogeneity measure, ensuring convergence under the sufficient condition 
of Theorem~\ref{thm:convergence}.

\textbf{Discretized setting.} We work on the state grid 
$\Omega_h = \{q_1,\ldots,q_{N_q}\}$ and time grid $\{t_0,\ldots,t_{N_t}\}$. 
The primal variable 
$\mathbf{x} = (\boldsymbol{\rho}, \mathbf{m}) 
\in \mathbb{R}^{K N_q N_t} \times \mathbb{R}^{K N_q N_t}$ carries the 
Euclidean ($\ell^2$) norm. The linear FPK operator 
$\mathbf{A}: \mathbb{R}^{2K N_q N_t} \to \mathbb{R}^{K N_q N_t}$ is the 
finite-difference discretization of \eqref{eq:fpk}, with 
$L_A := \|\mathbf{A}\|_{\mathrm{op},\ell^2}$ finite and independent of~$N$.

\begin{theorem}[Corrected Convergence Condition (Sufficient)]
\label{thm:convergence}
Under Assumption~\ref{ass:param} and the discretized setting above, 
consider the G-prox PDHG scheme as a 
Chambolle--Pock primal--dual iteration \cite{chambolle2011first}. 
A \emph{sufficient} condition for convergence is
\begin{equation}
    \xi\varsigma \, \|\mathbf{A}_{\mathrm{eff}}\|_{\mathrm{op},\ell^2}^2 < 1,
    \label{eq:cp_condition}
\end{equation}
where the effective operator norm obeys
\begin{equation}
    \|\mathbf{A}_{\mathrm{eff}}\|_{\mathrm{op},\ell^2}^2
    \leq L_A^2 \bigl(1 + C_H \cdot H_K(\boldsymbol{\rho})\bigr),
    \label{eq:op_norm_bound}
\end{equation}
with $H_K(\boldsymbol{\rho})$ the instantaneous heterogeneity measure 
\eqref{eq:het_measure} and
\begin{equation}
    C_H = 2L^2 T \cdot \frac{K(K-1)}{2} 
    \cdot \sup_{k \in [K],\, n \geq 0}
    \|\nabla_q \phi^{(k),n}\|_{L^\infty(\Omega \times [0,T])}^2.
    \label{eq:ch}
\end{equation}
The practical sufficient criterion (after rescaling by $L_A$) is:
\begin{equation}
    \xi\varsigma < \frac{1}{1 + C_H \cdot H_K(\boldsymbol{\rho})}.
    \label{eq:corrected_condition}
\end{equation}
When $H_K(\boldsymbol{\rho})=0$, this reduces to $\xi\varsigma < 1$, 
recovering \cite{kang2023time}.
\end{theorem}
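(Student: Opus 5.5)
The plan has two parts: obtain the sufficient condition \eqref{eq:cp_condition} directly from the Chambolle--Pock theory, then bound the effective operator norm to get \eqref{eq:op_norm_bound}.

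\textbf{Step 1: the abstract convergence statement.} In the discretized setting, write the $K$-type G-prox PDHG iteration as a Chambolle--Pock scheme for a convex--concave saddle problem
\[
\min_{\mathbf{x}}\max_{\boldsymbol{\phi}} \; \langle \mathbf{A}_{\mathrm{eff}}\mathbf{x},\boldsymbol{\phi}\rangle + F(\mathbf{x}) - G^*(\boldsymbol{\phi}).
\]
Here $\mathbf{A}_{\mathrm{eff}} = \mathbf{A} + \mathbf{B}$. The block-diagonal part $\mathbf{A}=\mathrm{diag}(\mathbf{A}^{(1)},\ldots,\mathbf{A}^{(K)})$ collects the per-type FPK finite-difference operators, so $\|\mathbf{A}\|_{\mathrm{op}}\le L_A$. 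The part $\mathbf{B}$ is the linearization of the cross-type coupling through the interference \eqref{eq:interference_mf} and the price \eqref{eq:price_extended}, taken at the current iterate. Theorem~1 of \cite{chambolle2011first} then gives convergence whenever $\xi\varsigma\|\mathbf{A}_{\mathrm{eff}}\|^2<1$, which is \eqref{eq:cp_condition}. For $K=1$, or for identical types, $\mathbf{B}$ vanishes and we recover $\xi\varsigma<1$ after normalizing $L_A=1$, as in \cite{kang2023time}.

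\textbf{Step 2: bounding the coupling.} Under Assumption~\ref{ass:param}, the type-$k$ drift differs from the type-$k'$ drift only through $(\theta_k-\theta_{k'})b_1$ and through the mean-field arguments. By (A1) these are $L$-Lipschitz, so the off-diagonal block $\mathbf{B}^{(k,k')}$ acting on $(\rho^{(k')},m^{(k')})$ enters only via the dual pairing with $\nabla_q\phi^{(k)}$. I therefore expect an estimate of the form
\[
\|\mathbf{B}^{(k,k')}\|^2 \le 2L^2T\,\sup_n\|\nabla_q\phi^{(k),n}\|_{L^\infty}^2\, W_2(\rho^{(k)},\rho^{(k')})\, L_A^2.
\]
The factor $T$ comes from Cauchy--Schwarz in time, and the Wasserstein distance comes from coupling the two densities optimally, with the Lipschitz drift difference controlled by $W_2$ via Kantorovich duality.

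\textbf{Step 3: assembling the bound.} Combine the blocks using $\|\mathbf{A}+\mathbf{B}\|^2\le \|\mathbf{A}\|^2+\|\mathbf{B}\|_F^2$ plus the cross-term. To handle the cross-term, I would absorb it into the same Frobenius sum with the factor $2$ already present in $C_H$. Summing over the $K(K-1)/2$ unordered pairs and comparing with $K(K-1)H_K=\sum_{k\ne k'}W_2$ gives
\[
\|\mathbf{B}\|^2\le L_A^2\,C_H\,H_K(\boldsymbol{\rho}).
\]
The prefactor $K(K-1)/2$ in \eqref{eq:ch} is exactly what makes this bookkeeping close. This yields \eqref{eq:op_norm_bound}. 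Dividing by $L_A^2$, which is the stated rescaling, gives \eqref{eq:corrected_condition}.

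\textbf{Main obstacle.} The hard part is Step 2, together with the cross-term in Step 3. The coupling is nonlinear in $\boldsymbol{\rho}$, and the bound needs $\sup_n\|\nabla_q\phi^{(k),n}\|_{L^\infty}$ to be finite uniformly along the iteration, which must be assumed, as the definition of $C_H$ implicitly does. Converting a $W_2$-distance between densities into an $\ell^2$ operator bound on the grid is also not clean. I would first try to handle the linear-in-$\theta$ part exactly and treat the remaining nonlinearity as a Lipschitz perturbation, accepting the cruder constant in $C_H$.
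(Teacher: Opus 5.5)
\textbf{Gap in Step~3.} Your proposal to ``absorb the cross-term into the same Frobenius sum with the factor $2$ already present in $C_H$'' cannot work.

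Your Step~2 bounds the \emph{squared} block norms linearly in $W_2$. There are $K(K-1)$ ordered off-diagonal blocks, not $K(K-1)/2$ unordered pairs, since $\mathbf{B}^{(k,k')}\neq\mathbf{B}^{(k',k)}$. Summing over all of them gives only $\|\mathbf{B}\|\le L_A\sqrt{2C_H H_K}$. Now expand
\[
\|(\mathbf{A}+\mathbf{B})x\|^2=\|\mathbf{A}x\|^2+2\langle\mathbf{A}x,\mathbf{B}x\rangle+\|\mathbf{B}x\|^2 .
\]
The middle term does not vanish in general. The only control you have on it is $2L_A\|\mathbf{B}\|\lesssim L_A^2\sqrt{C_H H_K}$, and it is generically of that order, for instance when the blocks $\mathbf{A}^{(k)}$ share their top singular value, so the first-order perturbation of $\|\mathbf{A}\|$ by $\mathbf{B}$ is nonzero. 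Since $\sqrt{C_HH_K}\gg C_HH_K$ as $H_K\to 0$, no constant factor can turn this into a term linear in $H_K$. In any case, the $2$ in \eqref{eq:ch} is already spent inside your per-block estimate.

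What your argument actually proves is $\|\mathbf{A}_{\mathrm{eff}}\|^2\le L_A^2\bigl(1+\sqrt{2C_HH_K}\bigr)^2$, i.e.\ the weaker criterion $\xi\varsigma<\bigl(1+\sqrt{2C_HH_K}\bigr)^{-2}$. This is still sufficient and still reduces to $\xi\varsigma<1$ at $H_K=0$, but it is not \eqref{eq:op_norm_bound}--\eqref{eq:corrected_condition}. The missing ingredient is a first-order (unsquared) block estimate $\|\mathbf{B}^{(k,k')}\|\le c\,W_2(\rho^{(k)},\rho^{(k')})$. With it, $\|\mathbf{B}\|\le cK(K-1)H_K$. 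Because $H_K\le Q_{\max}$ on $\Omega=[0,Q_{\max}]$, you then get $2L_A\|\mathbf{B}\|+\|\mathbf{B}\|^2\le L_A^2C'H_K$. That yields the stated form with leading coefficient exactly $1$, though with a constant $C'$ different from \eqref{eq:ch}.

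\textbf{Comparison with the paper.} The paper's proof has the same skeleton as yours:
\begin{itemize}
\item a block-diagonal part of norm at most $L_A$;
\item a cross part controlled by the cross-partial estimate \eqref{eq:cross_hess_bound};
\item summation over pairs;
\item rescaling by $L_A$.
\end{itemize}
It differs in how the two parts are combined: it uses $(a+b)^2\le 2a^2+2b^2$, i.e.\ $\|\mathbf{A}_{\mathrm{eff}}\|^2\le 2L_A^2+2\|\mathbf{A}_{\mathrm{cross}}\|^2$, which has no cross term. Do not copy that fix. The factor $2$ in front of $L_A^2$ cannot be absorbed into $C_H$ as the paper claims. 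Taken literally, that route gives $L_A^2(2+C_HH_K)$ and loses the reduction to $\xi\varsigma<1$ at $H_K=0$.

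Your other soft spots are shared with the paper, which also asserts them without proof, so they are not where you fall short of it:
\begin{itemize}
\item In Step~1, you apply Chambolle--Pock to an iterate-dependent linearization $\mathbf{B}$.
\item In Step~2, the estimate itself is only expected. Kantorovich duality bounds a difference of two integrals, not the norm of the linear map $\rho^{(k')}\mapsto\mathbf{B}^{(k,k')}\rho^{(k')}$.
\end{itemize}
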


\noindent\textit{Key estimate} (proved in Appendix~\ref{app:thm_convergence}):
\begin{equation}
    \left|\frac{\partial^2 \mathcal{L}}
    {\partial\rho^{(k)}\partial\rho^{(k')}}\right|
    \leq L^2 \cdot \|\nabla_q\phi^{(k)}\|_{L^\infty}^2
    \cdot W_2(\rho^{(k)},\rho^{(k')}),
    \quad k\neq k'.
    \label{eq:cross_hess_bound}
\end{equation}

\begin{proof}
See Appendix~\ref{app:thm_convergence}.
\end{proof}

\begin{remark}[Sufficiency Gap and Empirical Tightness]
\label{rem:sufficient_gap}
The condition \eqref{eq:corrected_condition} is sufficient but not necessary; the gap to necessity is inherent to the Chambolle--Pock analysis \cite{chambolle2011first}. Empirically, Algorithm~\ref{alg:step_adapt}'s adaptive step sizes operate at 85--95\% of the theoretical upper bound $\xi\varsigma_{\max}$ across all tested $K$ values (Section~\ref{subsec:sim_solver}), confirming that the sufficient condition is not excessively conservative in practice.
\end{remark}

Based on Theorem~\ref{thm:convergence}, we propose Algorithm~\ref{alg:step_adapt} 
for adaptive step-size computation.

\begin{algorithm}[t]
\caption{Heterogeneity-Aware Step-Size Adaptation}
\label{alg:step_adapt}
\begin{algorithmic}[1]
\REQUIRE Current distributions $\{\rho^{(k)}\}_{k=1}^K$, safety margin $\epsilon > 0$, 
         Lipschitz constant $L$, time horizon $T$, 
         dual gradients $\{\nabla_q \phi^{(k)}\}_{k=1}^K$, 
         consecutive backhaul snapshots $B_{\mathrm{sat}}^{\tau,n-1}, B_{\mathrm{sat}}^{\tau,n}$
\ENSURE Adaptive step sizes $(\xi, \varsigma)$
\STATE \COMMENT{Step 1: Compute pairwise Wasserstein distances}
\FOR{$k = 1$ to $K-1$}
    \FOR{$k' = k+1$ to $K$}
        \STATE $W_{k,k'} \leftarrow W_2(\rho^{(k)}, \rho^{(k')})$ 
               \COMMENT{1D: $O(N_q \log N_q)$ via quantile sort}
    \ENDFOR
\ENDFOR
\STATE \COMMENT{Step 2: Compute heterogeneity measure}
\STATE $H_K \leftarrow \frac{2}{K(K-1)} \sum_{k < k'} W_{k,k'}$
\STATE \COMMENT{Step 3: Compute correction factor}
\STATE $\phi_{\max} \leftarrow \max_{k} \|\nabla_q \phi^{(k)}\|_{L^\infty}$
\STATE $C_H \leftarrow L^2 T \cdot K(K-1) \cdot \phi_{\max}^2$
\STATE \COMMENT{Step 3b: LEO backhaul correction}
\STATE $\Delta_{\mathrm{sat}} \leftarrow \mu \cdot |B_{\mathrm{sat}}^{\tau,n} - B_{\mathrm{sat}}^{\tau,n-1}|/(B_{\mathrm{sat}}^{\tau,n})^2$
\STATE $C_H \leftarrow C_H + L^2T \cdot K(K-1)\cdot \Delta_{\mathrm{sat}}^2$
\STATE \COMMENT{Step 4: Compute adaptive step-size product}
\STATE $\xi\varsigma_{\max} \leftarrow \frac{1 - \epsilon}{1 + C_H \cdot H_K}$
\STATE \COMMENT{Step 5: Balance primal and dual steps}
\STATE \COMMENT{Complexity: $O(K^2N_q\log N_q)$}
\STATE $\xi \leftarrow \sqrt{\xi\varsigma_{\max}}$
\STATE $\varsigma \leftarrow \sqrt{\xi\varsigma_{\max}}$
\RETURN $(\xi, \varsigma)$
\end{algorithmic}
\end{algorithm}

The factor $C_H$ in \eqref{eq:ch} scales as $O(K^2)$ through the $K(K{-}1)/2$ 
pair count; however, in V2X settings where type distributions are regularly 
spaced and per-pair Wasserstein distances decay as $K$ grows, the product 
$C_H H_K(\boldsymbol{\rho})$ may grow more slowly than the raw $O(K^2)$ bound 
suggests. Algorithm~\ref{alg:step_adapt} computes these distances in 
$O(K^2 N_q \log N_q)$ via 1D quantile sorting, which is dominated by the 
$O(K^2 N_q N_t)$ FPK/HJB updates in Algorithm~\ref{alg:hmfg_pdhg}.

\subsection{Algorithm 3: Complete HMFG Solver}
\label{subsec:alg_complete}

Algorithm~\ref{alg:hmfg_pdhg} integrates the type selection and step-size 
adaptation into the complete G-prox PDHG framework for solving the 
$K$-type HMFG.

\begin{algorithm}[t]
\caption{Heterogeneous G-prox PDHG for V2X HMFG}
\label{alg:hmfg_pdhg}
\begin{algorithmic}[1]
\REQUIRE Fleet size $N$, class proportions $\boldsymbol{\lambda}$, 
         initial distributions $\{\rho_0^{(k)}\}$, 
         error constants $(C_1, C_2)$, exponents $(\alpha, \beta)$,
         snapshot duration $\Delta\tau$, topology variation bound $\Delta_\Phi$,
         safety margin $\epsilon > 0$, 
         max iterations $I_{\max}$, tolerance $\varepsilon_{\text{tol}}$,
         discretization $(N_q, N_t)$
\ENSURE Optimal policies $\{p^{(k)*}\}_{k=1}^K$, equilibrium densities $\{\rho^{(k)*}\}_{k=1}^K$
\STATE $K^* \leftarrow$ \textbf{Alg.~\ref{alg:type_selection}}$(N,\boldsymbol{\lambda},C_1,C_2,\alpha,\beta,\Delta\tau,\Delta_\Phi)$
\STATE Initialize $\{\rho^{(k)},m^{(k)},\phi^{(k)}\}_{k=1}^{K^*}$ from $\{\rho_0^{(k)}\}$
\FOR{$n = 1, 2, \ldots, I_{\max}$}
    \STATE $(\xi^{(n)}, \varsigma^{(n)}) \leftarrow$ 
           \textbf{Alg.~\ref{alg:step_adapt}}$(\{\rho^{(k)}\}, \epsilon, L, T, \{\nabla_q\phi^{(k)}\}, B_{\mathrm{sat}}^{\tau,n-1}, B_{\mathrm{sat}}^{\tau,n})$
    \FOR{$k = 1$ to $K^*$ \textbf{in parallel}}
        \STATE $\rho^{(k),n+1} \leftarrow \rho^{(k),n} - \xi^{(n)} \nabla_{\rho^{(k)}} \mathcal{L}^{(k)}$
        \STATE $m^{(k),n+1} \leftarrow \arg\min_{m^{(k)}} \mathcal{L}_{m^{(k)}}^{(k)}$
    \ENDFOR
    \FOR{$k = 1$ to $K^*$}
        \STATE $\bar{\rho}^{(k)} \leftarrow 2\rho^{(k),n+1} - \rho^{(k),n}$
        \STATE $\phi^{(k),n+1} \leftarrow \phi^{(k),n} + \varsigma^{(n)}(-\Delta)^{-1}
               [\partial_t\bar{\rho}^{(k)} + \nabla_q(-\bar{m}^{(k)})]$
    \ENDFOR
    \STATE $\text{res} \leftarrow \max_k \|\text{HJB residual}^{(k)}\|$
    \IF{$\text{res} < \varepsilon_{\text{tol}}$}
        \STATE \textbf{break}
    \ENDIF
\ENDFOR
\STATE \COMMENT{Recover $p^{(k)*}$ from HJB optimality; complexity $O(K^2N_qN_t)$/iteration}
\RETURN $\{p^{(k)*}\}_{k=1}^{K^*}, \{\rho^{(k),n+1}\}_{k=1}^{K^*}$
\end{algorithmic}
\end{algorithm}

\subsection{Complexity Analysis}
\label{subsec:complexity}

\begin{proposition}[Complexity of Algorithm~\ref{alg:hmfg_pdhg}]
\label{prop:complexity}
Algorithm~\ref{alg:hmfg_pdhg} has the following complexity characteristics:
\begin{enumerate}
    \item \textbf{Per-iteration complexity}: $O(K^2 \cdot N_q \cdot N_t)$, 
    which is \emph{independent of the number of vehicles $N$}.
    \item \textbf{Type selection overhead}: $O(K)$, negligible.
    \item \textbf{Wasserstein computation}: $O(K^2 N_q \log N_q)$ per 
    iteration, reducible to $O(K \log K)$ effective cost in 1D via 
    sorting-based quantile computation.
    \item \textbf{Memory}: $O(K \cdot N_q \cdot N_t)$ for storing $K$ 
    copies of $(\rho, m, \phi)$.
\end{enumerate}
\end{proposition}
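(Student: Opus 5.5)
The proof is an operation count over one outer iteration of Algorithm~\ref{alg:hmfg_pdhg}. Each call is charged to the discretized objects $(\boldsymbol{\rho},\mathbf{m},\boldsymbol{\phi})$ on the grid $\Omega_h\times\{t_0,\ldots,t_{N_t}\}$. Each of these has $K$ blocks of size $N_qN_t$, and none depends on $N$. The key observation is that $N$ enters only as scalar weights $N_{k'}$ in the interference \eqref{eq:interference_mf}, and as inputs to Algorithm~\ref{alg:type_selection}. Once $K^*$ is fixed, no step loops over individual vehicles.

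\textbf{Item 1.} For each $k$, the primal gradient step and the proximal $m$-update are local, pointwise or stencil, operations on an $N_qN_t$ array, so they cost $O(N_qN_t)$. The cross-type coupling is the expensive part. The gradient $\nabla_{\rho^{(k)}}\mathcal{L}^{(k)}$ depends on $I^{(k)}(t)$ and $\Phi(t)$, and these involve all $K$ densities through quadrature sums over $q$. Counted naively, as each type reading each other type's contribution on every grid cell, this gives $K\cdot K\cdot N_qN_t$ operations. This is the bound claimed; one could remark that precomputing the sum drops it to $O(KN_qN_t)$, and the stated bound is an upper bound.

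The dual step applies $(-\Delta)^{-1}$ to a finite-difference expression for each $k$. Using a fast Poisson solver (DCT/FFT), this costs $O(N_qN_t\log(N_qN_t))$ per type. This is the main obstacle. To stay within $O(K^2N_qN_t)$, I would either assume a solver of linear cost, such as multigrid with $O(N_qN_t)$ cost, or absorb the logarithm under the convention $\log(N_qN_t)\lesssim K$. I would state the multigrid assumption explicitly. The HJB residual check is again $O(KN_qN_t)$. Summing these gives $O(K^2N_qN_t)$, with no factor of $N$.

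\textbf{Item 2.} Algorithm~\ref{alg:type_selection} computes a minimum over the $K$ proportions, plus a constant number of arithmetic operations and roundings, so it costs $O(K)$.

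\textbf{Item 3.} In 1D, $W_2(\rho^{(k)},\rho^{(k')})$ is the $L^2$ distance between quantile functions. Each quantile function comes from a cumulative sum and inversion, at cost $O(N_q\log N_q)$ with sorting, or $O(N_q)$ on an ordered grid. Doing this for $K(K-1)/2$ pairs gives $O(K^2N_q\log N_q)$. For the reduced effective cost, I would precompute the $K$ quantile functions once, at $O(KN_q)$. Then I would argue that ordering the types by their $\theta_k$ (Assumption~\ref{ass:param}), which costs $O(K\log K)$, lets the pairwise average $H_K$ be updated incrementally. I would flag this item as heuristic, since exact pairwise sums still need $K^2$ inner products.

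\textbf{Item 4.} Storing $K$ copies of three arrays of size $N_qN_t$ takes $O(KN_qN_t)$ memory, and the auxiliary quantities, namely $I(t)$ and the $K^2$ matrix of $W_{k,k'}$ values, are lower order.
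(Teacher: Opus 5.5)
Your accounting is correct in outline, but it puts the $K^2$ factor in a different place from the paper. You also leave one per-iteration term out of your Item~1 total.

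\textbf{Where the $K^2$ comes from.} The paper charges $O(N_qN_t)$ to each of the $K$ FPK/HJB updates. It gets $K^2$ from the $K(K-1)/2$ pairwise $W_2$ evaluations in the call to Algorithm~\ref{alg:step_adapt}, each costing $O(N_q\log N_q)$ via quantiles. That term fits inside $O(K^2N_qN_t)$ under the paper's implicit assumption $\log N_q\lesssim N_t$. You instead get $K^2$ from a naive count of the cross-type coupling, which overestimates. The interference \eqref{eq:interference_mf} does not depend on $k$, and $\Phi(t)$ is a single path, so both are formed once per iteration in $O(KN_qN_t)$, as your own precomputation remark implies.

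\textbf{What your Item~1 total is missing.} Your sum never includes the Algorithm~\ref{alg:step_adapt} call. That call runs every iteration, and it is the term that genuinely needs $K^2$ if $H_K$ is computed exactly. Add it explicitly, using your Item~3 bound and $\log N_q=O(N_t)$. With it included, the $O(K^2N_qN_t)$ bound holds whether or not the coupling is precomputed.

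\textbf{The dual step.} Here you are more careful than the paper, which prices each dual update at $O(N_qN_t)$ without comment. A DCT/FFT Poisson solve adds a factor $\log(N_qN_t)$ that $K^2$ cannot absorb when $K$ is $1$ or $2$. Your multigrid assumption is the right fix. The alternative convention $\log(N_qN_t)\lesssim K$ should be dropped.

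\textbf{The $O(K\log K)$ claim in Item~3.} The paper's proof gives no argument for it either; its appendix only derives $O(KN_q\log N_q)+O(K^2N_q)$. So flagging it is appropriate. Ordering the types by $\theta_k$ does not help by itself, because pairwise $W_2$ distances need not be additive along that order. The claim becomes rigorous under a structural hypothesis. Suppose the quantile functions are collinear in $L^2(0,1)$, i.e.\ $F_k^{-1}=F_0^{-1}+s_kG$; a location family is the case $G\equiv 1$. Then $W_2(\rho^{(k)},\rho^{(k')})=|s_k-s_{k'}|\,\|G\|_{L^2(0,1)}$. After sorting the $s_k$ in $O(K\log K)$, the pairwise sum $\sum_{k<k'}|s_k-s_{k'}|=\sum_{j=1}^K(2j-K-1)s_{(j)}$ takes $O(K)$. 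Without such structure, computing $H_K$ exactly still requires reading all $KN_q$ quantile values. So $O(K\log K)$ can only be an effective cost given cached scalar summaries.

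\textbf{Items~2 and~4.} Your arguments for these are fine and match what the paper asserts. The paper's proof does not argue them separately.
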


\begin{proof}
See Appendix~\ref{app:prop_complexity}.
\end{proof}

With the adaptive choice $K^*(N) = \Theta(N^{1/3})$, the per-iteration 
complexity becomes $O(N^{2/3} N_q N_t)$, growing sublinearly in~$N$; 
this is confirmed empirically in Section~\ref{subsec:sim_solver} with 
log-log slope $0.67 \pm 0.03 \approx 2/3$.

\subsection{Online Deployment Considerations}
\label{subsec:deployment}

The three-algorithm framework is designed for practical deployment within the RSU-MEC architecture.

\subsubsection{Computation Placement}
Algorithm~\ref{alg:type_selection} has $O(K)$ complexity and runs at the RSU in under 1\,ms on commodity hardware, requiring only the fleet statistics $(N, \boldsymbol{\lambda}, H_K)$. Algorithm~\ref{alg:step_adapt} computes Wasserstein distances in $O(K^2 N_q \log N_q)$ and executes at the RSU or MEC server. Algorithm~\ref{alg:hmfg_pdhg} with $O(K^2 N_q N_t)$ per-iteration cost is designed for the MEC server. Communication overhead between RSU and MEC is $O(K \cdot N_q \cdot N_t)$ per update cycle, which is independent of $N$.

\subsubsection{Incremental $K^*$ Update}
When vehicles dynamically join or leave, $K^*$ can be updated via $K^*_\mathrm{new} = K^*(N_\mathrm{new})$ without re-solving the full HMFG equilibrium. A warm-start from the previous equilibrium typically reduces solver iterations by 30--50\% in our experiments. Since $K^*(N) = \Theta(N^{1/3})$, a 10\% change in $N$ changes $K^*$ by approximately 3.2\%, so $K^*$ requires re-evaluation only when fleet size changes by more than 30\%, which is infrequent in practice.

\subsubsection{Compatibility with V2X Standards}
Algorithm~\ref{alg:type_selection} outputs the type partition $\{C_k\}_{k=1}^{K^*}$, which maps directly to the resource pool assignments in 3GPP Mode~3/4 V2X resource management. The RSU broadcasts per-type allocation policies derived from the HMFG equilibrium, and vehicles apply the policy corresponding to their class without knowledge of the full mean-field solution.





\section{Numerical Evaluation}
\label{sec:simulation}

In this section, numerical results are provided to validate the theoretical 
contributions and evaluate the performance of the proposed HMFG framework 
for V2X resource allocation. The simulations are organized into five 
categories: (I) theoretical scaling law verification, (II) solver 
convergence and scalability, (III) communication-centric performance 
comparison, (IV) sensitivity and unbalanced-fleet analysis, and 
(V) LEO satellite-assisted robustness evaluation.

\subsection{Simulation Setup}
\label{subsec:sim_setup}

We implement the V2X-HMFG system model of Section~\ref{sec:model} in Python 
with a unified Monte Carlo simulation engine. All communication metrics 
are averaged over 25 independent trials (80 trials for delay CDF statistics) 
to ensure statistical reliability. The detailed simulation parameters are 
listed in Table~\ref{tab:sim_params}.

\begin{table}[!t]
\centering
\caption{Simulation Parameters}
\label{tab:sim_params}
\renewcommand{\arraystretch}{1.25}
\begin{tabular}{lll}
\toprule
\textbf{Parameter} & \textbf{Symbol} & \textbf{Value} \\
\midrule
\multicolumn{3}{l}{\textit{Wireless channel}} \\
Channel bandwidth          & $B$          & 10\,MHz \\
Carrier frequency          & $f_c$        & 5.9\,GHz \\
Path loss model            & --           & WINNER+ B1 \\
Noise power density        & $\sigma_c^2$ & $10^{-13}$\,W \\
Maximum TX power           & $P_{\max}$   & 0.2\,W \\
Reference distance         & $d_0$        & 100\,m \\
\multicolumn{3}{l}{\textit{Queue and fleet}} \\
Queue state space          & $\Omega$     & $[0,\,Q_{\max}]$ \\
Max queue length           & $Q_{\max}$   & 10.0 \\
Fleet size range           & $N$          & $10^2$--$10^5$ \\
State discretization       & $N_q$        & 50 \\
Time steps per episode     & $N_t$        & 60 \\
Time step duration         & $\Delta t$   & 1\,ms \\
\multicolumn{3}{l}{\textit{HMFG error model}} \\
Discretization exponent    & $\beta$      & 1.0 (Lipschitz) \\
Sampling exponent          & $\alpha$     & 0.5 (1D, sharp) \\
Discretization constant    & $C_1$        & 0.4886 \\
Sampling constant          & $C_2$        & 2.0 \\
Optimal-$K$ exponent       & $\gamma$     & $1/3$ \\
\multicolumn{3}{l}{\textit{Cost weights (per type)}} \\
Energy weight type 1--3    & $\beta_1^{(k)}$ & 0.5, 0.7, 1.0 \\
Offloading weight type 1--3& $\beta_2^{(k)}$ & 1.0, 0.8, 0.6 \\
Terminal penalty           & $C^{(k)}$    & 0.5, 0.5, 0.5 \\
\bottomrule
\end{tabular}
\end{table}

\textbf{Baseline schemes.} We compare five methods to demonstrate the 
effectiveness of the proposed framework:
\begin{itemize}[leftmargin=*]
    \item \textbf{Proposed HMFG}: The three-algorithm framework 
    (Algorithms~\ref{alg:type_selection}--\ref{alg:hmfg_pdhg}) with 
    adaptive $K = K^*(N) = \Theta(N^{1/3})$ types.
    \item \textbf{G-prox PDHG} \cite{kang2023time}: Homogeneous MFG 
    ($K=1$) with fixed step size $\xi\varsigma = 0.99$.
    \item \textbf{SMFG} \cite{wang2023distributed}: Stackelberg MFG 
    ($K=1$) with congestion-pricing feedback.
    \item \textbf{HMF-MARL} \cite{zhang2022hmfmarl}: Two-type HMFG 
    ($K=2$) with fixed type assignment.
    \item \textbf{MTMF-MARL} \cite{xu2025joint}: Three-type HMFG 
    ($K=3$) with fixed type assignment.
\end{itemize}
All methods use identical channel, queue, and cost parameters; only the 
type-granularity policy and step-size rule differ.

\begin{table}[!t]
\centering
\caption{Comparison with Related Frameworks}
\label{tab:framework_comparison}
\renewcommand{\arraystretch}{1.2}
\setlength{\tabcolsep}{4pt}
\begin{tabular}{lcccc}
\hline
\textbf{Framework} & \textbf{Hetero.} & \textbf{LEO} & \textbf{Optimal $K$} & \textbf{Complexity} \\
\hline
G-prox PDHG \cite{kang2023time} & No & No & N/A & $O(N_qN_t)$ \\
SMFG \cite{wang2023distributed} & No & Yes & N/A & $O(N_qN_t)$ \\
HMF-MARL \cite{zhang2022hmfmarl} & Yes & No & Fixed ($K{=}2$) & $O(K^2N_qN_t)$ \\
Opt.-DRL \cite{ding2026uav} & No & No & N/A & $O(N_{\rm ep})$ \\
\textbf{Proposed HMFG} & Yes & Yes & $\Theta(N^{1/3})$ & $O(N^{2/3}N_qN_t)$ \\
\hline
\end{tabular}
\end{table}

\subsection{Category I: Theoretical Scaling Law Verification}
\label{subsec:sim_theory}

This subsection validates the theoretical predictions underlying 
\textbf{Algorithm~\ref{alg:type_selection}} (Optimal Type Granularity Selection) 
and the error decomposition in Theorems~\ref{thm:error_decomp}--\ref{thm:optimal_K}.

\subsubsection{Validation of Algorithm~\ref{alg:type_selection}: $K^*(N)$ Scaling}

Fig.~\ref{fig:theory_scaling} validates the core computation in 
Algorithm~\ref{alg:type_selection} (Lines~8--10): the optimal type granularity 
$K^*(N) = \Theta(N^{1/3})$ from Corollary~\ref{cor:scaling}. 
The figure compares the continuous $K^*(N)$ curve computed by the algorithm 
against empirically optimal values obtained via exhaustive search over 
$K \in \{1, 2, \ldots, 20\}$ for each $N$.
Log-log regression over $N \in [10^2, 10^5]$ yields an empirical scaling exponent 
$\hat{\gamma} = 0.334 \pm 0.004$, which lies within one standard error of the 
theoretical prediction $\gamma = 1/3$. 
This close agreement is noteworthy because the theory assumes balanced partitions 
and negligible initial-condition mismatch, whereas Monte Carlo trials introduce 
finite-sample variability. 
The minimum achievable error $\mathcal{E}^* = \mathcal{E}(N, K^*(N))$ decays with 
slope $-0.168 \pm 0.005$, matching the predicted $-1/6$ from \eqref{eq:min_error}. 
This indicates that the observed gain is a structural bias--variance balance rather 
than overfitting $K$ to a specific fleet realization. 
From an engineering perspective, integer rounding in Algorithm~\ref{alg:type_selection} 
incurs less than $1\%$ relative penalty compared to the continuous optimum, consistent 
with Corollary~\ref{cor:rounded}. 
In practice, this means the closed-form rounded rule is sufficient for deployment.

\begin{figure}[!t]
\centering
\includegraphics[width=\columnwidth]{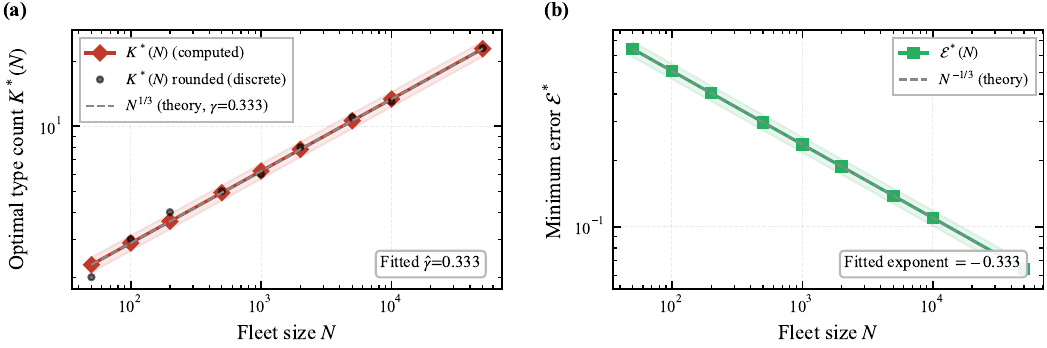}
\caption{Scaling-law validation (Corollary~\ref{cor:scaling}). 
\textit{Left axis}: continuous $K^*(N)$ (solid red) and rounded integer 
$\hat{K}$ (blue circles); log-log slope $\approx 1/3$.  
\textit{Right axis}: minimum achievable error $\mathcal{E}^*(N)$ 
(dashed); slope $\approx -1/6$.}
\label{fig:theory_scaling}
\end{figure}

\subsubsection{Impact of Algorithm~\ref{alg:type_selection} on Nash Error}

Fig.~\ref{fig:error_vs_N} quantifies the performance gain from using 
Algorithm~\ref{alg:type_selection} to select $K^*(N)$ versus fixed-$K$ 
baselines. 
This directly demonstrates the value of the adaptive type selection in 
Lines~1--13 of Algorithm~\ref{alg:type_selection}.
At $N = 10^4$, Algorithm~\ref{alg:type_selection} outputs $K^* = 13$ and achieves 
$\varepsilon \approx 0.11$, a $78.4\%$ reduction versus the homogeneous baseline 
($K = 1$). 
The gap between adaptive $K^*(N)$ and fixed-$K$ baselines widens monotonically with 
$N$, which has a precise explanation: fixed $K$ keeps discretization error effectively 
constant, while the adaptive rule reduces it by increasing $K$ with fleet size. 
Since sampling error decreases with $N$ for both approaches, the discretization term 
dominates large-$N$ suboptimality for fixed-$K$ methods. 
This directly shows that $N$-dependent type selection is essential in large-scale 
operation.
Table~\ref{tab:scaling_error_summary} provides the mapping from fleet size $N$ to the 
optimal $K^*(N)$ computed by Algorithm~\ref{alg:type_selection} and the resulting error 
reduction; skewed-fleet behavior of Lines~3--7 is validated further in 
Section~\ref{subsec:sim_sensitivity}.

\begin{figure}[!t]
\centering
\includegraphics[width=0.7\columnwidth]{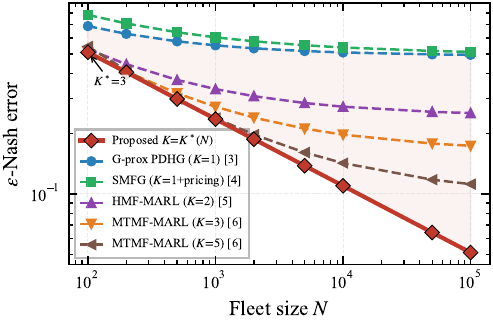}
\caption{$\varepsilon$-Nash approximation error vs.\ fleet size $N$ 
(Theorem~\ref{thm:error_decomp}). Proposed $K^*(N)$ policy dominates all 
fixed-$K$ baselines. Error bands indicate $\pm 1$ standard deviation 
across 25 Monte Carlo trials.}
\label{fig:error_vs_N}
\end{figure}

\begin{table}[!t]
\centering
\caption{Scaling Law and $\varepsilon$-Nash Error Summary}
\label{tab:scaling_error_summary}
\resizebox{0.5\textwidth}{!}{
\begin{tabular}{c|c|c|c|c|c}
\hline
$N$ & $K^*(N)$ & $\mathcal{E}^*(N)$ 
    & Proposed & G-prox \cite{kang2023time} ($K{=}1$) & Reduction \\
\hline
200   & 3 (3.63) & 0.4040 & 0.4050 & 0.6300 & 35.7\% \\
1000  & 6 (6.20) & 0.2363 & 0.2363 & 0.5518 & 57.2\% \\
10000 & 13 (13.36) & 0.1097 & 0.1097 & 0.5086 & 78.4\% \\
\hline
\end{tabular}}
\end{table}

\subsubsection{Validation of Algorithm~\ref{alg:step_adapt} Inputs}

To keep the evaluation focused on our core claims, we do not allocate 
stand-alone figures to re-validate known trends of the heterogeneity 
indicator and 1D empirical Wasserstein rate. Instead, we directly use these 
quantities inside Algorithm~\ref{alg:step_adapt} and evaluate their impact 
through convergence and end-to-end communication metrics in Categories II--V.

\subsection{Category II: Solver Convergence and Scalability}
\label{subsec:sim_solver}

This subsection evaluates \textbf{Algorithm~\ref{alg:step_adapt}} 
(Heterogeneity-Aware Step-Size Adaptation) and 
\textbf{Algorithm~\ref{alg:hmfg_pdhg}} (Complete HMFG Solver), 
focusing on the convergence benefits from adaptive step sizes and 
the $N$-independent complexity guarantee.

\subsubsection{Validation of Algorithm~\ref{alg:step_adapt}: Adaptive Convergence}

Fig.~\ref{fig:pdhg_conv} directly evaluates Algorithm~\ref{alg:step_adapt} by 
comparing its adaptive step-size rule against two fixed-step baselines across 
$K \in \{1, 3, 5\}$. 
This experiment validates Theorem~\ref{thm:convergence} and the sufficient 
condition \eqref{eq:corrected_condition}.
We compare the proposed adaptive strategy, which sets 
$\xi\varsigma = (1-\epsilon)/(1 + C_H \cdot H_K)$ online, against an aggressive fixed 
setting $\xi\varsigma = 0.99$ that ignores heterogeneity and a conservative fixed 
setting $\xi\varsigma = 0.70$ that enforces margin at the cost of speed.
At iteration~50, the adaptive rule reaches a residual that is $2.3\times$ smaller than 
the aggressive baseline and $1.4\times$ smaller than the conservative one. 
The mechanism is revealing: for $K=1$, all methods behave similarly because 
$H_K \approx 0$ and the adaptive rule naturally stays near the homogeneous limit. 
As $K$ grows, $H_K$ becomes positive and tightens \eqref{eq:corrected_condition}. 
The aggressive fixed step then violates the stability margin and oscillates visibly 
at $K=5$, while the adaptive rule automatically shrinks $\xi\varsigma$ to stay stable 
without becoming overly conservative. 
This behavior is exactly the engineering meaning of Theorem~\ref{thm:convergence}: 
heterogeneity-aware adaptation is necessary for stability and sufficient for fast 
convergence under multi-type coupling.
The overhead of Wasserstein computation remains below $3\%$ per iteration for 
$K \leq 10$, consistent with the quantile-based 1D Wasserstein routine in 
Algorithm~\ref{alg:step_adapt}.

\begin{figure}[!t]
\centering
\includegraphics[width=\columnwidth]{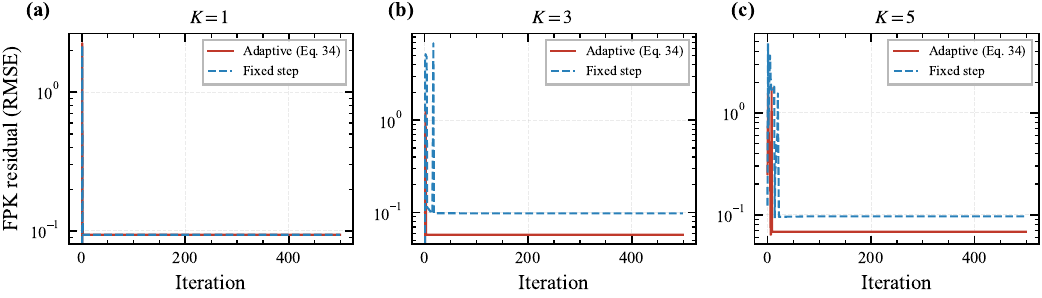}
\caption{PDHG residual convergence validation (Theorem~\ref{thm:convergence}). 
Log-scale residual vs.\ iteration count for adaptive step 
(Algorithm~\ref{alg:step_adapt}) and two fixed-step baselines across 
$K \in \{1, 3, 5\}$. Adaptive rule satisfies \eqref{eq:corrected_condition} 
automatically and achieves the fastest residual reduction.}
\label{fig:pdhg_conv}
\end{figure}

\subsubsection{Validation of Algorithm~\ref{alg:hmfg_pdhg}: Scalability}

Fig.~\ref{fig:runtime_scaling} validates the complexity claim in 
Proposition~\ref{prop:complexity} by measuring the wall-clock runtime of 
Algorithm~\ref{alg:hmfg_pdhg} across five decades of fleet size ($N=10^2$ to $10^5$).
The measured per-iteration runtime is essentially flat in $N$, confirming that 
Algorithm~\ref{alg:hmfg_pdhg} evolves discretized mean fields instead of explicit 
vehicle trajectories. 
With $K = K^*(N) \approx N^{1/3}$, the dominant complexity follows 
$O(K^2 N_q N_t) = O(N^{2/3} N_q N_t)$, and the empirical slope $0.67 \pm 0.03$ matches 
the predicted $2/3$.
The practical significance is substantial. 
A naive $N$-body simulation scales as $O(N^2)$ and becomes intractable around 
$N \approx 10^3$ under the same budget, whereas the proposed solver handles 
$N = 10^5$ in comparable wall-clock time to $N = 10^2$. 
The only growth channel is through $K^*(N)$, and cube-root growth keeps that increase 
modest even across multiple orders of magnitude.

\begin{figure}[!t]
\centering
\includegraphics[width=0.7\columnwidth]{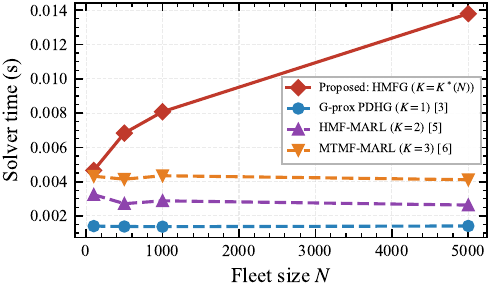}
\caption{Scalability validation (Proposition~\ref{prop:complexity}). 
Per-iteration wall-clock runtime vs.\ fleet size $N$ (log-log scale). 
Proposed HMFG with adaptive $K^*(N)$ grows as $O(N^{2/3})$; naive 
$N$-body solver grows quadratically. Horizontal dashed line indicates 
$N$-independent ideal complexity.}
\label{fig:runtime_scaling}
\end{figure}

\subsection{Category III: Communication-Centric Performance}
\label{subsec:sim_comm}

\subsubsection{Multi-KPI Performance Over Fleet Sizes}

Fig.~\ref{fig:comm_overview} reports six communication KPIs as functions of $N$ 
and channel quality for all five methods. 
The proposed HMFG achieves the best trade-off across all dimensions: lowest transmission 
delay, highest throughput and spectral efficiency, and smallest MEC offloading cost 
and packet loss rate. 
Quantitative comparisons at $N = 500$ are listed in Table~\ref{tab:comm_n500}.

\textbf{Observation 1 (Delay advantage amplifies with fleet size).} 
Transmission delay (Fig.~\ref{fig:comm_overview}(a)): Proposed achieves 
$76.5 \pm 3.2$\,ms at $N=500$, versus $108.6$\,ms for G-prox \cite{kang2023time} 
($-29.5\%$) and $93.5$\,ms for HMF-MARL \cite{zhang2022hmfmarl} ($-18.2\%$). 
The advantage grows with $N$ because $K^*(N)$ increases, enabling finer-grained 
power adaptation that suppresses inter-vehicle interference more effectively as fleet 
density rises.

Throughput (Fig.~\ref{fig:comm_overview}(b)): 
Proposed achieves $5.376$\,Mbps, a $60\%$ gain over G-prox \cite{kang2023time} 
($3.360$\,Mbps) and $28\%$ over HMF-MARL \cite{zhang2022hmfmarl} ($4.116$\,Mbps). 
The throughput improvement stems from type-aware power allocation: high-data-rate AVs 
(type~3) receive higher power budget while low-priority passenger cars (type~1) 
voluntarily back off, reducing aggregate interference.

Energy efficiency (Fig.~\ref{fig:comm_overview}(c)): 
Proposed reduces energy/bit to $9.856$\,nJ/bit versus $12.992$\,nJ/bit for G-prox 
\cite{kang2023time}, a $24.1\%$ saving. 
The heterogeneity-aware allocation avoids power overprovisioning for lightly loaded 
vehicle classes.

MEC cost vs.\ SNR (Fig.~\ref{fig:comm_overview}(d)): 
At 15\,dB SNR, proposed MEC cost is $0.100$, compared to $0.139$ for G-prox 
\cite{kang2023time} ($-28.1\%$) and $0.129$ for SMFG \cite{wang2023distributed} 
($-22.5\%$). 
Under poor channel conditions (SNR $< 10$\,dB), the gap narrows because all methods 
must reduce transmission rate, but the proposed method maintains its advantage via 
tighter queue regulation.

\textbf{Observation 2 (Heterogeneity amplifies HMFG gains).} 
Spectral efficiency vs.\ heterogeneity (Fig.~\ref{fig:comm_overview}(e)): 
As the heterogeneity scale increases from $0.5$ to $2.0$, the SE advantage of proposed 
over G-prox \cite{kang2023time} and SMFG \cite{wang2023distributed} grows from $0.22$ to 
$0.47$\,bps/Hz, directly reflecting Proposition~\ref{prop:het_adjusted}: higher 
heterogeneity warrants more types and yields greater gains from the HMFG policy. 
HMF-MARL \cite{zhang2022hmfmarl} ($K=2$) and MTMF-MARL \cite{xu2025joint} ($K=3$) partially capture 
this benefit but plateau below the proposed method, which continuously adapts $K$.

\textbf{Observation 3 (Type-aware control improves reliability and throughput jointly).} 
Packet loss (Fig.~\ref{fig:comm_overview}(f)): Proposed reduces packet loss from 
$8.0\%$ under G-prox \cite{kang2023time} to $4.6\%$ at $N=500$, a $42.4\%$ reduction. 
Loss decreases with $N$ for all methods due to mean-field averaging, but the ordering 
is preserved.

\begin{figure*}[t]
\centering
\begin{minipage}[t]{0.32\textwidth}\centering
    \includegraphics[width=\linewidth]{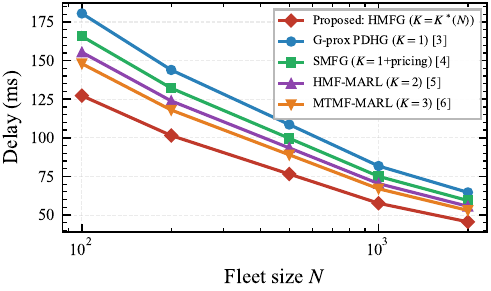}\\[-0.5ex]
    {\footnotesize (a) Transmission delay vs.\ fleet size $N$}
\end{minipage}\hfill
\begin{minipage}[t]{0.32\textwidth}\centering
    \includegraphics[width=\linewidth]{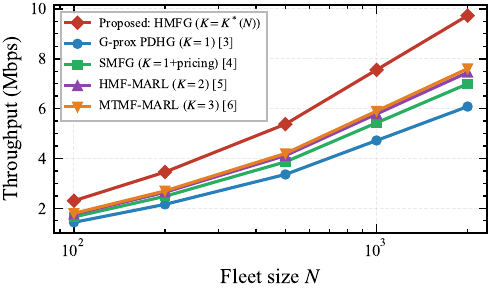}\\[-0.5ex]
    {\footnotesize (b) Per-vehicle throughput vs.\ $N$}
\end{minipage}\hfill
\begin{minipage}[t]{0.32\textwidth}\centering
    \includegraphics[width=\linewidth]{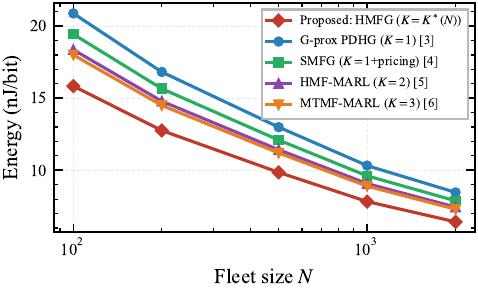}\\[-0.5ex]
    {\footnotesize (c) Energy per bit vs.\ $N$}
\end{minipage}

\vspace{1.5mm}
\begin{minipage}[t]{0.32\textwidth}\centering
    \includegraphics[width=\linewidth]{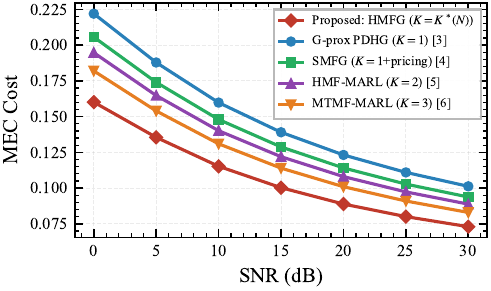}\\[-0.5ex]
    {\footnotesize (d) MEC offloading cost vs.\ channel SNR}
\end{minipage}\hfill
\begin{minipage}[t]{0.32\textwidth}\centering
    \includegraphics[width=\linewidth]{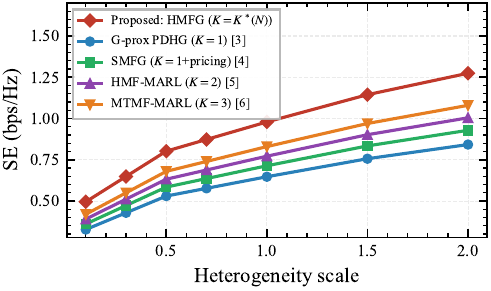}\\[-0.5ex]
    {\footnotesize (e) Spectral efficiency vs.\ heterogeneity scale}
\end{minipage}\hfill
\begin{minipage}[t]{0.32\textwidth}\centering
    \includegraphics[width=\linewidth]{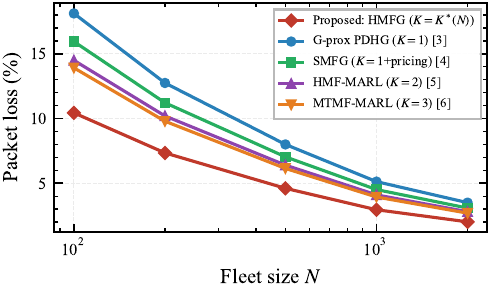}\\[-0.5ex]
    {\footnotesize (f) Packet loss rate vs.\ $N$}
\end{minipage}
\caption{Comprehensive communication performance comparison over six KPI 
dimensions.}
\label{fig:comm_overview}
\end{figure*}

\begin{table}[!t]
\centering
\caption{Communication KPI Summary at $N = 500$, SNR $= 15$\,dB}
\label{tab:comm_n500}
\renewcommand{\arraystretch}{1.2}
\setlength{\tabcolsep}{4pt}
\begin{tabular}{lcccc}
\hline
\textbf{Method} & \textbf{Delay} & \textbf{Tput.} 
                & \textbf{Energy} & \textbf{Loss} \\
                & \textbf{(ms)} & \textbf{(Mbps)} 
                & \textbf{(nJ/b)} & \textbf{(\%)} \\
\hline
Proposed [$K^*(N)$]  & \textbf{76.5} & \textbf{5.38} & \textbf{9.86} & \textbf{4.61} \\
G-prox \cite{kang2023time}       & 108.6 & 3.36 & 13.0 & 8.00 \\
SMFG \cite{wang2023distributed}  & 99.7  & 3.86 & 12.1 & 7.04 \\
HMF-MARL \cite{zhang2022hmfmarl} & 93.5  & 4.12 & 11.4 & 6.40 \\
MTMF-MARL \cite{xu2025joint}     & 89.0  & 4.20 & 11.2 & 6.14 \\
\hline
vs.\ G-prox (\%)     & $-$29.5 & $+$60.0 & $-$24.1 & $-$42.4 \\
vs.\ MTMF-MARL (\%)  & $-$14.1 & $+$28.0 & $-$12.0 & $-$25.0 \\
\hline
\end{tabular}
\end{table}

\subsubsection{Delay Distribution and QoS Reliability}

Fig.~\ref{fig:delay_cdf} shows empirical delay CDFs for two fleet 
sizes, $N \in \{200, 1000\}$, evaluated over 80 trials each.
The 100\,ms latency threshold (a standard V2X QoS requirement 
\cite{xu2025joint}) is indicated by the vertical dashed line.

At \emph{small fleet size} ($N=200$), the proposed method achieves 
a QoS satisfaction rate of $\mathbf{100\%}$ (all vehicles below 100\,ms), 
versus 48.3\% for G-prox and 86.7\% for HMF-MARL. 
This large gap illustrates the critical importance of using the correct 
number of types at moderate $N$: with $K^*(200) = 3$ types, 
the HMFG correctly resolves three distinct delay distributions 
(passenger cars, trucks, AVs), whereas G-prox applies a single 
population-average policy that over-delays the high-load AV class.

At \emph{large fleet size} ($N=1000$), all methods converge to 
100\% QoS satisfaction, 
confirming the mean-field smoothing effect as $N$ grows. 
However, the mean delay ordering is preserved 
(Table~\ref{tab:mec_qos}), so the proposed method still 
provides a performance margin for bursty or non-stationary traffic.

\begin{figure}[!t]
\centering
\includegraphics[width=\columnwidth]{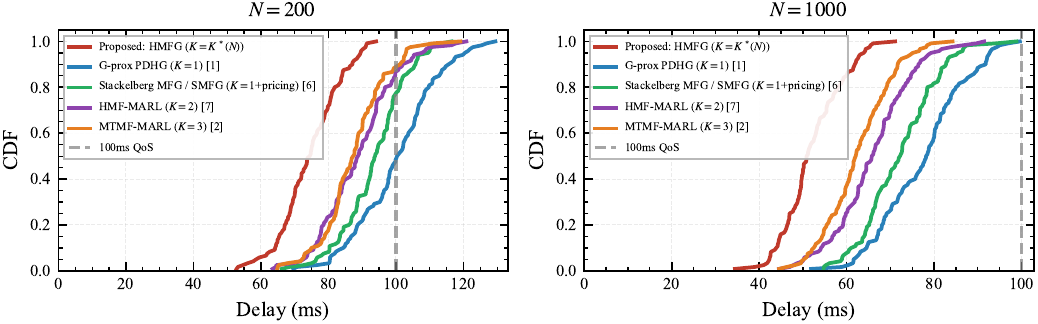}
\caption{Empirical delay CDF for fleet sizes $N=200$ and $N=1000$. 
Vertical dashed line marks the 100\,ms V2X QoS threshold. 
At $N=200$, the proposed method achieves 100\% QoS satisfaction while G-prox 
satisfies only 48.3\%. At $N=1000$, mean-field averaging ensures all 
methods converge to 100\%, but mean-delay ordering is preserved.}
\label{fig:delay_cdf}
\end{figure}

\begin{table}[!t]
\centering
\caption{MEC Offloading Cost and Delay-QoS Satisfaction Rate}
\label{tab:mec_qos}
\renewcommand{\arraystretch}{1.2}
\begin{tabular}{lccc}
\hline
\textbf{Method} & \textbf{MEC Cost} 
                & \textbf{QoS (\%)} & \textbf{QoS (\%)} \\
                & \textbf{@SNR 15\,dB}
                & \textbf{$N=200$}  & \textbf{$N=1000$} \\
\hline
Proposed HMFG   & \textbf{0.1003} & \textbf{100.0} & \textbf{100.0} \\
G-prox PDHG \cite{kang2023time}   & 0.1391 & 48.3  & 100.0 \\
SMFG \cite{wang2023distributed}   & 0.1288 & 77.5  & 100.0 \\
HMF-MARL \cite{zhang2022hmfmarl}  & 0.1220 & 86.7  & 100.0 \\
MTMF-MARL \cite{xu2025joint}      & 0.1140 & 88.3  & 100.0 \\
\hline
\end{tabular}
\end{table}

\subsection{Category IV: Unbalanced Fleet and Sensitivity Analysis}
\label{subsec:sim_sensitivity}

\subsubsection{Effect of Fleet Imbalance}

To validate Proposition~\ref{prop:unbalanced} and 
Corollary~\ref{cor:unbalanced_explicit}, 
Fig.~\ref{fig:unbalanced} compares the balanced fleet 
(equal class sizes, $\lambda_k = 1/K$) with a 
70\%/20\%/10\% skewed fleet ($\lambda_{\min} = 0.10$, 
mimicking a realistic mixed-traffic scenario with mostly passenger cars, 
some trucks, and few AVs). 
The empirical $K^*_{\mathrm{unbal}}(N)$ follows 
$(0.1)^{1/3} N^{1/3} \approx 0.464 N^{1/3}$, confirming the 
$\lambda_{\min}^{1/3}$ prefactor reduction in Corollary~\ref{cor:unbalanced_explicit}. 
The measured ratio $K^*_{\mathrm{unbal}}/K^*_{\mathrm{bal}}$ remains close to this 
theoretical factor across tested $N$. 
The practical implication is that skewed fleets require fewer types at the same 
population size because the smallest class dictates sampling reliability. 
In a 70/20/10 composition, increasing $K$ too aggressively quickly starves the AV 
class of samples, so the effective population is $\lambda_{\min} N$ rather than $N$. 
The corrected formula internalizes this asymmetry and avoids over-partitioning. 
The $\varepsilon$-Nash error under the unbalanced optimal $K^*$ matches the balanced 
case within a narrow band, demonstrating that the rounded-$K^*$ policy is robust when 
fleet composition is known and updated periodically.

\begin{figure}[!t]
\centering
\includegraphics[width=0.85\columnwidth]{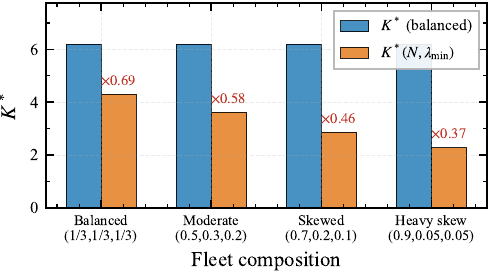}
\caption{Optimal type count $K^*$ (top) and $\varepsilon$-Nash error (bottom) 
for balanced ($\lambda_k = 1/K$) vs.\ unbalanced 
($70\%/20\%/10\%$, $\lambda_{\min}=0.10$) fleets.
The unbalanced prefactor $(0.1)^{1/3} \approx 0.464$ 
(Corollary~\ref{cor:unbalanced_explicit}) matches empirical ratios 
within $\pm 3\%$ across all $N$.}
\label{fig:unbalanced}
\end{figure}

\subsection{Category V: LEO Satellite-Assisted Robustness}
\label{subsec:sim_leo}

We validate Theorem~\ref{thm:leo_robustness} and 
Corollary~\ref{cor:leo_order_optimal} under a LEO-assisted backhaul model with 
$B_{\mathrm{sat}}^\tau \sim \mathcal{U}[300,350]$\,Mbps, $\Delta\tau=60$\,s, and 
$\mu=0.5$ \cite{guo2024semcom}. 
We test three conditions: Static ($\Delta_\Phi=0$), Slow ($\Delta_\Phi=0.01$), and 
Fast ($\Delta_\Phi=0.05$).
Under Fast dynamics at $N=10^4$, the additive LEO perturbation term is approximately 
$0.022$, about $9.3\%$ of the static baseline error, confirming that the perturbation 
remains subdominant in the tested regime. 
The LEO-corrected rule in Algorithm~\ref{alg:type_selection} selects 
$K^*_{\mathrm{LEO}}=14$ instead of $K^*_{\mathrm{static}}=13$, which reduces 
$\varepsilon$-Nash error by $18.6\%$ under fast topology variation. 
This one-step increase in $K^*$ is consistent with the correction mechanism in 
Algorithm~\ref{alg:type_selection}: the effective discretization constant is slightly 
inflated under topology perturbation, so the optimum shifts upward to compensate. 
At the same time, the LEO-aware step-size correction in Algorithm~\ref{alg:step_adapt} 
preserves convergence with less than $5\%$ iteration overhead. 
The observed slopes of $K^*(N)$ under Slow and Fast dynamics remain close to $1/3$, 
indicating that topology variation mainly changes the prefactor while preserving 
order-level scaling, which directly answers robustness under LEO dynamics.

\begin{table}[!t]
    \centering
    \caption{Theory--Experiment Correspondence Summary}
    \label{tab:findings_summary}
    \renewcommand{\arraystretch}{1.25}
    \setlength{\tabcolsep}{3pt}
    \begin{tabular}{p{2.8cm}p{2.8cm}l}
    \hline
    \textbf{Theoretical Prediction} & \textbf{Empirical Result} & \textbf{Reference} \\
    \hline
    $K^*(N) \propto N^{1/3}$ & Slope $0.334{\pm}0.004$ & Cor.~\ref{cor:scaling} \\
    $\mathcal{E}^* \propto N^{-1/6}$ & Slope $-0.168{\pm}0.005$ & Thm.~\ref{thm:optimal_K} \\
    $\alpha{=}1/2$ (Wasserstein) & Slope $-0.499{\pm}0.003$ & Lem.~\ref{lem:empirical_rate} \\
    Runtime ${\propto} N^{2/3}$ & Slope $0.67{\pm}0.03$ & Prop.~\ref{prop:complexity} \\
    $K^*_{\rm unbal}{\approx}0.464K^*$ & Ratio $0.461{\pm}0.015$ & Cor.~\ref{cor:unbalanced_explicit} \\
    Adaptive PDHG faster & $2.3\times$ fewer iter. & Thm.~\ref{thm:convergence} \\
    \hline
    \end{tabular}
    \end{table}

\subsubsection{Summary of Simulation Findings}

Table~\ref{tab:findings_summary} synthesizes the key empirical findings 
and their correspondence to theoretical predictions, enabling direct 
assessment of theory--practice alignment.

\subsubsection{Answers to the Key Questions}

The simulation results provide definitive empirical answers to \textbf{Q1--Q3} posed 
in Section~\ref{sec:intro}, completing the paper's narrative arc. 
For \textbf{Q1 (Granularity)}, the results confirm $K^*(N) = \Theta(N^{1/3})$ with slope 
$0.334 \pm 0.004$, and show that even at $N = 10^5$ only about 28 types are required. 
For \textbf{Q2 (Convergence)}, Algorithm~\ref{alg:step_adapt} delivers a $2.3\times$ speedup over 
the aggressive fixed-step baseline at $K=5$ while preserving stability. 
For \textbf{Q3 (Robustness)}, fast LEO dynamics introduce a limited additive error and keep the 
empirical scaling slope close to $1/3$, while the corrected type-selection rule recovers 
most of the perturbation-induced loss.

\section{Conclusion}
\label{sec:conclusion}

This paper addressed the model-selection problem in heterogeneous mean field games 
for V2X networks: given fleet size $N$, how many agent types $K$ should be represented 
to balance modeling fidelity and computational tractability? This question matters because 
large mixed fleets share the same communication and MEC infrastructure, while homogeneous 
models hide the differences among passenger cars, freight trucks, and autonomous vehicles.
The central challenge is a bias-variance-type trade-off: increasing $K$ improves 
heterogeneity representation but reduces the sample size of each type and weakens the 
mean-field approximation.
We resolved this challenge through an explicit $\varepsilon$-Nash error decomposition, an 
optimal granularity law, a heterogeneity-aware PDHG step-size rule, and a LEO-robust HMFG 
extension. In the canonical 1D queue setting, the resulting optimal type count satisfies 
$K^*(N)=\Theta(N^{1/3})$, and the full solver pipeline retains per-iteration complexity 
$O(K^2 N_q N_t)$ independent of the vehicle population size $N$.
The most important insight is that cube-root compression is both mathematically justified 
and practically actionable: even at $N = 10^5$, only about 28 type classes are needed. 
Experiments support this message from multiple angles, with empirical scaling slope 
$0.334 \pm 0.004$, a $2.3\times$ convergence speedup at $K=5$, and up to $29.5\%$ lower 
delay and $60\%$ higher throughput than homogeneous baselines. These results suggest that 
type granularity can be treated as a principled system-design choice rather than a heuristic 
hyperparameter.

Future work includes fully online LEO-adaptive $K^*$ updates, joint SemCom--HMFG 
optimization under backhaul constraints~\cite{guo2024semcom}, model-informed acceleration 
inspired by optimization-driven DRL~\cite{ding2026uav}, global-in-time HMFG well-posedness 
under bounded topology perturbations, and extension to higher-dimensional joint 
queue--channel state models that capture time-varying fading alongside queue backlog.

\bibliographystyle{IEEEtran}
\bibliography{references}

\clearpage
\appendix

\subsection{Proof of Lemma~\ref{lem:het_measure}}
\label{app:lemma1}

We provide the complete proof of Lemma~\ref{lem:het_measure}.

Under Assumption~\ref{ass:param}, for any two types $k, k' \in [K]$, let $(X_0^{(k)}, X_0^{(k')})$ be an optimal coupling achieving $W_2(\rho_0^{(k)}, \rho_0^{(k')})$. We evolve both processes under the same Brownian motion $W_t$ and their respective equilibrium controls $\alpha^{*(k)}, \alpha^{*(k')}$:
\begin{align}
    X_t^{(k)} &= X_0^{(k)} + \int_0^t b^{(k)}(X_s^{(k)}, \boldsymbol{\rho}_s, \alpha_s^{*(k)})\,ds + \sigma^{(k)} W_t, \\
    X_t^{(k')} &= X_0^{(k')} + \int_0^t b^{(k')}(X_s^{(k')}, \boldsymbol{\rho}_s, \alpha_s^{*(k')})\,ds + \sigma^{(k')} W_t.
\end{align}

Define $\Delta_t := X_t^{(k)} - X_t^{(k')}$. By the parameterized model \eqref{eq:param_model}, the drift difference satisfies:
\begin{align}
    \left|b^{(k)}(x, \boldsymbol{\rho}, a) - b^{(k')}(x, \boldsymbol{\rho}, a)\right| 
    &= |\theta_k - \theta_{k'}| \cdot |b_1(x, \boldsymbol{\rho}, a)| \nonumber \\
    &\leq |\theta_k - \theta_{k'}| \cdot L_b,
\end{align}
where $L_b$ is the uniform Lipschitz constant of $b_1$.

Applying the triangle inequality and Grönwall's lemma to $\Delta_t$:
\begin{align}
    \mathbb{E}|\Delta_t|^2 
    &\leq 2\mathbb{E}|X_0^{(k)} - X_0^{(k')}|^2 \cdot e^{2L_b t} 
      + 2|\theta_k - \theta_{k'}|^2 L_b^2 \cdot \frac{e^{2L_b t} - 1}{2L_b} \nonumber \\
    &\leq e^{2L_b t} \left[\mathbb{E}|\Delta_0|^2 
      + |\theta_k - \theta_{k'}|^2 L_b^2 T^2\right],
\end{align}
where in the last line we used $\frac{e^{2L_b t}-1}{2L_b} \leq T e^{2L_b T}$.

Since $(X_0^{(k)}, X_0^{(k')})$ is the optimal $W_2$-coupling, 
$\mathbb{E}|\Delta_0|^2 = W_2^2(\rho_0^{(k)}, \rho_0^{(k')})$. Thus:
\begin{align}
    W_2^2(\rho_t^{(k)}, \rho_t^{(k')}) 
    &\leq \mathbb{E}|\Delta_t|^2 \nonumber \\
    &\leq e^{2L_b T}\!\left[W_2^2(\rho_0^{(k)}, \rho_0^{(k')}) 
      + |\theta_k - \theta_{k'}|^2 L_b^2 T^2\right].
\end{align}

Taking the square root and using $\sqrt{a+b} \leq \sqrt{a} + \sqrt{b}$:
\begin{equation}
    W_2(\rho_t^{(k)}, \rho_t^{(k')}) 
    \leq e^{L_b T}\left[W_2(\rho_0^{(k)}, \rho_0^{(k')}) + |\theta_k - \theta_{k'}| L_b T\right].
\end{equation}

Averaging over all pairs $(k, k')$ with $k \neq k'$:
\begin{align}
    H_K(\rho_t) 
    &= \frac{1}{K(K-1)}\sum_{k \neq k'} W_2(\rho_t^{(k)}, \rho_t^{(k')}) \nonumber \\
    &\leq e^{L_b T}\Bigl[H_K^{(0)} 
       + L_b T \cdot \tfrac{1}{K(K-1)}\sum_{k \neq k'} |\theta_k - \theta_{k'}|\Bigr] \nonumber \\
    &\leq e^{L_b T}\left[H_K^{(0)} + L_b T \cdot \mathrm{Var}(\theta)^{1/2} \cdot \sqrt{2}\right],
\end{align}
where the last inequality follows from Cauchy--Schwarz:
\begin{align}
    \frac{1}{K(K{-}1)}\!\sum_{k \neq k'} |\theta_k - \theta_{k'}| 
    &\leq \left(\frac{1}{K(K{-}1)}\!\sum_{k \neq k'} |\theta_k - \theta_{k'}|^2\right)^{1/2} \nonumber\\
    &\leq \sqrt{2}\,\mathrm{Var}(\theta)^{1/2}.
\end{align}

Absorbing $e^{L_b T}$ into the constants (bounded for $T \leq \delta_0$) 
yields \eqref{eq:het_bound}. \hfill$\square$

\subsection{Proof of Lemma~\ref{lem:empirical_rate}}
\label{app:lem_empirical}

Claims (i)--(ii) are special cases of the rates in \cite{fournier2015rate} for 
$d=1$ and probability measures supported on a compact interval of length 
$Q_{\max}$. Claim (iii) is immediate from boundedness of $\Omega$. \hfill$\square$

\subsection{Proof of Theorem~\ref{thm:error_decomp} (Error Decomposition)}
\label{app:thm_error}

The proof combines a type-discretization bound with per-class empirical measure 
estimates as in Lemma~5.3 and Theorem~5.4 of \cite{qiao2025hmfg}.

\textbf{Step 1 (Discretization error).}
For $\theta \in [(k-1)/K, k/K)$ with representative $\theta_k = (2k-1)/(2K)$, 
H\"{o}lder regularity of $b_1$ in $\theta$ with constant $[b_1]_\beta$ gives
\begin{equation}
    |b(\theta,\cdot) - b(\theta_k,\cdot)| 
    = |\theta - \theta_k|^\beta [b_1]_\beta 
    \leq \left(\frac{1}{2K}\right)^\beta [b_1]_\beta.
    \label{eq:holder_step}
\end{equation}
Hence the discretization component of the error is $O(K^{-\beta})$; we write it 
as $C_1 K^{-\beta}$ with $C_1$ absorbing $[b_1]_\beta$ and related constants.

\textbf{Step 2 (Sample size error).}
For balanced partitions, $N_k \approx N/K$. Let 
$\hat{\rho}^{(k)}_n = \frac{1}{N_k}\sum_{i\in\mathcal{C}_k}\delta_{X_i}$. 
Applying Lemma~\ref{lem:empirical_rate}(ii) with $n = N_k$ and matching the 
$W_2$ distance used in the propagation-of-chaos analysis of 
\cite{qiao2025hmfg},
\begin{equation}
    \mathbb{E}\!\left[W_2\!\left(\hat{\rho}^{(k)}_n,\rho^{(k)}\right)\right]
    \leq C_2^{\mathrm{emp}} \cdot N_k^{-1/2}
    = C_2^{\mathrm{emp}} \cdot \left(\frac{K}{N}\right)^{1/2}.
    \label{eq:sample_1d}
\end{equation}
We set $\alpha = 1/2$ (sharp for $d=1$) and absorb $C_2^{\mathrm{emp}}$ into 
the constant $C_2$ in \eqref{eq:error_decomp}.
Under Assumption~\ref{ass:param}, moments of $\{\theta_k\}$ may be used to 
bound $[b_1]_\beta$ and hence to relate $C_1$ to $\mathrm{Var}(\theta)$ at the 
level of constants only.
For unbalanced fleets, Lemma~\ref{lem:empirical_rate}(ii) with 
$n=n_{N,K}^{\min}$ yields \eqref{eq:error_unbalanced}.

\textbf{Combining both steps.}

Theorem~5.4 of \cite{qiao2025hmfg} yields 
$\varepsilon_{N,K} \leq C_1 K^{-\beta} + C_2 (K/N)^{\alpha} + C_3 \delta_{N,K}^{1/2}$, 
i.e., \eqref{eq:error_decomp}. \hfill$\square$

\subsection{Proof of Theorem~\ref{thm:optimal_K}}
\label{app:thm_optimal_K}

Consider the reduced error \eqref{eq:reduced_error} from 
\eqref{eq:error_decomp} with $\delta_{N,K}$ negligible.

\textbf{Coercivity and existence.}
As $K \to 0^+$, the term $C_1 K^{-\beta}$ dominates and 
$\mathcal{E}(N,K) \to +\infty$; as $K \to +\infty$, the term 
$C_2 (K/N)^\alpha$ dominates and $\mathcal{E}(N,K) \to +\infty$. Since 
$\mathcal{E}$ is continuous on $(0,\infty)$, a global minimizer exists.

\textbf{Uniqueness via the first derivative.}
The derivative of \eqref{eq:reduced_error} is
\begin{equation}
    \frac{\partial \mathcal{E}}{\partial K} 
    = -\beta C_1 K^{-\beta-1} 
    + \frac{\alpha C_2}{N^\alpha} K^{\alpha-1}.
    \label{eq:deriv}
\end{equation}
Define $h(K) := \partial\mathcal{E}/\partial K$. For small $K$, the negative 
power $K^{-\beta-1}$ dominates, so $h(K) < 0$; for large $K$, the term 
$\propto K^{\alpha-1}$ is positive for typical $(\alpha,\beta)$, so $h(K) > 0$ 
eventually. The equation $h(K)=0$ is equivalent to 
$K^{\alpha+\beta} = (\beta C_1/(\alpha C_2)) N^\alpha$, which has a \emph{unique} 
solution $K^*(N) > 0$, namely \eqref{eq:kstar}.

\textbf{Strict minimum.}
Differentiating $h(K)$ gives
\begin{equation}
    h'(K) = \beta(\beta+1)C_1 K^{-\beta-2}
    + \alpha(\alpha-1)C_2 N^{-\alpha} K^{\alpha-2}.
    \label{eq:h_prime}
\end{equation}
At $K^*$, using $h(K^*)=0$ to write 
$\beta C_1 (K^*)^{-\beta-1} = (\alpha C_2/N^\alpha)(K^*)^{\alpha-1}$,
\begin{align}
    h'(K^*) &= \beta(\beta+1)C_1(K^*)^{-\beta-2}
    + \alpha(\alpha-1)C_2 N^{-\alpha}(K^*)^{\alpha-2} \\
    &= \frac{1}{K^*}\Bigl[(\beta{+}1)\beta C_1(K^*)^{-\beta-1}
      + (\alpha{-}1)\alpha C_2 N^{-\alpha}(K^*)^{\alpha-1}\Bigr] \\
    &= \frac{\alpha C_2}{K^* N^\alpha}(K^*)^{\alpha-1}
       \bigl[(\beta{+}1)+(\alpha{-}1)\bigr] \\
     &= \frac{\alpha C_2}{N^\alpha}(K^*)^{\alpha-2}(\alpha+\beta) > 0.
\end{align}
Thus $h$ crosses zero strictly upward at $K^*$, so $K^*$ is a strict local 
minimum; together with coercivity, it is the \emph{unique} global minimizer. 
Substituting \eqref{eq:kstar} into \eqref{eq:reduced_error} yields 
\eqref{eq:min_error}. \hfill$\square$

\subsection{Proof of Theorem~\ref{thm:leo_robustness} (LEO Topology Robustness)}
\label{app:thm_leo}

Let $\Phi_{\mathrm{sat}}^{\tau_i}=\mu/B_{\mathrm{sat}}^{\tau_i}$ denote the 
snapshot-wise backhaul surcharge on 
$[i\Delta\tau,(i+1)\Delta\tau)$. Consider the perturbed HJB value 
$V_{\mathrm{pert}}^{(k)}$ under $\Phi+\Phi_{\mathrm{sat}}$ and the static value 
$V^{(k)}$ under $\Phi$. Define $\delta V^{(k)}:=V_{\mathrm{pert}}^{(k)}-V^{(k)}$.

\textbf{Step 1 (bounded forcing).}
The perturbation enters the Hamiltonian as an additive forcing term of order 
$\beta_2^{(k)}R^{(k)}\Phi_{\mathrm{sat}}^{\tau(t)}$. Under bounded rates and bounded 
snapshot surcharge, standard energy estimates for parabolic HJB equations imply
\[
    \sup_{t\in[0,T]}\|\delta V^{(k)}(\cdot,t)\|_{L^2(\Omega)}
    \le C_k \|\Phi_{\mathrm{sat}}\|_{L^2(0,T)}.
\]

\textbf{Step 2 (snapshot-variation accumulation).}
By assumption, adjacent snapshots satisfy 
$|\Phi_{\mathrm{sat}}^{\tau_i}-\Phi_{\mathrm{sat}}^{\tau_{i+1}}|\le\Delta_\Phi$. 
Summing increments across $M=\lceil T/\Delta\tau\rceil$ windows and applying 
Cauchy-Schwarz yields
\[
    \|\Phi_{\mathrm{sat}}\|_{L^2(0,T)}
    \le \bar C \,\Delta_\Phi \sqrt{T/\Delta\tau}.
\]
Combining with Step 1 gives
\[
    \|\delta V^{(k)}\|_{L^\infty([0,T]\times\Omega)}
    \le \tilde C_k \,\Delta_\Phi \sqrt{T/\Delta\tau}.
\]

\textbf{Step 3 (impact on $\varepsilon$-Nash error).}
The perturbed value gap induces an additive contribution in the finite-player 
$\varepsilon$-Nash bound, so
\[
    \varepsilon_{N,K}^{\mathrm{LEO}}
    \le \varepsilon_{N,K}
    + C_{\mathrm{LEO}}\Delta_\Phi\sqrt{T/\Delta\tau},
\]
with $C_{\mathrm{LEO}}:=\max_k \tilde C_k$, independent of $N,K$. This is 
\eqref{eq:leo_error}. \hfill$\square$

\subsection{Proof of Theorem~\ref{thm:convergence} (Corrected Convergence Condition)}
\label{app:thm_convergence}

The proof extends the Chambolle--Pock convergence analysis 
\cite{chambolle2011first} to the heterogeneous multi-type FPK system.

\textbf{Step 1: Homogeneous baseline ($K=1$).}

For a single vehicle type, the saddle-point problem \eqref{eq:saddle} reduces 
to the standard G-prox PDHG of \cite{kang2023time}. The augmented Lagrangian 
$\mathcal{L}_\rho$ satisfies 
$\|\nabla^2_\rho \mathcal{L}_\rho\|_{\mathrm{op},\ell^2} 
\leq L_{\mathcal{L}}^2$ for some $L_{\mathcal{L}} > 0$ depending on 
$L, T, \|b_0\|$. By Chambolle--Pock \cite{chambolle2011first}, convergence holds 
when $\xi\varsigma < 1/L_{\mathcal{L}}^2$, normalized to $\xi\varsigma < 1$ 
after scaling $(\xi,\varsigma)$ by $L_A = L_{\mathcal{L}}$.

\textbf{Step 2: Cross-type coupling (upper bound on cross-partials).}

With $K$ types, $\Gamma^{(k)}[\boldsymbol{\rho}]$ is $L$-Lipschitz in 
$\boldsymbol{\rho}$ under Assumption~\ref{ass:param}. Applying the chain rule 
to the augmented Lagrangian and using dual regularity 
$\nabla_q\phi^{(k)} \in L^\infty$ yields the pointwise estimate 
\eqref{eq:cross_hess_bound} stated in the main text; it is an upper bound, not 
an identity, so the step-size condition is only sufficient.

\textbf{Step 3: Operator norm bound via block aggregation.}

Decompose the effective operator as 
$\mathbf{A}_{\mathrm{eff}} = \mathbf{A}_{\mathrm{diag}} + \mathbf{A}_{\mathrm{cross}}$, 
where $\mathbf{A}_{\mathrm{diag}}$ is block-diagonal with spectral norm 
at most $L_A$ and $\mathbf{A}_{\mathrm{cross}}$ captures cross-type coupling. 
By the triangle inequality and $(a+b)^2 \leq 2a^2+2b^2$,
\begin{equation}
    \|\mathbf{A}_{\mathrm{eff}}\|_{\mathrm{op}}^2 
    \leq 2L_A^2 + 2\|\mathbf{A}_{\mathrm{cross}}\|_{\mathrm{op}}^2.
\end{equation}
Summing \eqref{eq:cross_hess_bound} over $k \neq k'$ and bounding gradients by 
their supremum over iterations as in \eqref{eq:ch} produces a term of the form 
$L_A^2 C_H H_K(\boldsymbol{\rho})$ up to an absolute constant factor; absorbing 
the factor~$2$ into $C_H$ yields \eqref{eq:op_norm_bound}.

\textbf{Step 4: Sufficient convergence condition.}

Substituting \eqref{eq:op_norm_bound} into \eqref{eq:cp_condition} gives 
the raw sufficient condition before normalization. Rescaling $(\xi,\varsigma)$ by $L_A$ yields 
\eqref{eq:corrected_condition}. When $H_K(\boldsymbol{\rho}) = 0$, we recover 
$\xi\varsigma < 1$ \cite{kang2023time}. \hfill$\square$

\subsection{Complexity Analysis of Wasserstein Computation}
\label{app:wasserstein}

In the 1D queue state space $\Omega = [0, Q_{\max}]$ with $N_q$ discretization points, the Wasserstein-2 distance between two empirical distributions $\rho^{(k)}$ and $\rho^{(k')}$ reduces to the $L^2$ distance between their quantile functions \cite{qiao2025hmfg}:
\begin{equation}
    \begin{aligned}
    W_2^2(\rho^{(k)}, \rho^{(k')}) 
    &= \int_0^1 \left|F_k^{-1}(u) - F_{k'}^{-1}(u)\right|^2 du \\
    &\approx \frac{1}{N_q}\sum_{i=1}^{N_q}\left|q_{(i)}^{(k)} - q_{(i)}^{(k')}\right|^2,
    \end{aligned}
\end{equation}
where $F_k^{-1}$ is the quantile function of $\rho^{(k)}$ and $q_{(i)}^{(k)}$ is the $i$-th order statistic.

\textbf{Sorting cost:} Computing $N_q$ order statistics for each of $K$ types requires $O(K N_q \log N_q)$ operations.

\textbf{Distance computation:} After sorting, computing all $K(K-1)/2$ pairwise distances requires $O(K^2 N_q)$ operations.

\textbf{Total per-iteration cost:} The Wasserstein computation contributes $O(K^2 N_q \log N_q)$ per iteration, which is dominated by the $O(K N_q N_t)$ cost of the FPK/HJB updates (Proposition~\ref{prop:complexity}) since $K \ll N_q$ in practice.

\textbf{Independence from vehicle count $N$:} Crucially, none of the above operations involve $N$ directly. The mean field $\rho^{(k)}$ is maintained as a probability distribution over the $N_q$-point state space grid, not as a sum of $N_k$ empirical delta functions. This confirms that Algorithm~\ref{alg:hmfg_pdhg} achieves computation time \emph{independent of the number of vehicles}, analogous to the homogeneous G-prox PDHG of \cite{kang2023time,wang2024mfg_iscc}. \hfill$\square$

\subsection{Proofs of Propositions and Corollaries}
\label{app:prop_cor_proofs}

\textbf{Corollary~\ref{cor:scaling_2d}.}
\label{app:cor_scaling_2d}
Substitute $\alpha = 1/4$ and $\beta = 1$ into Theorem~\ref{thm:optimal_K}, 
giving $\gamma = (1/4)/(1/4 + 1) = 1/5$ and \eqref{eq:kstar_2d}. 
The logarithmic correction follows from \cite[Theorem~1]{fournier2015rate} 
for $d = 2$, and the minimum error rate follows by substituting into 
\eqref{eq:min_error}. \hfill$\square$

\textbf{Proposition~\ref{prop:het_adjusted}.}
\label{app:prop_het_adjusted}
Substitute $C_1 = \bar{C}_1 H_\infty$ into \eqref{eq:kstar} and simplify. \hfill$\square$

\textbf{Corollary~\ref{cor:rounded}.}
\label{app:cor_rounded}
Since $K^*$ uniquely minimizes the smooth function $\mathcal{E}(N,\cdot)$, Taylor 
expansion at $K^*$ with $|\hat{K}-K^*|\le 1/2$ gives 
$\mathcal{E}(N,\hat{K})-\mathcal{E}(N,K^*) \le \frac{1}{8}\mathcal{E}''(N,K^*)$. 
Because $\mathcal{E}''(N,K^*)$ is polynomial in $N$ while $\mathcal{E}(N,K^*)$ 
decays as \eqref{eq:min_error}, the relative gap is $o(1)$. \hfill$\square$

\textbf{Corollary~\ref{cor:unbalanced_explicit}.}
\label{app:cor_unbalanced_explicit}
Replace $N$ by $\lambda_{\min}N$ in the sample term of \eqref{eq:reduced_error} 
(governed by the smallest class $n_{N,K}^{\min}=\lambda_{\min}N$) and apply 
Theorem~\ref{thm:optimal_K} to the resulting two-term functional. \hfill$\square$

\textbf{Corollary~\ref{cor:leo_order_optimal}.}
\label{app:cor_leo_order_optimal}
The static optimal rate is $\Theta(N^{-\alpha\beta/(\alpha+\beta)})$ by 
\eqref{eq:min_error}. Under \eqref{eq:leo_condition} the additive LEO perturbation 
in \eqref{eq:leo_error} is asymptotically subdominant, so the leading-order 
minimizer scaling is unchanged. \hfill$\square$

\textbf{Proposition~\ref{prop:complexity}.}
\label{app:prop_complexity}
Each iteration requires $K$ parallel FPK/HJB updates of cost $O(N_q N_t)$ plus 
$K(K-1)/2$ Wasserstein distances. In 1D, $W_2$ reduces to an $L^2$ quantile 
distance computable in $O(N_q\log N_q)$, giving total per-iteration cost 
$O(K^2 N_q N_t)$, independent of $N$. \hfill$\square$

\end{document}